\newif\iftr 
\def\BibTeX{{\rm B\kern-.05em{\sc i\kern-.025em b}\kern-.08em
		T\kern-.1667em\lower.7ex\hbox{E}\kern-.125emX}}
\pgfplotsset{compat=1.13}
\newtheorem{lemma}{Lemma}
\newtheorem{theorem}{Theorem}
\newtheorem{sub-goal}{Sub-goal}
\newcommand{\wmax}{w_{\textit{max}}}
\newcommand{\set}[1]{\left\{{#1}\right\}}
\newcommand{\lmin}{l^{\textit{min}}}
\newcommand{\lmax}{l^{\textit{max}}}
\newcommand{\isdef}{\ensuremath{\overset{\textit{def}}{=}}}
\newcommand{\sref}[1]{Section~\ref{#1}}
\newcommand{\thref}[1]{Theorem~\ref{#1}}
\renewcommand{\wmax}{w_{\scriptsize\textrm{max}}}
\renewcommand{\lmin}{l^{\scriptsize\textrm{min}}}
\renewcommand{\lmax}{l^{\scriptsize\textrm{max}}}
\newcommand{\tot}{\textrm{tot}}
\renewcommand{\isdef}{\ensuremath{\overset{\scriptsize\textit{def}}{=}}}
\newcommand{\lp}{\ensuremath{\left(}}
\newcommand{\rp}{\ensuremath{\right)}}
\newcommand{\lb}{\ensuremath{\left[}}
\newcommand{\rb}{\ensuremath{\right]}}
\newcommand{\lc}{\ensuremath{\left\{}}
\newcommand{\rc}{\ensuremath{\right\}}}
\newcommand{\mif}{\mbox{if }}
\newcommand{\mthen}{\mbox{ then }}
\newcommand{\melse}{\mbox{ else }}
\newcommand{\mand}{\mbox{ and }}
\newcommand{\mwith}{\mbox{ with }}
\newcommand{\motherwise}{\mbox{ otherwise}}
\newcommand{\mymod}{\hspace{-2mm}\mod}
\newcommand{\Nats}{\mathbb{N}}
\newcommand{\Reals}{\mathbb{R}}
\begin{document}
	\pagenumbering{arabic}
	\pagestyle{plain}
	\title{Interleaved Weighted Round-Robin: A Network Calculus Analysis}

	\author{\IEEEauthorblockN{Seyed Mohammadhossein Tabatabaee}
		\IEEEauthorblockA{\textit{EPFL}\\
			Lausanne, Switzerland \\
			hossein.tabatabaee@epfl.ch}
		\and
		\IEEEauthorblockN{Jean-Yves Le Boudec}
		\IEEEauthorblockA{\textit{EPFL}\\
			Lausanne, Switzerland \\
                        jean-yves.leboudec@epfl.ch}
                \and
		\IEEEauthorblockN{Marc Boyer}
		\IEEEauthorblockA{\textit{ONERA/DTIS, University of Toulouse} \\
			F-31055 Toulouse, France\\
			Marc.Boyer@onera.fr}
	}
	

	\maketitle

	\begin{abstract}
    Weighted Round-Robin (WRR) is often used, due to its simplicity, for scheduling packets or tasks. With WRR, a number of packets equal to the weight allocated to a flow can be served consecutively, which leads to a bursty service. Interleaved Weighted Round-Robin (IWRR) is a variant that mitigates this effect. We are interested in finding bounds on worst-case delay obtained with IWRR. To this end, we use a network calculus approach and find a strict service curve for IWRR. The result is obtained using the pseudo-inverse of a function. We show that the strict service curve is the best obtainable one, and that delay bounds derived from it are tight (i.e., worst-case) for flows of packets of constant size. Furthermore, the IWRR strict service curve dominates the strict service curve for WRR that was previously published. We provide some numerical examples to illustrate the reduction in worst-case delays caused by IWRR compared to WRR.

\end{abstract} 
	
	\IEEEpeerreviewmaketitle
	\setcounter{page}{1}

	\section{Introduction}
\label{sec:intro}
Weighted Round-Robin (WRR) is a scheduling algorithm that is often used for scheduling tasks, or packets, in real-time systems or communication networks. 
The capacity is shared between several clients or queues by giving each of them a weight, which is a positive integer, and by providing more service to those with larger weights. Specifically,
%
every queue is visited one after the other, and when a queue $i$ with weight $w_i$ has an emission opportunity, it sends $w_i$ packets, or less if fewer packets are present.
The advantage of WRR is that it is fair and simple. However, the service is bursty because up to $w_i$ packets can be served consecutively for queue $i$, which can cause a large worst-case waiting time for other queues. Interleaved Weighted Round-Robin (IWRR) mitigates
this effect \cite{WRR-ATM}. With IWRR, a queue~$i$ with weight $w_i$ has $w_i$ emission opportunities per round and can send up to one packet at every emission opportunity. In contrast, with WRR, it has one emission opportunity  per round and can send up to $w_i$ packets at every emission opportunity.
 Hence, IWRR spreads out emission opportunities of each queue in a round, which is expected to result in a smoother service and lower worst-case delays. There exist several versions of IWRR; we focus on the simplest one, where queue~$i$ has emission opportunities in the first $w_i$ cycles within a round (see \sref{sec:alg} for a formal description of IWRR and \sref{sec:SOTA:WRR-hist} for WRR variants).

We are interested in delay bounds for the worst case, as is typical in the context of deterministic networking. To this end, a standard approach is network calculus. Specifically, with network calculus, the service offered to a flow of interest by a system is abstracted by means of a service curve. A bound on the worst-case delay is obtained by combining the service curve with an arrival curve for the flow of interest. An arrival curve is a constraint on the amount of data that the flow of interest can send; such a constraint is necessary to the existence of a finite delay bound.
The exact definitions 
are recalled in Section \ref{sec:backg}.

The network calculus approach was applied to WRR in \cite[Sec.~8.2.4]{bouillard_deterministic_2018}, where a \emph{strict} service curve is obtained. As explained in \sref{sec:backg:NC}, a strict service curve is a special case of a service curve hence can be used to derive delay (and backlog) bounds. Our first contribution is to obtain a strict service curve for IWRR. Compared to WRR, the interleaving in IWRR makes the analysis more difficult, 
and the method of proof in \cite{bouillard_deterministic_2018} cannot easily be extended. To circumvent this difficulty, we rely heavily on  
the method of pseudo-inverse, recalled in \sref{sec:backg:NC}. As expected, the IWRR strict service curve dominates that of WRR, hence the resulting delay bounds for IWRR are always less than or equal to those for WRR. 

The strict service curve enables us to obtain delay bounds by using network calculus, but such bounds might not always be tight, i.e., they might not always be equal to worst-cases. This is because the strict service curve is an abstraction of the system. Our second contribution is to show that, for flows with packets of constant sizes, the strict service curve obtained for IWRR provides tight delay bounds. We show that the same result holds for the existing strict service curve of WRR. 
Extending such results to flows with packets of variable sizes is left for further study.

The strict service curve obtained for IWRR has some description complexity, see also Fig.~\ref{fig:exp}. Therefore, we provide simplified lower bounds that can be used, at the expense of sub-optimality, when analytic, closed-form expressions are important.

After giving some necessary background on network calculus and the lower-pseudo inverse technique in Section~\ref{sec:backg}, we describe our system model in \sref{sec:alg}. We describe the state of the art in \sref{sec:soa}. In Section~\ref{sec:Results}, we present our strict service curve for IWRR, the proof of which we present in Section~\ref{sec:mproofs}. In Section~\ref{sec:tightness}, we show that both the IWRR and WRR strict service curves are the best possible and that they give tight delay bounds for a flow with constant packet sizes. We use numerical examples to illustrate the worst-case latency improvement of IWRR over WRR obtained with our method in \sref{sec:numerics}.
\iftr
Proofs of results other than \thref{Thm:1} are in Appendix.
\else
Due to lack of space, proofs of results other than \thref{Thm:1} are in \cite{tabatabaee2020interleaved}.
\fi

	\section{Background}
\label{sec:backg}
\label{sec:backg:NC}
We use the framework of network calculus \cite{le_boudec_network_2001, Changbook,bouillard_deterministic_2018}. A flow is represented by a cumulative arrival function $R\in \mathscr{F}$, where $\mathscr{F}$ denotes the set of wide-sense increasing functions
$f:\mathbb{R}^+ \mapsto \mathbb{R^+} \cup \{+\infty\}$ and $R(t)$ is
the number of bits observed on the flow between times $0$ and $t$. We
say that a flow $R$ has $\alpha\in \mathscr{F}$ as arrival curve if
for all $s \leq t$, $R(t) - R(s) \leq \alpha(t-s)$. A frequently used
arrival curve is $\alpha=\gamma_{r,b}$, defined by $\gamma_{r,b}(t) =
rt+b$ for $t>0$ and $\gamma_{r,b}(t)=0$ for $t= 0$ (token
bucket 
arrival curve, with rate $r$ and burst $b$). An arrival curve $\alpha$
can always be assumed to be sub-additive, i.e., to satisfy $\alpha(s+t)\leq \alpha(s)+\alpha(t)$ for all $s,t$.

For two functions $f$ and $g$ in $\mathscr{F}$, the min-plus convolution is defined by
$
		(f \otimes g)(t) = \inf_{0 \leq s \leq t} \{ f(t-s) + g(s)\}
$.
	An example of min-plus convolution used in this paper is illustrated in Fig.~\ref{fig:minplus}.
	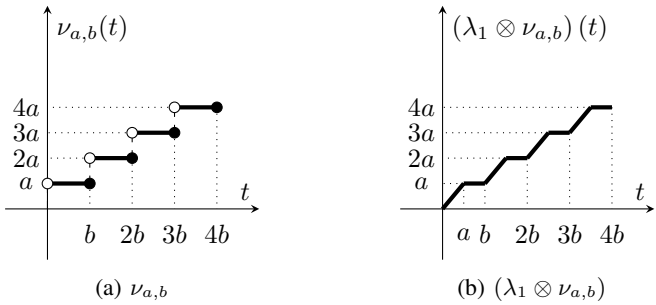
\begin{figure}[htbp]
	\begin{subfigure}[b]{0.4\columnwidth}
		\centering
    		\begin{tikzpicture}

		\pgfplotsset{soldot/.style={color=black,only marks,mark=*}} 				   		   		 \pgfplotsset{holdot/.style={color=black,fill=white,only marks,mark=*}}
		\pgfplotsset{ticks=none}
		\begin{axis}[xlabel=$t$, ylabel=$\nu_{a,b}(t)$,
		xmin=-2,xmax= 10,ymin=-2,ymax=8, axis lines=center, width=1.4\textwidth,
		height=1.4\textwidth,]
		
		\addplot[soldot] coordinates{(2,1)(4,2)(6,3)(8,4)};
		\addplot[holdot] coordinates{(0,1)(2,2) (4,3) (6,4)} ;
		
		\draw[dashed] (axis cs:0,0) -- (axis cs:0,1);
		\draw[ultra thick] (axis cs:0,1) -- (axis cs:2,1);
		\draw[dotted] (axis cs:2,1) -- (axis cs:2,0);

		\draw[dashed] (axis cs:2,1) -- (axis cs:2,2);
		\draw[ultra thick] (axis cs:2,2) -- (axis cs:4,2);
		\draw[dotted] (axis cs:4,2) -- (axis cs:4,0);
		\draw[dotted] (axis cs:2,2) -- (axis cs:0,2);
		
		\draw[dashed] (axis cs:4,2) -- (axis cs:4,3);
		\draw[ultra thick] (axis cs:4,3) -- (axis cs:6,3);
		\draw[dotted] (axis cs:6,3) -- (axis cs:6,0);
		\draw[dotted] (axis cs:4,3) -- (axis cs:0,3);
		
		\draw[dashed] (axis cs:6,3) -- (axis cs:6,4);
		\draw[ultra thick] (axis cs:6,4) -- (axis cs:8,4);
		\draw[dotted] (axis cs:8,4) -- (axis cs:8,0);
		\draw[dotted] (axis cs:6,4) -- (axis cs:0,4);

		\draw (2,-1) node{$b$};
		\draw (4,-1) node{$2b$};
		\draw (6,-1) node{$3b$};
		\draw (8,-1) node{$4b$};

		\draw (-1,1) node{$a$};
		\draw (-1,2) node{$2a$};
		\draw (-1,3) node{$3a$};
		\draw (-1,4) node{$4a$};

		\end{axis}
		\end{tikzpicture}
		\caption{$\nu_{a,b}$}
	\end{subfigure}
\hfill
	\begin{subfigure}[b]{0.4\columnwidth}
		\centering
		\begin{tikzpicture}

		\pgfplotsset{soldot/.style={color=black,only marks,mark=*}} 				   		   \pgfplotsset{holdot/.style={color=black,fill=white,only marks,mark=*}}
		\pgfplotsset{ticks=none}
		\begin{axis}[xlabel=$t$, ylabel=$\left(\lambda_1 \otimes \nu_{a,b}\right)(t) $,
		xmin=-2,xmax=10,ymin=-2,ymax=8, axis lines=center, width=1.4\textwidth,
		height=1.4\textwidth,]

\draw[ultra thick] (axis cs:0,0) -- (axis cs:1,1);
\draw[ultra thick] (axis cs:1,1) -- (axis cs:2,1);
\draw[dotted] (axis cs:1,1) -- (axis cs:0,1);
\draw[dotted] (axis cs:1,1) -- (axis cs:1,0);
\draw[dotted] (axis cs:2,1) -- (axis cs:2,0);

\draw[ultra thick] (axis cs:2,1) -- (axis cs:3,2);
\draw[ultra thick] (axis cs:3,2) -- (axis cs:4,2);
		\draw[dotted] (axis cs:4,2) -- (axis cs:4,0);
\draw[dotted] (axis cs:3,2) -- (axis cs:0,2);

\draw[ultra thick] (axis cs:4,2) -- (axis cs:5,3);
\draw[ultra thick] (axis cs:5,3) -- (axis cs:6,3);
		\draw[dotted] (axis cs:6,3) -- (axis cs:6,0);
\draw[dotted] (axis cs:5,3) -- (axis cs:0,3);

\draw[ultra thick] (axis cs:6,3) -- (axis cs:7,4);
\draw[ultra thick] (axis cs:7,4) -- (axis cs:8,4);
		\draw[dotted] (axis cs:8,4) -- (axis cs:8,0);
\draw[dotted] (axis cs:7,4) -- (axis cs:0,4);
		
		\draw (1,-1) node{$a$};
\draw (2,-1) node{$b$};
\draw (4,-1) node{$2b$};
\draw (6,-1) node{$3b$};
\draw (8,-1) node{$4b$};

\draw (-1,1) node{$a$};
\draw (-1,2) node{$2a$};
\draw (-1,3) node{$3a$};
\draw (-1,4) node{$4a$};

		\end{axis}
		\end{tikzpicture}
				\caption{$(\lambda_1 \otimes \nu_{a,b})$}
	\end{subfigure}
	\caption{\sffamily \small Left: the stair function $\nu_{a,b}\in\mathscr{F}$ defined for $t\geq 0$ by $\nu_{a,b}(t)=a\left\lceil \frac{t}{b}\right\rceil$. Right: min-plus convolution of $\nu_{a,b}$ with the function $\lambda_1\in\mathscr{F}$ defined by $\lambda_1(t)=t$ for $t\geq0$, when $a\leq b$. The discontinuities are smoothed, and replaced with a unit slope.}
	\label{fig:minplus}
\end{figure}

Consider a system $S$ and a flow through $S$ with input and output functions $R$ and $R^*$ and let $\beta\in\mathscr{F}$. We say that the system $S$ offers $\beta$ as a service curve to the flow if $R^*\geq R\otimes \beta$, which often means that for every $t\geq 0$ there exists some $s\leq t$ such that
$R^*(t)\geq R(s)+\beta(t-s)$ \cite[Sec.~3.2.2]{bouillard_deterministic_2018}.
We say that system $S$ offers a \emph{strict} service curve $\beta\in\mathscr{F}$ to the flow if $R^*(t) - R^*(s) \geq \beta(t-s)$ whenever $(s,t]$ is a backlogged period (i.e., $R^*(\tau)>R(\tau)$ for all $\tau$ such that $s<\tau\leq t$). If $\beta$ is a strict service curve, then it is a service curve, but the converse is not always true
\cite[Section 1.3]{le_boudec_network_2001}. A frequently used service curve
is the rate-latency function $\beta_{r,T}$ that is the function in $\mathscr{F}$ defined by $
		\beta_{r,T}(t) = r[t-T]^+$, where we use the notation $[x]^+=\max\set{x,0}$. Saying that a system offers a service curve $\beta_{r,T}$ to a flow expresses that the flow is guaranteed a service rate $r$, except for possible interruptions that might impact the delay by at most $T$. Saying that a system offers a \emph{strict} service curve $\beta_{r,T}$ to a flow expresses that the flow is guaranteed a service rate $r$, except for possible interruptions that might not exceed $T$ in total per backlogged period. A strict service curve $\beta$ can always be assumed to be super-additive, i.e., to satisfy $\beta(s+t)\geq \beta(s)+\beta(t)$ for all $s,t$ (otherwise, it can be replaced by its super-additive closure \cite[Prop. 5.6]{bouillard_deterministic_2018}).

Assume that a flow, constrained by arrival curve $\alpha$,
traverses a system that offers a service curve $\beta$ to the flow
and that respects the ordering of the flow (FIFO per-flow). The delay of the flow is upper bounded by $h(\alpha,\beta)$ (horizontal deviation), defined by
	\begin{equation}\label{eq:horiz}
		h(\alpha,\beta) = \sup_{t \geq 0} \{ \inf \{ d \geq 0 | \alpha(t) \leq \beta(t + d)\}\}
	\end{equation}

Our technique of proof uses the lower pseudo-inverse. The lower pseudo-inverse $f^{\downarrow}$ of a function $f \in \mathscr{F}$ is defined by
\begin{equation}\label{def:lsi}
	f^{\downarrow}(y) = \inf \{x | f(x) \geq y \} = \sup \{ x | f(x) < y \}
	\end{equation}
We use the following property from \cite[Sec. 10.1]{liebeherr2017duality}:
	\begin{equation} \label{lem:lsi}
\forall x,y \in \mathbb{R}^+, y \leq f(x) \Rightarrow x \geq f^{\downarrow}(y)
\end{equation}
%

%

    \section{System Model} \label{sec:alg}
We consider a weighted round-robin subsystem that serves $n$ input flows, has one queue per flow, and uses a weighted round-robin algorithm (described later) to arbitrate between flows. The weighted round-robin subsystem is itself placed in a larger system, and can compete with other queuing subsystems.
For example, consider the case of a constant-rate server with several priority levels, without preemption, and where the weighted round-robin subsystem is at a priority level that is not the highest. Assuming some arrival curve constraints for the higher priority traffic, the service received by the entire weighted round-robin subsystem can be modelled using a strict service curve \cite[Section 8.3.2]{bouillard_deterministic_2018}.

%

This motivates us to assume that the aggregate of all flows in the weighted round-robin subsystem receives a strict service curve, say $\beta\in \mathscr{F}$ that we call ``aggregate strict service curve". If the weighted round-robin subsystem has exclusive access to a transmission line of rate $c$, then $\beta(t)=ct$ for $t\geq 0$.
We assume that $\beta(t)$ is finite for every (finite) $t$ and, without loss of generality, we assume $\beta$ to be super-additive. Furthermore, we need an additional technical assumption, primarily for establishing the tightness result: we assume that $\beta$ is Lipschitz-continuous, i.e., there exists a constant $K>0$ such that $\frac{\beta(t)-\beta(s)}{t-s}\leq K$ for all $0\leq s<t$; this does not appear to be a restriction as the rate at which data is served has a physical limit.

%


The arbitration algorithm assumed in this paper is IWRR, shown in Algorithm \ref{alg:IWRR}. When a packet of flow $i$ enters the weighted round-robin subsystem, it is put into queue $i$. The weight of flow $i$ is 
$w_i$. IWRR runs an infinite loop of \emph{rounds}. In one round, each queue $i$ has $w_i$ emission opportunities; one packet can be sent during one emission opportunity. The inner loop defines a \emph{cycle}, where each queue is visited but only those with a weight not smaller than the cycle number have an emission opportunity. The \texttt{send} instruction is assumed to be the only one with a non-null duration. Its actual duration depends on the packet size but also on the amount of service available to the entire weighted round-robin subsystem. See Figure~\ref{fig:Scheduling532} for an illustration.

\begin{figure}[htbp]
  \centering
  \includegraphics[width=\linewidth]{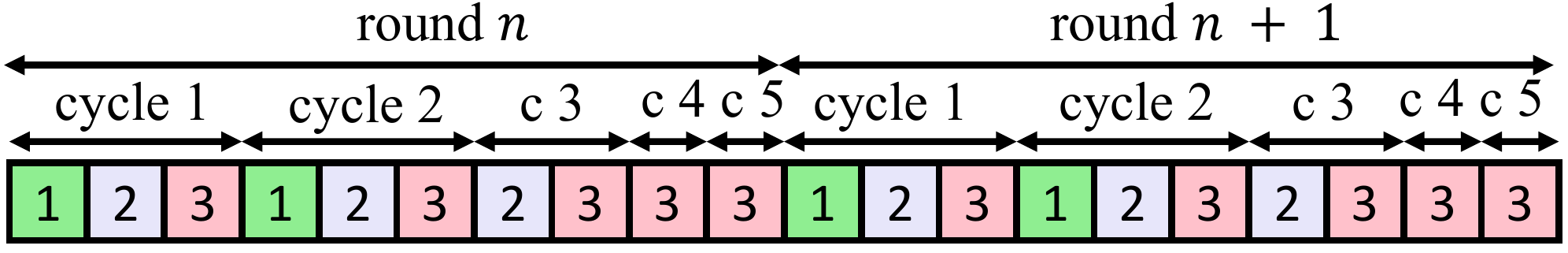}
  \caption{\sffamily \small Emission opportunities on two successive rounds for IWRR with three flows and $w_1=2, w_2=3, w_3=5$. Mind that
this is not the temporal behaviour: each opportunity can lead to an
empty interval if the queue is empty at this time. Furthermore, the duration of
each non-empty interval depends on the packet size and the aggregate service available (we do not assume constant rate service).}
  \label{fig:Scheduling532}
\end{figure}


%

\begin{algorithm}[htbp]
\textbf{Input:}{ Integer weights $w_1 \leq w_2 \leq .. \leq w_n$}
\caption{Interleaved Weighted Round-Robin}
\begin{algorithmic}[1]
\State $w_{\max} = \max\{w_1,..,w_n\}$
\While{True}
\Comment{A round starts.}
	\For{$C\leftarrow 1$ to $w_{\max}$}
\Comment{A cycle starts.}
		\For{$i \leftarrow 1 $ to $n$}
			\If{$C \leq w_i$}
				\If{(\textbf{not} empty($i$))}\\
\Comment{A service for queue $i$.}
					\State \texttt{print}(\text{now},$i$);\label{alg:service}
					\State \texttt{send}(head($i$));
					\State\texttt{removeHead(}$i$);
				\EndIf
			\EndIf
		\EndFor
	\EndFor \Comment{A cycle finishes.}
\EndWhile \Comment{A round finishes.}
\end{algorithmic}
\label{alg:IWRR}
\end{algorithm}

Here, we use the context of communication networks, but the results equally apply to real-time systems: Simply map flow to task, packet to job, packet size to
job execution time and strict service curve to ``delivery curve" \cite{4617308,858698}.


    \section{State of the art}
\label{sec:soa}

\label{sec:SOTA:WRR-hist}

One of the first use of round-robin scheduling in the network context
appeared in \cite{RR-86}, with a fairness objective, i.e., a fair way to
share the bandwidth between sessions.  It is also mentioned in
\cite{FQ-Nagle} as a way to implement ``fair queueing''.

The term ``Weighed Round-Robin'' was coined in \cite{WRR-ATM}
as a generalisation of round-robin to share the bandwidth ``in
proportion to prescripted weights'' in the context of ATM (i.e., with
constant size packets). Two versions of the algorithm are presented in \cite{WRR-ATM}. The former is presented in Algorithm~\ref{alg:IWRR}:
at cycle $C$ (with $C$ between $1$ and $w_{\max}$), only
flows with weight $w_i \geq C$ can emit one packet. We call this version IWRR.
The latter version assumes that there exists for each flow $i$ a bit-list of length
$\wmax$, $o_i \in \set{0,1}^{\wmax}$, such that
$w_i=\sum_{k=1}^{\wmax}o_i[k]$.  A flow $i$ can emit a packet at cycle
$C$ only if $o_i[C]=1$. A strategy is given to build these vectors
in \cite{WRR-ATM} and is refined with fairness objectives in \cite{WRR-SlotShaping}.
Call LIWRR (list-based IWRR) this version.

IWRR is modified into WRR/SB in \cite{WRR-SB} to enable some flow to
send slightly more packets than permitted in a cycle, and to decrease
accordingly at the next cycle.

As mentioned in \sref{sec:intro}, plain WRR (which we simply call ``WRR")
enables each flow $i$ to send up to $w_i$ packets every time it is
selected \cite{Multiclass-RR-Conf}. A ``Multiclass WRR'' is also defined in
\cite{Multiclass-RR-Conf}. Surprisingly, the authors of
\cite{Multiclass-RR-Conf} were not aware of \cite{WRR-ATM} and have re-invented LIWRR.
Note that even if WRR was designed for packets of constant size, it
has been applied in network of variable size packets such as Ethernet
\cite[Sec.~8.6, Sec.~8.6.8.3, Sec.~37]{802.1Q-2018}, in request balancing in
cloud infrastructures \cite{HoneyBee-CloudLoadBalancing}, in the
LinuxVirtualServer scheduling \cite{LVS-WRR}, in network of chip
\cite{NC-Wormhole-WRR-2009}, and so on. In fact, looking
for expression ``weighted round-robin'' in the title or abstracts of
papers index by Scopus returns  more than 400 entries (March 2020),
and Google references more than 4000 patents with this expression
(March 2020).
%
%
%
Unfortunately, when authors refer to WRR, they often do not
explicit which version of WRR it is. 

\label{sec:SOTA:WRR-WCD}

A WRR server is also a latency-rate server, with latency
and rates given in \cite{LR-TON-98} for 
packets of constant size. The latency result is
generalised to LIWRR in \cite{LIWRR-Latency}. Even if the notion of
latency-rate server is very close to the one of a service curve
$\beta_{r,T}$ in network calculus, both notions are slightly different,
and results cannot be directly imported from one theory to the other
\cite{LR-GR}. 
In \cite{NC-Wormhole-WRR-2009}, the authors consider a Network on Chip (NoC), with WRR
arbitration at the flit level. A flit is the elementary data
  unit of the NoC, one flit is sent per CPU/NoC cycle. Assuming that the weights are such that packets are never fragmented by the arbiter,
a strict service curve $\beta_{R_i,T_i}$ for flow $i$ is
found, with
$R_i=\frac{w_i}{\sum_{k}w_k}$, $T_i=\sum_{j\neq i}w_j$.

WRR arbitration in an Ethernet switch is also considered
in \cite{NC-EthSw-WRR}, with the assumption that all flows of an
output ports have the same constant packet size. It then computes, in
the network calculus framework, a residual service with service curve
$\beta_{R_i,T_i}$ with $R_i=\frac{w_i}{\sum_{k}w_k}C$,
$T_i=\frac{\sum_{j\neq i}w_j}{C}$, where $C$ is the link rate. We
assume that the missing packet size in the $T_i$ term was a typo.
This network calculus result on conventional WRR arbitration in
Ethernet is refined in \cite{WRR-NC-Avionic}, considering packets of
variable size, leading to  residual service with strict service curve
$\beta_{R_i,T_i}$ with $R_i =\frac{w_i\lmin_i}{w_i\lmin_i + \sum_{j\neq i}w_j\lmax_j}C$ and $T_i=\frac{\sum_{j\neq i}w_j\lmax_j}{C}$
(cf. eq.~(1) and (2) in
\cite{WRR-NC-Avionic}) where $\lmin_i, \lmax_i$ are, respectively, lower
and upper bounds on the size of the packets in the flow $i$. It refines this result by subtracting the part of the bandwidth
not used by interfering flows (considering their arrival curves).

Observe that computing a residual service with a $\beta_{R,T}$ curve is
pessimistic as it assumes that, once the worst latency is payed,
each packet is served with the long-term residual rate. Whereas, in
reality, each packet, when it is selected for emission, is transmitted
at full link speed up to completion. A residual
service for the conventional WRR with a curve that is an alternation of full
services and plateaus is given in
\cite[Sec.~8.2.4]{bouillard_deterministic_2018}.
This effect of ``full speed up to completion'' can also be captured
when computing the local delay of a server with $\beta_{R,T}$ service
curve \cite{NC-Packet-Delay-Ratency}.


\section{Strict Service Curves for IWRR} \label{sec:Results}
Our first result is a strict service curve for IWRR that, as we show in \sref{sec:tightness}, is the best possible. We compare it to WRR and also give simpler, lower approximations.
\begin{theorem}[Strict Service Curve of IWRR]\label{Thm:1}
Let $S$ be a server shared by $n$ flows that uses IWRR as explained in \sref{sec:alg}, with weight $w_i$ for flow $i$. Recall that the server offers a 
strict service curve $\beta$ to the aggregate of the $n$ flows. For any flow $i$, $\lmin_i$ [resp.$\lmax_i$] is a lower [resp. upper] bound on the packet size.

Then, $S$ offers to every flow $i$ a strict service curve $\beta_i$ given by $\beta_i(t)=\gamma_i(\beta(t))$ with
\begin{align}
\label{eq:gamma}
\gamma_i  &= \lambda_1 \otimes U_i
\\
U_i(x) &\isdef \sum_{k=0}^{w_i - 1}\nu_{\lmin_i,L_{\tot}}\left(\lb x - \psi_i(k\lmin_i)\rb^+\right)
\label{eq:u}
\\
\label{eq:Ltot}
L_{\tot} &=w_i\lmin_i + \sum_{j,j \neq i} w_j\lmax_j
\\
\label{eq:psi}
\psi_i(x) &\isdef x + \sum_{j,j \neq i} \phi_{i,j}\left(\left\lfloor \frac{x}{\lmin_i} \right\rfloor\right)\lmax_j
\\
\label{eq:phi}
\phi_{i,j}(x) &\isdef \left\lfloor \frac{x}{w_i} \right\rfloor w_j +  \left[w_j - w_i\right]^+ \nonumber\\&+ \min(x \mymod  w_i+ 1,  w_j)
\end{align}
%
%
In the above, $\nu_{a,b}$ is the stair function, $\lambda_1$ is the unit rate function and $\otimes$ is the min-plus convolution, all are described in Fig.~\ref{fig:minplus}.

Furthermore, $\beta_i$ is super-additive.
\end{theorem}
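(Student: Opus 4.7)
The plan is to reduce the time-domain statement to a relationship between cumulative flow-$i$ service and cumulative aggregate service, and then compose with the aggregate strict service curve $\beta$.

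I would fix a flow-$i$-backlogged period $(s,t]$. Because flow $i$'s backlog is part of the aggregate's, the aggregate is also backlogged on $(s,t]$ and therefore receives at least $\beta(t-s)$ bits. Write $Y$ and $X_i$ for the aggregate and flow-$i$ service, respectively, over $(s,t]$. The theorem reduces to a \emph{key lemma}: $X_i \geq \gamma_i(Y)$ during any flow-$i$-backlogged period. Once this is established, monotonicity of $\gamma_i$ together with $Y \geq \beta(t-s)$ yields $X_i \geq \gamma_i(\beta(t-s)) = \beta_i(t-s)$, so $\beta_i$ is a strict service curve.

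To prove the key lemma, I would interpret $\psi_i(k\lmin_i)$ as the worst-case value of the aggregate service at the moment the $(k+1)$-th packet of flow $i$ is about to start being transmitted inside the backlogged period. The three terms of $\phi_{i,j}(k)$ in \eqref{eq:phi} would then be derived by enumerating the opportunities of flow $j$ that can be inserted in the worst case: fully completed rounds ($\lfloor k/w_i\rfloor w_j$); cycles $w_i+1,\ldots,w_j$ of the starting round in which only flow $j$ emits when $w_j>w_i$ (contributing $[w_j-w_i]^+$); and the $\min(k \bmod w_i + 1, w_j)$ opportunities of flow $j$ in cycles where both flows are active, served adversarially just after $i$ and capped at $w_j$ per round. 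The main obstacle I anticipate is verifying that a \emph{single} adversarial starting configuration simultaneously realizes all these bounds for every $j\neq i$; this requires choosing the cycle in which the backlogged period begins, and the within-cycle visit order of each $j$ relative to $i$, in a mutually compatible way.

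Having obtained $\psi_i$, I would invert via the pseudo-inverse identity \eqref{lem:lsi}: as soon as $Y>\psi_i(k\lmin_i)$, flow $i$ must have begun its $(k+1)$-th packet, and once $Y$ exceeds $\psi_i(k\lmin_i)+\lmin_i$ it must have completed it; stacking this for $k=0,\ldots,w_i-1$ and repeating every $L_\tot$ yields the staircase lower bound $X_i \geq U_i(Y)$ at packet-completion boundaries. The min-plus convolution $\gamma_i=\lambda_1\otimes U_i$ in \eqref{eq:gamma} fills in the progress inside a partially transmitted flow-$i$ packet: while such a packet is being served, every aggregate bit is a flow-$i$ bit, so the $(Y, X_i)$ trajectory has slope $1$ on the rises, which is exactly the effect of replacing each vertical jump of $U_i$ by a slope-$1$ ramp of height $\lmin_i$ as depicted in Fig.~\ref{fig:minplus}. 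This produces $X_i \geq \gamma_i(Y)$ and closes the lemma.

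For the super-additivity of $\beta_i$, I would exploit the exact $L_\tot$-periodicity $U_i(x+L_\tot)=U_i(x)+w_i\lmin_i$ of the staircase for $x$ beyond the initial transient, which transfers to $\gamma_i$ through the convolution with $\lambda_1$. Using this periodicity together with the specific ordering $\psi_i(0)<\psi_i(\lmin_i)<\cdots<\psi_i((w_i-1)\lmin_i)$ of the $w_i$ unit-slope rises within one period, a case analysis on where $s$ and $s+t$ fall relative to the period boundaries yields $\gamma_i(s+t)\geq\gamma_i(s)+\gamma_i(t)$. Combined with super-additivity of $\beta$ and monotonicity of $\gamma_i$, this gives $\beta_i(s+t)=\gamma_i(\beta(s+t))\geq\gamma_i(\beta(s)+\beta(t))\geq\gamma_i(\beta(s))+\gamma_i(\beta(t))=\beta_i(s)+\beta_i(t)$, completing the proof.
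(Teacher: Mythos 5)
Your proposal follows essentially the same route as the paper's proof: bound the interfering flows' emission opportunities per completed flow-$i$ packet to build the map $\psi_i$ from flow-$i$ service to aggregate service, then invert it with the lower pseudo-inverse, the convolution with $\lambda_1$ accounting for the slope-$1$ progress of the $(Y,X_i)$ trajectory while a flow-$i$ packet is in transmission. Two remarks. First, the obstacle you anticipate --- that a single adversarial starting configuration must simultaneously realize the bounds $\phi_{i,j}$ for all $j\neq i$ --- is not actually needed for this theorem: for the strict service curve you only need $\phi_{i,j}(p)$ to upper bound flow $j$'s opportunities for \emph{every} possible starting cycle $C$, and per-$j$ upper bounds can be summed even if they were attained at different $C$'s. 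What does require an argument (the paper's properties (P3)--(P4): the per-cycle counts $q_{i,j}$ are decreasing over a period, so $C=1$ is maximizing) is that your "adversarial" enumeration really dominates all other starting cycles; simultaneous realization at $C=1$ is only used later, in the tightness proof. Second, your super-additivity argument is genuinely different from the paper's: you prove super-additivity of $\gamma_i$ directly from the $L_{\tot}$-periodic structure of $U_i$ and then compose with the super-additive $\beta$, whereas the paper deduces it from the tightness theorem (the super-additive closure of $\beta_i$ is again a strict service curve, hence by tightness cannot exceed $\beta_i$). Your route is more self-contained but requires actually carrying out the case analysis on the rises of $\gamma_i$ (equivalently, sub-additivity of $\psi_i$, which again rests on the decreasing-over-a-period property of $q_{i,j}$); the paper's route is shorter but creates a forward dependency on the tightness result.
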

The proof is in \sref{sec:mproofs}.
See Fig.~\ref{fig:exp} for some illustration of $\beta_i$. Observe that $\gamma_i$ in \eqref{eq:gamma} is the strict service curve obtained when the aggregate strict service curve is $\beta=\lambda_1$ (i.e., when the aggregate is served at a constant, unit rate). In the common case where $\beta$ is equal to a rate-latency function, say $\beta_{c,T}$, we have $\beta_i(t)=\gamma_i(c(t-T))$ for $t\geq T$ and $\beta_i(t)=0$ for $t\leq T$, namely, $\beta_i$ is derived from $\gamma_i$ by a rescaling of the $x$ axis and a right-shift.

As mentioned in \sref{sec:backg}, any strict service curve that is not super-additive can be replaced by its super-additive closure. The last statement in the theorem guarantees that this does not occur here.

%
We now compare to WRR. The best known service curve for (non-interleaved) WRR is given in
\cite[Sec. 8.2.4]{bouillard_deterministic_2018} and is
	\begin{equation}\label{eq:WRR}
\beta_i'(t) =  (\lambda_1 \otimes \nu_{q_i,L_{\tot}})  \left(\lb \beta(t) - Q_i \rb ^+\right)
\end{equation}	
with $q_i =  w_i\lmin_i$ and $Q_i = \sum_{j,j \neq i} w_j\lmax_j$.   In \sref{sec:tightness}, we show that $\beta_i'(t)$ is indeed the best possible strict service curve for WRR. Furthermore, it is dominated by the strict service curve for IWRR:

\begin{theorem} \label{thm:WRRvsIWRR} With the assumptions in Theorem~\ref{Thm:1} and in \eqref{eq:WRR}:
	\begin{equation}
	\beta_i' \leq \beta_i
	\end{equation}	
\end{theorem}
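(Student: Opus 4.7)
The plan is to reduce $\beta_i' \leq \beta_i$ to a pointwise inequality between two ``per-unit-rate'' curves. From Theorem~\ref{Thm:1}, $\beta_i = \gamma_i \circ \beta$, and from \eqref{eq:WRR}, $\beta_i' = \gamma_i' \circ \beta$ where $\gamma_i'(x) \isdef (\lambda_1 \otimes \nu_{q_i, L_{\tot}})([x - Q_i]^+)$. Since $\beta(t) \geq 0$, it suffices to prove $\gamma_i(x) \geq \gamma_i'(x)$ for every $x \geq 0$.

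The structural observation driving the argument is that $\gamma_i'$ is piecewise linear with slopes $0$ and $1$: it vanishes on $[0, Q_i]$ and, on each interval $[(k-1) L_{\tot}, k L_{\tot}]$ (with $k \geq 1$), is constant equal to $(k-1) q_i$ on $[(k-1) L_{\tot}, (k-1) L_{\tot} + Q_i]$ and then rises with slope $1$ to $k q_i$ at $x = k L_{\tot}$. In contrast, $\gamma_i = \lambda_1 \otimes U_i$ is non-decreasing and $1$-Lipschitz, since min-plus convolution with $\lambda_1$ always yields a $1$-Lipschitz function. Consequently, on each flat piece of $\gamma_i'$ the inequality $\gamma_i \geq \gamma_i'$ propagates forward by monotonicity of $\gamma_i$, and on each slope-$1$ piece it propagates backward by the $1$-Lipschitz property of $\gamma_i$. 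Hence the full inequality reduces to the single countable family
\[
	\gamma_i(k L_{\tot}) \geq k q_i, \quad k \geq 1.
\]

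To establish this, unfold the min-plus convolution: $\gamma_i(k L_{\tot}) \geq k q_i$ is equivalent to $U_i(s) \geq s - k Q_i$ for every $s \in [0, k L_{\tot}]$. The function $U_i$ is a left-continuous stair with jumps of size $\lmin_i$; listing its jump positions in increasing order as $p_1 < p_2 < \ldots$ (the order within each round of $w_i$ jumps follows from the monotonicity of $\phi_{i,j}$), one has $p_m = \psi_i(l_0 \lmin_i) + k_0 L_{\tot}$ where $m - 1 = k_0 w_i + l_0$ with $0 \leq l_0 \leq w_i - 1$. On every inter-jump interval $(p_{m-1}, p_m]$, the quantity $s - U_i(s)$ is maximized at the right end, so the inequality reduces to $p_m \leq (m-1) \lmin_i + k Q_i$ for every $m$ such that $p_m \leq k L_{\tot}$, i.e.\ for every $m \leq k w_i$.

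Using $L_{\tot} = q_i + Q_i$ and $q_i = w_i \lmin_i$, this simplifies algebraically to $\sum_{j \neq i} \phi_{i,j}(l_0) \lmax_j \leq (k - k_0) Q_i$. Since $p_m \leq k L_{\tot}$ forces $k_0 \leq k - 1$, it suffices to show the pointwise bound $\phi_{i,j}(l_0) \leq w_j$ for every $j \neq i$ and every $l_0 \in \{0, \ldots, w_i - 1\}$. For such $l_0$ one has $\lfloor l_0/w_i \rfloor = 0$ and $l_0 \bmod w_i = l_0$, hence $\phi_{i,j}(l_0) = (w_j - w_i) + (l_0 + 1) \leq w_j$ when $w_j \geq w_i$ and $\phi_{i,j}(l_0) = \min(l_0 + 1, w_j) \leq w_j$ when $w_j < w_i$. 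I expect the main obstacle to be spotting the reduction to the family $\gamma_i(k L_{\tot}) \geq k q_i$ and identifying the worst-case $s$ as a right endpoint of an inter-jump interval; once these are in place, the remaining algebra and case analysis are routine.
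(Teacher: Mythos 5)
Your proof is correct, but it takes a genuinely different route from the paper's. The paper observes that $\gamma_i'$ is itself a lower pseudo-inverse, namely of the WRR counterpart $\psi_i'$ of $\psi_i$ obtained by replacing $\phi_{i,j}$ with $\phi'_{i,j}(x)=\bigl(1+\lfloor x/w_i\rfloor\bigr)w_j$; it proves the pointwise bound $\phi_{i,j}\leq\phi'_{i,j}$, hence $\psi_i\leq\psi_i'$, and concludes in one line by the order-reversing property of pseudo-inversion ($f\geq g\Rightarrow f^{\downarrow}\leq g^{\downarrow}$). You instead compare the curves directly on the output side: exploiting that $\gamma_i$ is non-decreasing and $1$-Lipschitz while $\gamma_i'$ alternates flats and unit slopes, you reduce to the countable family $\gamma_i(kL_{\tot})\geq kq_i$, unfold the convolution and the jump structure of $U_i$, and land on $\phi_{i,j}(l_0)\leq w_j$ for $0\leq l_0\leq w_i-1$. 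That terminal inequality is exactly the paper's $\phi_{i,j}\leq\phi'_{i,j}$ restricted to one period (your step $k_0\leq k-1$ plays the role of the periodic extension), so the two arguments bottom out at the same combinatorial fact. What the paper's route buys is brevity, since it reuses the identification $\gamma_i=\psi_i^{\downarrow}$ already established in the proof of Theorem~\ref{Thm:1}; what yours buys is independence from that identification, at the price of bookkeeping on the jump positions $p_m$ of $U_i$ (ordering within and across rounds, left-continuity). One small point you should make explicit: after the last jump $p_M$ with $p_M\leq kL_{\tot}$, the quantity $s-U_i(s)$ must still be controlled on the tail interval $(p_M,kL_{\tot}]$; this works because $p_{kw_i}=kL_{\tot}-\lmin_i$, so $M\geq kw_i$ and $U_i(s)\geq kq_i$ there, but it is not covered by the family of conditions indexed by $m$ as you stated it.
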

The proof is in \iftr Appendix. \else \cite{tabatabaee2020interleaved}. \fi Fig.~\ref{fig:exp} illustrates how the strict service curve for IWRR improves on that for WRR, by providing a smoother, and generally larger, service.


\begin{figure*}[htbp]
		\centering
		\title={Illustration of Theorems \ref{Thm:1}, \ref{thm:WRRvsIWRR}, and \ref{theo:sc}}
		 \includegraphics[width=\linewidth]{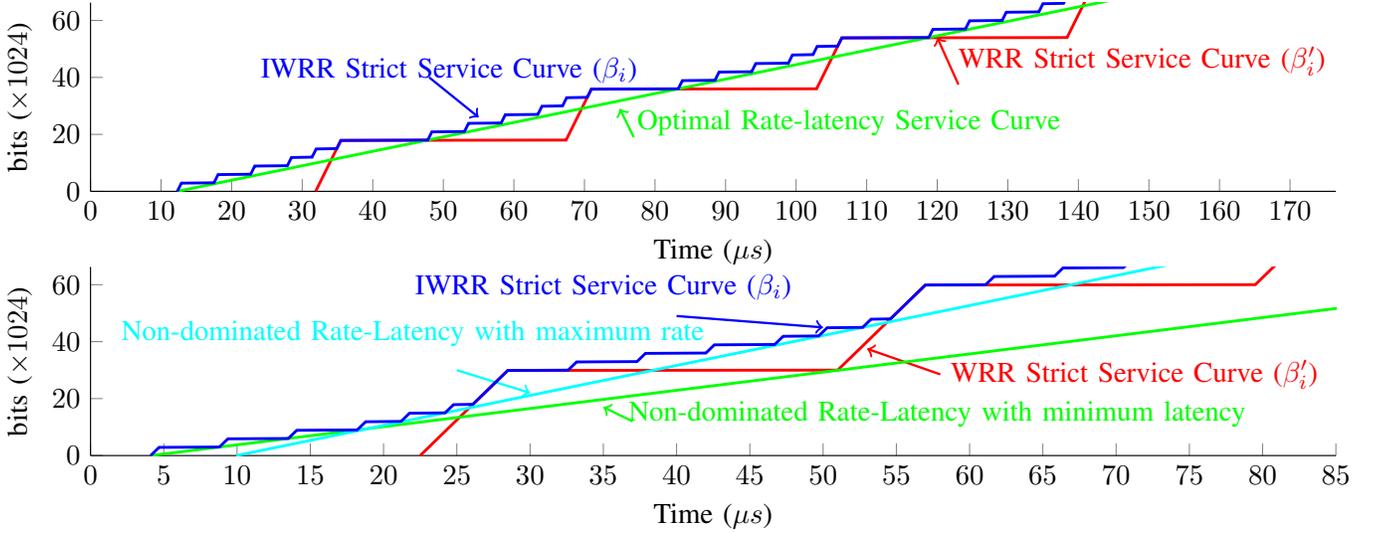}
		\caption{\sffamily \small
Strict service curves obtained in \sref{sec:Results} for an example with four input flows, weights = $\{4,6,7,10\}$, $\lmin = \{4096, 3072, 4608, 3072\} \text{ bits}$, $\lmax = \{8704 , 5632,6656, 8192\} \text{ bits}$ and $ \beta(t)=ct$ with $c=10$Mb/s (i.e., the aggregate of all flows is served at a constant rate). The figure shows the IWRR service curve $\beta_i$ and the WRR strict service curve $\beta_i'$ for two of the flows; it also shows the non-dominated rate-latency strict service curves $\beta_{r_0^*,T_0^*}$ and $\beta_{r_{k^*}^*,T_{k^*}^*}$ of \thref{theo:sc} (in the top panel both are equal).}

		\label{fig:exp}
	\end{figure*}

The service curve found in Theorem \ref{Thm:1} is the best possible one but has a complex expression. If there is interest in a simpler expression, any lower bounding function is a strict service curve; in particular, the strict service curve $\beta_i'$ for WRR is also a valid, though suboptimal, strict service curve for IWRR. There is often interest in service curves that are rate-latency functions. Observe that, if the aggregate service curve $\beta$ is a rate-latency function, then replacing $\gamma_i$ by a rate-latency lower-bounding function also yields a rate-latency function for $\beta_i$, and vice-versa. Therefore, we are interested in rate-latency functions that lower bound $\gamma_i$.

Among all of such these, there is not a single best one, as some have a smaller latency while others have a larger rate. We say that a rate-latency function $\beta_{r,T}$ that lower bounds $\gamma_i$ is non-dominated if there is no other rate latency function $\beta_{r',T'}$ that lower bounds $\gamma_i$ and dominates $\beta_{r,T}$, i.e., such that $r'\geq r$ and $T'\leq T$. The following result gives all such non-dominated rate-latency functions. Let  $r^* = \frac{q_i}{L_{\tot}}=\frac{ w_i\lmin_i}{L_{\tot}}$, $r_{w_i - 1} = 1$, and
\begin{align}
  r_k &= \frac{\lmin_i}{\psi_i((k + 1)\lmin_i) - \psi_i(k\lmin_i)}, \;0\leq k < w_i - 1
  \label{eq:defrk} \\
  k^*&= \min \{0 \leq k < w_i ~ | ~ r_k \geq r^*\}  \label{eq:defk*} \\
  r_k^*&= \min(r_k , r^*),\;  0\leq k\leq k^* 
  \label{eq:defrk*}
  \end{align}  
\begin{theorem} \label{theo:sc} 
With the assumptions in Theorem~\ref{Thm:1} and the definitions  \eqref{eq:defrk}-\eqref{eq:defrk*}, a rate-latency function $\beta_{r,T}$ lower bounds $\gamma_i$ and is non-dominated if and only if $r = r_{k^*}^*$ and $T =\psi_i(k^*\lmin_i) - \frac{k^*\lmin_i}{r}$, or $r_{k-1}^* \leq r < r_k^*$ and $T =  \psi_i(k\lmin_i) - \frac{k\lmin_i}{r}$ for some integer $k$ with $0<k\leq k^*$.
Among all such rate-latency functions, the one with lowest latency is $\beta_{r_0^*,T_0^*}$ and the one with largest rate is $\beta_{r_{k^*}^*,T_{k^*}^*}$.
 \end{theorem}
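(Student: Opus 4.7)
The plan is to reduce the claim to a one-round finite optimization and then exploit a monotonicity property of the slopes $r_k$. From $\gamma_i=\lambda_1\otimes U_i$ and the explicit form of $U_i$ as a sum of shifted stair functions of step $\lmin_i$, I would first show that $\gamma_i$ is piecewise linear with unit-slope ramps of horizontal length $\lmin_i$ starting at each $\psi_i(k\lmin_i)$, separated by plateaus; in particular its lower corners are exactly the points $(\psi_i(k\lmin_i),k\lmin_i)$ for $k\geq 0$. (That consecutive ramps do not overlap, i.e., $\psi_i((k+1)\lmin_i)-\psi_i(k\lmin_i)\geq\lmin_i$, follows from $\phi_{i,j}$ being non-decreasing.) Because every ramp has slope $1\geq r$, checking $\beta_{r,T}\leq\gamma_i$ reduces to evaluating at these lower corners, which gives the family of inequalities
\[
T\;\geq\;\psi_i(k\lmin_i)-\frac{k\lmin_i}{r}, \qquad k\geq 0,
\]
together with the asymptotic constraint $r\leq r^*$ (without which $\beta_{r,T}$ eventually overtakes $\gamma_i$). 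The periodicity $\psi_i((k+w_i)\lmin_i)=\psi_i(k\lmin_i)+L_{\tot}$ combined with $r\leq r^*$ then reduces the supremum over $k\geq 0$ to a maximum over $k\in\{0,\dots,w_i-1\}$.

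The technical heart of the proof is the monotonicity $r_0\leq r_1\leq\dots\leq r_{w_i-1}$; without it the intervals $[r_{k-1}^*,r_k^*)$ in the theorem could fail to be well-ordered and the enumeration would collapse. I would prove this by direct inspection of \eqref{eq:phi}: for each $j\neq i$, the first difference $\phi_{i,j}(k+1)-\phi_{i,j}(k)$ on $k\in\{0,\dots,w_i-2\}$ is constantly $1$ when $w_j\geq w_i$, and equals $1$ for $k<w_j-1$ and $0$ afterwards when $w_j<w_i$. In either case the sequence is non-increasing in $k$, so after taking the $\lmax_j$-weighted sum, $\psi_i((k+1)\lmin_i)-\psi_i(k\lmin_i)$ is non-increasing in $k$ and hence $r_k$ is non-decreasing. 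The boundary convention $r_{w_i-1}=1$ is the largest value any $r_k$ can take, so the extended sequence remains non-decreasing. This is the step I expect to be the main obstacle.

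With monotonicity in hand, set $f_r(k):=\psi_i(k\lmin_i)-k\lmin_i/r$. Its increments $f_r(k+1)-f_r(k)=\lmin_i(1/r_k-1/r)$ switch from non-negative to non-positive exactly at the smallest $k$ with $r_k\geq r$, so $f_r$ is unimodal with argmax $k^{**}(r)=\min\{k:r_k\geq r\}$. Taking $r=r^*$ recovers $k^{**}(r^*)=k^*$; and for $r\in[r_{k-1}^*,r_k^*)$ with $0<k\leq k^*$ one obtains $k^{**}(r)=k$, yielding the claimed expression $T_{\min}(r)=\psi_i(k\lmin_i)-k\lmin_i/r$ on each sub-interval. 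On each such piece $T_{\min}$ is strictly increasing in $r$ (with derivative $k\lmin_i/r^2>0$), and the pieces glue continuously at $r=r_k^*$, so every pair $(r,T_{\min}(r))$ with $r\in[r_0^*,r^*]$ is Pareto-optimal; conversely any rate $r<r_0^*$ yields $T_{\min}(r)=\psi_i(0)$, which is dominated by $(r_0^*,\psi_i(0))$. The final sentence of the theorem---smallest latency at $(r_0^*,T_0^*)$ and largest rate at $(r_{k^*}^*,T_{k^*}^*)$---is then just reading off the two endpoints of this parameterization.
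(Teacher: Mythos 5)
Your proposal is correct, and its mathematical core coincides with the paper's: the decisive step in both is the monotonicity $r_0\leq r_1\leq\dots\leq r_{w_i-1}$, proved by the identical computation on the increments of $\phi_{i,j}$ (step 1 of the paper's proof), and your unimodality argument for $f_r(k)=\psi_i(k\lmin_i)-k\lmin_i/r$ is exactly what the paper's Lemma~\ref{lem:rl1} encodes, since its inequality $g_k-g_\ell\geq a(k-\ell)$ is precisely the statement that $k$ maximizes $\ell\mapsto g_\ell-\ell/a$. The difference is one of domain and of the non-domination argument. The paper works in the inverse domain: it compares the affine function $f_r$ with the piecewise function built from $g_k=\psi_i(k\lmin_i)/\lmin_i$, extends by periodicity (Lemma~\ref{lem:rl3}), and only then passes to $\gamma_i$ via pseudo-inverses (Lemmas~\ref{lem:inverse2} and the order-reversal property); you instead work directly on $\gamma_i$, observing that since all ramps of $\gamma_i$ have slope $1\geq r$, the constraint $\beta_{r,T}\leq\gamma_i$ binds only at the lower corners $(\psi_i(k\lmin_i),k\lmin_i)$, which yields the finite family of latency constraints without any inversion. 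For non-domination, the paper uses a touching-point lemma (Lemma~\ref{lem:rl4}: if $\beta_{r,T}$ meets $\gamma_i$ at some $x_1>T$, nothing can be sandwiched in between), whereas you read Pareto-optimality off the strict monotonicity and continuity of the frontier $r\mapsto T_{\min}(r)$ on $[r_0^*,r_{k^*}^*]$; both are valid, and yours arguably packages the "no other non-dominated function" direction (part 4 of the paper's proof) more economically. The only points to make explicit in a full write-up are (i) that $r\leq r^*\leq 1$ for any lower-bounding rate-latency function, which licenses the corner-point reduction in the "only if" direction as well, and (ii) the tie-breaking when $r=r_{k-1}$ coincides with earlier $r_{k'}$, where the argmax of $f_r$ is not unique but all maximizers give the same $T$; neither is a gap.
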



The proof is in \iftr Appendix and \else \cite{tabatabaee2020interleaved}. \fi  Fig~\ref{fig:exp} illustrates $\beta_{r_0^*,T_0^*}$ and $\beta_{r_{k^*}^*,T_{k^*}^*}$ in some examples. Observe that $k\mapsto r_k^*$ is wide-sense increasing with $k$ for $0\leq k\leq k^*$, but the values of $r_k^*$ are not necessarily all distinct.
 It can also occur that $k^*=0$ (as in the top panel of Fig.~\ref{fig:exp}); 
 in which case, there is one optimal rate-latency service curve. In general, however, this does not occur, and a simple lower bounding approximation can be obtained with the supremum of all non-dominated rate-latencies, which can be shown is equal to $\max\lp\beta_{r_0^*,T_0^*},\ldots, \beta_{r_{k^*}^*,T_{k^*}^*}\rp$. When $\beta$ is a rate-latency function, this provides a convex piecewise linear function that has several good properties \cite[Sec.~4.2]{bouillard_deterministic_2018}. 

%
%
%

\section{Proof of Theorem \ref{Thm:1}} \label{sec:mproofs}
The idea of proof is as follows. We consider a backlogged period $(s,t]$ of flow of interest $i$, and we let $p$ be the number of packets of flow $i$ that are entirely served during this period. For every other flow $j$, the number of packets that are entirely served is upper bounded by a function of $p$, given in Lemma \ref{lem:numg}. $p$ is also upper bounded by a function of the amount of service received by flow $i$ in Lemma \ref{prop:p}. Combining these two results in an implicit inequality for the total amount of service \eqref{eqn:TotService}. By using the technique of pseudo-inverse, this inequality is inverted and provides a lower bound for the amount of service received by the flow of interest.

\subsection{Key Variables and Basic Properties} \label{ssec:t}
 Let $(s,t]$ be a backlogged period of flow $i$. Let
 $(\tau_k,fl_k)$ be couples of (instant,flow), printed at line \ref{alg:service} of Algorithm \ref{alg:IWRR}. 
 Note that 
 $\tau_k<\tau_{k+1}$ as the \texttt{send} instruction has a non-null duration (because the aggregate service curve $\beta$ is Lipschitz continuous). Let $\sigma(0), \sigma(1), \ldots$ be the sequence of service opportunities for flow $i$ at or after $s$, i.e., $\sigma(0)= \min \{ m \;|\; \tau_m \geq s , fl_m = i\}$ and $\sigma(k)= \min\{ m \;|\; \tau_m > \tau_{\sigma(k-1)},fl_m = i\}$.
 The $k$th service opportunity for flow $i$ occurs at time $\tau_{\sigma(k-1)}$; we say that it is ``complete" if $\tau_{\sigma(k-1)+1}\leq t$, i.e., the interval taken by this service is entirely in $[s,t]$. Let $p\geq 0 $ be the number of complete service opportunities. Observe that it is possible that $p=0$, and it might happen that  $\tau_{\sigma(p)}<t$ or $\tau_{\sigma(p)}\geq t$. 

 \begin{center}
\begin{tikzpicture}
\begin{axis}[%
width=4in,
axis x line=center,
axis y line=none,
xmin=0,xmax=20,ymin=0, ymax=1,
xtick={0.1, 2, 4.5, 7, 10, 13, 17, 19},
xticklabels={$s$,$\tau_{\sigma(0) - 1 }$,$\tau_{\sigma(0)}$, $\tau_{\sigma(0) + 1} $,  $\tau_{\sigma(p-1) } $, $\tau_{\sigma(p-1) + 1 } $,  $t$ , $\tau_{\sigma(p)}$}
]

\draw [dotted, <->] (axis cs:4.5,0.05) -- (axis cs:7,0.05) node[pos=0.5, above] {\sffamily \small flow $i$ is served};

\draw [dotted, <->] (axis cs:10,0.05) -- (axis cs:13,0.05) node[pos=0.5, above] {\sffamily \small flow $i$ is served};
\draw [dotted] (axis cs:7.5,0.025) -- (axis cs:9.5,0.025);

\draw [dotted] (axis cs:13.5,0.025) -- (axis cs:16.5,0.025);

\end{axis}
\end{tikzpicture} 
\begin{tikzpicture}
\begin{axis}[%
width=4in,
axis x line=center,
axis y line=none,
xmin=0,xmax=20,ymin=0, ymax=1,
xtick={0.1, 2, 4.5, 7, 10, 13, 16, 17.25, 19},
xticklabels={$s$,$\tau_{\sigma(0) - 1 }$,$\tau_{\sigma(0)}$, $\tau_{\sigma(0) + 1} $,  $\tau_{\sigma(p-1) } $, $\tau_{\sigma(p-1) + 1 } $ , $\tau_{\sigma(p)}$,  $t$ , $\tau_{\sigma(p) + 1}$}
]

\draw [dotted, <->] (axis cs:4.5,0.05) -- (axis cs:7,0.05) node[pos=0.5, above] {\sffamily \small flow $i$ is served};
\draw [dotted, <->] (axis cs:10,0.05) -- (axis cs:13,0.05) node[pos=0.5, above] {\sffamily \small flow $i$ is served};
\draw [dotted, <->] (axis cs:16,0.05) -- (axis cs:19,0.05) node[pos=0.5, above] {\sffamily \small flow $i$ is served};

\draw [dotted] (axis cs:7.5,0.025) -- (axis cs:9.5,0.025);
\draw [dotted] (axis cs:13.5,0.025) -- (axis cs:15.5,0.025);

\end{axis}

\end{tikzpicture} 
\end{center}



In each service of flow $i$, during a backlogged period, it sends one packet with a length $\geq \lmin_i$, thus, for all $k=0\ldots (p-1)$,
$
R_i^*(\tau_{\sigma(k + 1)}) - R_i^*(\tau_{\sigma(k)}) \geq \lmin_i
$, 
therefore
\begin{equation}\label{eqn:Min}
R_i^*(\tau_{\sigma(p)}) - R_i^*(\tau_{\sigma(0)}) \geq p\lmin_i
\end{equation}



\subsection{Amount of Service to Other Flows}

In order to upper bound the number of emission opportunities for another flow  $j$, 
we first find an expression, in Lemma \ref{lem:NVg}, for the number of emission opportunities for flow $j$ between two consecutive emission opportunities for flow~$i$. Lemma \ref{lem:NVg2} then finds an upper bound on the number of emission opportunities for flow $j$ in $(s , \tau_{\sigma(p)})$, as a function of the cycle number (variable $C$ in Algorithm~\ref{alg:IWRR}) at $\tau_{\sigma(0)}$. Lastly, Lemma~\ref{lem:numg} maximizes the previous upper bound over all values of $C$.
\begin{lemma} \label{lem:NVg}
The number of emission opportunities for flow $j\neq i$ between two consecutive emission opportunities for flow $i$, given that the latter emission opportunity for flow $i$ occurs at cycle $C$, is equal to $q_{i,j}(C)\isdef$
\begin{equation}
\begin{cases}
   0&\mif 1 < C \leq w_i \mand w_j < C\\
   1&\mif 1 < C \leq w_i \mand w_j \geq C\\
    \left[w_j - w_i\right]^+ + 1 &\mif C= 1
\end{cases}
\label{eq:lem-l2}
\end{equation}
\end{lemma}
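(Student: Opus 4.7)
The plan is to unfold the definition of IWRR as given in Algorithm~\ref{alg:IWRR} and argue by a case distinction on $C$. Since the weights are sorted $w_1\leq w_2\leq\dots\leq w_n$ (input of Algorithm~\ref{alg:IWRR}), any flow $j<i$ satisfies $w_j\leq w_i$, and any flow $j>i$ satisfies $w_j\geq w_i$; this monotonicity is what ultimately allows $q_{i,j}(C)$ to depend only on $C$ and on the comparison between $w_j$ and $C$, not on the sign of $j-i$. I would fix a pair of consecutive emission opportunities of flow $i$, the latter occurring in cycle $C\in\{1,\dots,w_i\}$, and identify explicitly the set of (cycle, flow)-slots that lie strictly between them in the execution order of Algorithm~\ref{alg:IWRR}.

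For the case $1<C\leq w_i$, the previous emission of flow $i$ happens in cycle $C-1$ of the same round; between the two we therefore traverse exactly the flows $j>i$ with $w_j\geq C-1$ in cycle $C-1$, followed by the flows $j<i$ with $w_j\geq C$ in cycle $C$. If $w_j<C$, then by the sorted-weight assumption $j<i$ automatically, and $w_j<C$ rules out an emission in cycle $C$, while $w_j<C\leq w_i$ together with $j<i$ places any cycle-$(C-1)$ emission of flow $j$ \emph{before} the previous emission of flow $i$ rather than between; this yields $0$. If $w_j\geq C$, then in cycle $C$ flow $j$ emits once, and I would check that exactly one of the two possible slots (cycle $C-1$ after $i$ if $j>i$; cycle $C$ before $i$ if $j<i$) lies in the open interval, yielding $1$ in both sub-cases.

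For the case $C=1$, the previous emission of flow $i$ is at cycle $w_i$ of the previous round, so between the two emissions we see the tail of cycle $w_i$ (flows $j>i$ with $w_j\geq w_i$), the entire cycles $w_i+1,\ldots,w_{\max}$, and the head of cycle $1$ (flows $j<i$). When $j<i$, flow $j$ contributes nothing in cycles $\geq w_i$ (its weight is $\leq w_i$ and the late slots of cycle $w_i$ are only for $j'>i$) but contributes exactly once in cycle $1$, giving $1=[w_j-w_i]^++1$. When $j>i$, it contributes once in cycle $w_i$, then once in each of cycles $w_i+1,\dots,w_j$, and zero in cycle~$1$, giving $1+(w_j-w_i)=[w_j-w_i]^++1$. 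The two sub-cases thus merge into a single formula.

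There is no real obstacle beyond the bookkeeping: one must be careful to interpret ``between'' strictly (not counting either endpoint) and to respect the ordering within a cycle (flows $1,\ldots,i-1$ before flow $i$, flows $i+1,\ldots,n$ after it). The only subtlety worth emphasising in the writeup is the use of the sorted-weights convention, which collapses several apparently distinct sub-cases and is the reason $q_{i,j}(C)$ admits the compact form stated in \eqref{eq:lem-l2}.
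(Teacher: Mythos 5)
Your proposal is correct and follows essentially the same route as the paper's own proof: a case split on $C=1$ versus $1<C\leq w_i$, locating the previous opportunity of flow $i$ in cycle $C-1$ (or cycle $w_i$ of the preceding round) and counting flow $j$'s slots in the intervening portion of the schedule. The only cosmetic difference is that you organise the sub-cases by the sign of $j-i$ and invoke the sorted-weights convention explicitly, whereas the paper organises them by comparing $w_j$ with $w_i$ and $C$; the two bookkeepings are equivalent.
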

 %
%
%
%
%
\begin{proof}
	
According to Algorithm \ref{alg:IWRR}, flow $i$ has emission opportunities only in the first $w_i$ cycles of each round. Both emission opportunities are either in the same round (Case 1) or in two consecutive rounds (Case 2). As $C$ is the cycle number for the second emission opportunity for flow $i$, Case 1 can occur only when $1 < C \leq w_i$, and Case 2 can occur when $C=1$. For Case~1, we further differentiate between $w_j<C$ and $w_j\geq C$.

\textbf{Case 1a:} $1 < C \leq w_i$ and $w_j< C$: Queue $j$ does not have an  emission opportunity in cycle $C$ because $w_j< C$. Also, we must have $w_j<w_i$, thus queue $j$ does not have an emission opportunity after $i$ in cycle $C-1$. Hence, $q_{i,j}(C)=0$.
	
\textbf{Case 1b:} $1 < C \leq w_i$ and $w_j\geq C$: If $w_j>w_i$,
then queue $j$ has an emission opportunity after queue $i$ in cycle $C - 1$. If $w_j=w_i$, then queue $j$ has an emission opportunity before $i$ in cycle $C$, or after $i$ in cycle $C-1$. Else, $C\leq w_j<w_i$ and queue $j$ has an emission opportunity in cycle $C$, before $i$. In all cases, $q_{i,j}(C)=1$.
 	
\textbf{Case 2:} $C  = 1$:	
	The first emission opportunity for $i$ is in the last cycle of a round that includes $i$ (cycle $w_i$). If $w_j>w_i$, then queue $j$ has an emission opportunity in the rest of cycle $w_i$ and also has emission opportunities during the next $(w_j - w_i)$ cycles of the last round. In this case, $q_{i,j}(C)=w_j - w_i + 1$, which is also the value in the last line of \eqref{eq:lem-l2}. Else if $w_j= w_i$, queue $j$ has an emission opportunity before $i$ in this cycle or after $i$ in cycle $w_i$ of the first round, thus $q_{i,j}(C)=1$, which is also the value in the last line of \eqref{eq:lem-l2}. Else, $w_j<w_i$ and queue $j$ has an emission opportunity before $i$ in this cycle. Here too, $q_{i,j}(C)=1$, 
 the value in the last line of \eqref{eq:lem-l2}.
\end{proof}

\begin{lemma} \label{lem:NVg2}
	
	The number of emission opportunities for flow $j \neq i$ in $(s , \tau_{\sigma(p)})$, for any backlogged period $(s,t]$ of flow $i$ with $p$ complete services, given that the first service starts at cycle number $C$ (cycle number at time $\tau_{\sigma(0)}$) is upper bounded by
	\begin{equation}\label{eqn:lem21}
	 q'_{i,j}\left(C, p \right) \isdef \sum_{k=0}^{p} q_{i,j} \left (\left( C + k - 1\right) \mymod  w_i + 1 \right)
	\end{equation}
	Also, let $C'(p)$ be the cycle number at $\tau_{\sigma(p)}$. Then,
	\begin{equation}\label{eqn:lem22}
	C'(p) =  \left( C + p - 1\right) \mymod  w_i + 1
	\end{equation}
\end{lemma}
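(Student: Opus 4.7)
The plan is to prove the two statements separately, starting with the simpler cycle-number formula \eqref{eqn:lem22}. Since Algorithm \ref{alg:IWRR} gives flow $i$ exactly one emission opportunity per round (in one of cycles $1,2,\ldots,w_i$), each consecutive service for flow $i$ advances the cycle counter by one, wrapping around modulo $w_i$ when it exceeds $w_i$. A straightforward induction on $p$ starting from cycle $C$ at $\tau_{\sigma(0)}$ then yields $C'(p) = (C + p - 1) \mymod w_i + 1$, which is \eqref{eqn:lem22}.

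For the count bound \eqref{eqn:lem21}, I would decompose the interval as
\[
(s, \tau_{\sigma(p)}) \;=\; (s, \tau_{\sigma(0)}) \,\cup\, \bigcup_{k=0}^{p-1}\lp\{\tau_{\sigma(k)}\}\cup (\tau_{\sigma(k)}, \tau_{\sigma(k+1)})\rp.
\]
For $0\leq k\leq p-1$, the singleton $\{\tau_{\sigma(k)}\}$ is a flow-$i$ service and contributes nothing to the flow-$j$ count. Within each open sub-interval $(\tau_{\sigma(k)}, \tau_{\sigma(k+1)})$, Lemma \ref{lem:NVg} gives exactly $q_{i,j}(C_{k+1})$ emission opportunities for flow $j$, where by the first part $C_{k+1} = (C+k) \mymod w_i + 1$ is the cycle of $\tau_{\sigma(k+1)}$.

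For the leftover piece $(s, \tau_{\sigma(0)})$, I would observe that by definition of $\sigma(0)$, the immediately preceding flow-$i$ emission opportunity, if any, lies strictly before $s$. Hence $(s, \tau_{\sigma(0)})$ is contained in the gap between two consecutive flow-$i$ opportunities whose later one is at cycle $C$, and Lemma \ref{lem:NVg} bounds the number of flow-$j$ opportunities in that gap by $q_{i,j}(C)$. Adding this to the previous sum and reindexing gives
\[
q_{i,j}(C) + \sum_{k=0}^{p-1} q_{i,j}\lp (C+k)\mymod w_i + 1\rp = \sum_{k=0}^{p} q_{i,j}\lp (C+k-1)\mymod w_i + 1\rp,
\]
which is exactly $q'_{i,j}(C,p)$.

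The main obstacle I anticipate is the edge case in which no flow-$i$ emission opportunity precedes $s$ (for example, if $s=0$ at system start-up), so that ``the gap between consecutive flow-$i$ opportunities'' is not well defined for the initial piece. I would handle this either by introducing a virtual cycle-boundary marker before time $0$ or, more cleanly, by noting that any flow-$j$ opportunity lying in $(s, \tau_{\sigma(0)})$ must occur at a cycle $C'\leq C$ of the same round as $\tau_{\sigma(0)}$ (or in the tail of the previous round after the cycle-$w_i$ slot), so the same combinatorial count behind Lemma \ref{lem:NVg} yields the bound $q_{i,j}(C)$. Once this is settled, the remainder is routine bookkeeping on the sum.
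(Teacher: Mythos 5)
Your proposal is correct and follows essentially the same route as the paper: the paper phrases the argument as an induction on $p$, but its inductive step is exactly your decomposition into the partial initial gap $(s,\tau_{\sigma(0)})$ plus the full gaps between consecutive flow-$i$ services, each handled by Lemma~\ref{lem:NVg} with the cycle number advanced modulo $w_i$. Your explicit treatment of the start-up edge case (no flow-$i$ opportunity before $s$) is in fact slightly more careful than the paper, which silently subsumes it in the base case.
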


\begin{proof}
	
	By induction on $p$.
	
	\textbf{Base Case:} $ p = 0$
	
	In this case, $q'_{i,j}\left(C , 0 \right)$ is the number of emission opportunities for flow $j$ between two consecutive emission opportunities for flow $i$ that by Lemma \ref{lem:NVg}, is equal to $q_{i,j}(C)$. As $1 \leq C \leq w_i$, $\left( C - 1\right) \mymod   w_i + 1  = C$ thus $q_{i,j}(C)=q_{i,j} \left (\left( C  - 1\right) \mymod  w_i + 1 \right)$. This shows \eqref{eqn:lem21}.   
	Also, by definition, $C'(0)=C$; using again $\left( C - 1\right) \mymod   w_i + 1  = C$ shows that \eqref{eqn:lem22} holds.
	
	\textbf{Induction step:}
	
	We assume that \eqref{eqn:lem21} and \eqref{eqn:lem22} hold for $p - 1$, and we want to show that they also hold for $ p $.
	
	First, let's prove \eqref{eqn:lem22}. There are two possible cases: (a) if $0\leq C'(p-1)<w_i$, then both $(p-1)$th and $p$th emission opportunities occur in the same round, thus $C'(p)=C'(p-1)+1$. By the induction hypothesis, $\left( C + p-2\right) \mymod  w_i + 1 < w_i$, i.e., $\left( C + p-2\right) \mymod  w_i <w_i - 1$. Note that, for any integer $x$
 	\begin{equation}\label{eq:mod}
	(x + 1)\mymod w  = \begin{cases} (x\mymod w) + 1 	\;\mif (x\mymod w) < w - 1&\\
	0			\motherwise
	\end{cases}
	\end{equation}
	By using \eqref{eq:mod}, we obtain that $C'(p)$ is given by \eqref{eqn:lem22} as required. (b) In the second case, $C'(p-1)=w_i$ then the next emission opportunity occurs in the first cycle of the next round, thus $C'(p)=1$. Here too, applying \eqref{eq:mod} shows that $C'(p)$ is given by \eqref{eqn:lem22} as required.

%
%
	 Then, we prove \eqref{eqn:lem21}. Let $N$ be the number of emission opportunities for flow $j$ in 
$[s , \tau_{\sigma(p)})$. $N$ is the sum of $N_1$, the number of emission opportunities in $[s , \tau_{\sigma(p-1)})$, and $N_2$, the number of emission opportunities in $(\tau_{\sigma(p - 1)} , \tau_{\sigma(p)})$.
By the induction hypothesis, $N_1 \leq q'_{i,j}\left(C , p - 1 \right)$. Also, by Lemma \ref{lem:NVg}, we have $N_2 \leq q_{i,j}(C'(p))$. Thus, by using \eqref{eqn:lem22} which was just shown to also hold for $p$, we obtain
	\begin{equation}	
	\begin{aligned}
	N \leq & \sum_{k=0}^{p - 1} q_{i,j} \left( \left( C + k - 1\right) \mymod   w_i + 1 \right) \\
	& + q_{i,j}\left(\left( C + p - 1\right) \mymod   w_i + 1 \right)
	\end{aligned}
	\end{equation}
	where the right-hand side is equal to $ q'_{i,j}(C,p)$ as required.
\end{proof}

\begin{lemma}\label{lem:numg}
	For any backlogged period $(s,t]$ of flow $i$ with $p$ complete services, the number of emission opportunities for flow $j \neq i$ in $(s , \tau_{\sigma(p)})$ is upper bounded by $\phi_{i,j}(p)$, 
defined in \eqref{eq:phi}.
\end{lemma}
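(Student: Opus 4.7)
The plan is to apply Lemma \ref{lem:NVg2} and then to bound the sum $q'_{i,j}(C,p)$ directly by a cyclic-block argument. By that lemma, the number of emission opportunities for flow $j$ in $(s,\tau_{\sigma(p)})$ is at most
\[
q'_{i,j}(C,p)=\sum_{k=0}^{p}q_{i,j}\bigl((C+k-1)\mymod w_i+1\bigr)
\]
for some $C\in\{1,\ldots,w_i\}$ determined by the backlogged period, so it suffices to prove $q'_{i,j}(C,p)\leq \phi_{i,j}(p)$ for every such $C$. Writing $p=q w_i+s$ with $q=\lfloor p/w_i\rfloor$ and $s=p\mymod w_i$, the sum has $q w_i+s+1$ consecutive cyclic terms, which I would split into $q$ complete blocks of length $w_i$ plus one residual block of length $s+1\leq w_i$.

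The first step is to evaluate a complete block. Reading off the values $a_m\isdef q_{i,j}(m)$ from \eqref{eq:lem-l2} gives $a_1=[w_j-w_i]^++1$, $a_m=1$ for $2\leq m\leq \min(w_j,w_i)$, and $a_m=0$ for $\min(w_j,w_i)<m\leq w_i$. Summing over a full round $m=1,\ldots,w_i$ yields $w_j$ in both the regimes $w_j\geq w_i$ and $w_j<w_i$, so the $q$ complete blocks contribute exactly $q w_j$, independently of the starting cycle $C$.

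The second step, which I expect to be the main obstacle, is to bound the residual block of length $s+1$. Since $s+1\leq w_i$, the residual contains the distinguished position $m=1$ at most once. If that position belongs to the residual, its contribution is $[w_j-w_i]^++1$, and the other $s$ positions each contribute at most $1$, with at most $\min(w_j,w_i)-1$ of them equal to $1$; a short case analysis on the sign of $w_j-w_i$ and on whether $s\geq w_j$ then shows the residual sums to at most $[w_j-w_i]^++\min(s+1,w_j)$. If position $1$ is absent from the residual, every entry lies in $\{0,1\}$ and at most $\min(w_j,w_i)-1$ of the residual entries equal $1$, which again gives the same bound. The delicate point is that $[w_j-w_i]^+$ is absorbed into the position-$1$ contribution only when $w_j\geq w_i$, so the two regimes must be checked separately.

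Combining the two steps yields $q'_{i,j}(C,p)\leq q w_j+[w_j-w_i]^++\min(s+1,w_j)$, which is exactly $\phi_{i,j}(p)$ by \eqref{eq:phi}; this closes the argument.
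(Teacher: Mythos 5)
Your proposal is correct and follows essentially the same route as the paper: both invoke Lemma~\ref{lem:NVg2}, split the cyclic sum into $\lfloor p/w_i\rfloor$ complete rounds each summing to $w_j$ plus a residual block of $p\bmod w_i+1$ terms, and bound that residual by $[w_j-w_i]^+ + \min(p\bmod w_i+1,\,w_j)$. The only cosmetic difference is that the paper first shows $C=1$ maximizes the residual (using that $q_{i,j}$ is wide-sense decreasing) and then evaluates it exactly at $C=1$, whereas you bound the residual directly for every starting cycle $C$ by counting how many positions can contribute a $1$; both give the same bound.
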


\begin{proof}
	
	Lemma \ref{lem:NVg2} gives the number of emission opportunities for flow $j \neq i$ in $(s , \tau_{\sigma(p)})$, for any backlogged period $(s,t]$ of flow $i$ with $p$ complete services, when the first service starts at cycle number $C$ (cycle number at time $\tau_{\sigma(0)}$). To obtain the lemma, we maximize this result over $C$. We show the following properties.

	\noindent\textbf{(P1)} For any integer $C \in [1 , w_i]$,
	\begin{equation}\label{eq:temp21}
	\sum_{k=0}^{w_i - 1} q_{i,j} \left (\left( C + k - 1\right) \mymod   w_i + 1 \right) = w_j
	\end{equation}
The mapping $k\mapsto \left( C + k - 1\right) \mymod   w_i + 1$ is one-to-one from $\lc 0,...,w_i-1\rc$ onto $\lc 1,...,w_i\rc$, thus the left-hand side of \eqref{eq:temp21} is equal to $\sum_{k=1}^{w_i} q_{i,j} \left(k\right)$ that as we show now, is equal to $w_j$.
%
%
%
First, we have $q_{i,j}(1) =  \left[w_j - w_i\right]^+ + 1$. Also, $q_{i,j}(k)=1$ when $k > 1$ and $w_j \geq k+1$. Thus, $\sum_{k=2}^{w_i} q_{i,j} \left(k \right)=\min(w_i - 1, w_j - 1)$ and finally
the left-hand side is equal to $ \left[w_j - w_i\right]^+  +  \min(w_i - 1, w_j - 1) + 1$, which is equal to $w_j$.
	
	\noindent\textbf{(P2)} For any integers $C \in [1 , w_i]$ and $p \geq 0$, $q'_{i,j}\left(C , p \right)=$
	\begin{equation}\label{eq:temp3}
	 \left \lfloor \frac{p}{w_i} \right \rfloor w_j
+
	 \sum_{k=0}^{p \mymod   w_i} q_{i,j} \left( \left( C + k - 1\right) \mymod   w_i + 1 \right)
	\end{equation}
	
	$q_{i,j}$ is a periodic function with period $w_i$. By (P1), the sum over one complete period is $w_j$. Also, we can write $ p = \left \lfloor \frac{p}{w_i} \right \rfloor w_i + p~ \mymod   w_i$. Thus, we have $\left \lfloor \frac{p}{w_i} \right \rfloor$ complete rounds, and the sum in \eqref{eq:temp3} is the remainder.

	\noindent\textbf{(P3)} $q_{i,j}$ is a wide-sense decreasing function.
This means that for any integer $k \in [1 , w_i)$, $q_{i,j}(k + 1) \leq q_{i,j}(k)$. If $k = 1$, this follows from $q_{i,j}(1)\geq 1$ and $q_{i,j}(2)\leq 1$. Else if $k \leq w_j < k + 1$, then $q_{i,j}(k + 1) = 0$ and $q_{i,j}(k) = 1$. Else, they are  equal. Hence, in all cases the property holds.

	\noindent\textbf{(P4)} For any integer $C \in [1 , w_i]$ and $p \geq 0$,
	\begin{equation}\label{eq:temp5}
	\begin{aligned}
	q'_{i,j}\left(C , p \right) \leq  q'_{i,j}\left(1 , p \right)
	\end{aligned}
	\end{equation}
	By using (P2), we should show that $\sum_{k=0}^{p \mymod   w_i} q_{i,j} \left( \left( C + k - 1\right) \mymod   w_i + 1 \right)$ is upper bounded by $\sum_{k=0}^{p \mymod   w_i} q_{i,j} \left(  k \mymod   w_i + 1 \right)$. Note that here we have $k \mymod w_i = k$. Both sides are the sum of $a\isdef p~\mymod   w_i + 1$ unique elements of the set $\{q_{i,j}(k) \}_{k \in [1 , w_i]}$.  By (P3), the right-hand side is the maximum sum of $a$ unique elements of this set.
	
	\noindent\textbf{(P5)} For any integer $p \geq 0$,
	\begin{equation}\label{eq:temp4}
	\begin{aligned}
	q'_{i,j}\left(1 , p \right) &=  \phi_{i,j}(p)
	\end{aligned}
	\end{equation}
	
		We apply (P2) with $C =1$ to compute $q'_{i,j}\left(1 , p \right)$. Then, the sum in the right-hand side of \eqref{eq:temp3} is equal to $\sum_{k=0}^{p ~\mymod    w_i } q_{i,j} \left(k + 1 \right)$, as $k \mymod w_i = k$. Then, by using the same argument after \eqref{eq:temp21}, it is equal to $\left[w_j - w_i\right]^+ + 1 +  \min(p~ \mymod   w_i , w_j - 1)$, which, by \eqref{eq:phi}, is precisely $\phi_{i,j}(p)$.

The lemma then follows directly from (P4) and (P5).
%
%
\end{proof}

\begin{lemma} \label{lem:g}
	For every flow $j \neq i$,
	\begin{equation}
 R_j^*(t)  \leq R_j^*(\tau_{\sigma(p)})
	\end{equation}
\end{lemma}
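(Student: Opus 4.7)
The plan is to split on the order between $t$ and $\tau_{\sigma(p)}$. If $t \leq \tau_{\sigma(p)}$, then monotonicity of $R_j^*$ gives the inequality at once. So the substantial case is $\tau_{\sigma(p)} < t$, and the strategy there is to show the stronger equality $R_j^*(t) = R_j^*(\tau_{\sigma(p)})$, by arguing that flow $j$ receives no additional output in the half-open interval $(\tau_{\sigma(p)}, t]$.

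The key step in the substantial case is to exploit the definition of $p$ given in \sref{ssec:t}: since $p$ counts exactly the complete service opportunities of flow $i$ within $(s,t]$, the service opportunity occurring at $\tau_{\sigma(p)}$ itself must fail to be complete. Unfolding the definition, this forces $\tau_{\sigma(p)+1} > t$, for otherwise the $(p+1)$-th service opportunity would satisfy the completeness condition $\tau_{\sigma(p)+1} \leq t$ and contradict the maximality of $p$. Hence the entire interval $(\tau_{\sigma(p)}, t]$ lies strictly between two consecutive service instants of Algorithm \ref{alg:IWRR}, and no new \texttt{send} instruction is issued on it.

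It then remains to observe that the only \texttt{send} in progress during $[\tau_{\sigma(p)}, \tau_{\sigma(p)+1})$ is the one triggered at time $\tau_{\sigma(p)}$, which by the very definition of $\sigma(p)$ serves flow $i$ (i.e., $fl_{\sigma(p)} = i$). Consequently, no bit of flow $j$ is delivered in $(\tau_{\sigma(p)}, t]$, so $R_j^*$ is constant on this interval, and the claimed inequality (in fact, equality) follows. I expect the proof to be essentially bookkeeping around the definition of ``complete''; the only mild obstacle is making the two-case split explicit, which is already flagged by the remark in \sref{ssec:t} that either $\tau_{\sigma(p)} < t$ or $\tau_{\sigma(p)} \geq t$ can occur.
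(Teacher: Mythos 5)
Your proof is correct and follows essentially the same route as the paper's: split on whether $t \leq \tau_{\sigma(p)}$ (monotonicity of $R_j^*$) or $t > \tau_{\sigma(p)}$ (flow $i$ is the one being served on $[\tau_{\sigma(p)}, t]$, so $R_j^*$ is constant there). You merely spell out in more detail the bookkeeping step — that the $(p+1)$-th opportunity being incomplete forces $\tau_{\sigma(p)+1} > t$ — which the paper leaves implicit.
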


\begin{proof}
If $t \leq \tau_{\sigma(p)}$, the result follows from $R_j^*$ being wide-sense increasing. Else, we have $t > \tau_{\sigma(p)}$; this implies that flow $i$ is served during $[\tau_{\sigma(p)},t]$; thus for any other flow $j$, $R_j^*(t) =R_j^*(\tau_{\sigma(p)})$.
\end{proof}

\subsection{Amount of Service to Flow of Interest}

\begin{lemma}\label{prop:p}
The number of complete services, $p$, of flow of interest, $i$, in $(s,t]$ is upper bounded by:
	\begin{equation}\label{eqn:p}
		p \leq \left \lfloor \frac{R_i^*(t) - R_i^*(s)}{\lmin_i} \right \rfloor
	\end{equation}
\end{lemma}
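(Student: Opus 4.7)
The plan is to show that the output increment $R_i^*(t)-R_i^*(s)$ is at least $p\lmin_i$; the lemma then follows by dividing by $\lmin_i$ and using that $p$ is a non-negative integer. The essential fact is already captured by \eqref{eqn:Min}: the $p$ complete services of flow $i$ deliver $p$ distinct packets, each of length at least $\lmin_i$. What remains is to place these $p$ packets properly inside $(s,t]$ and to compare cumulative outputs at the endpoints.

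By definition of $\sigma(0)$, we have $\tau_{\sigma(0)}\geq s$, and no packet of flow $i$ starts being transmitted in $[s,\tau_{\sigma(0)})$. The $p$-th service opportunity being complete is exactly the statement $\tau_{\sigma(p-1)+1}\leq t$. Therefore every one of the $p$ complete packet transmissions of flow $i$ lives inside $[\tau_{\sigma(0)},\tau_{\sigma(p-1)+1}]\subseteq [s,t]$. Since $R_i^*$ is wide-sense increasing, this gives $R_i^*(t)-R_i^*(s)\geq R_i^*(\tau_{\sigma(p-1)+1})-R_i^*(\tau_{\sigma(0)})\geq p\lmin_i$, where the second inequality is obtained by the same counting argument as in \eqref{eqn:Min}, applied at the completion time $\tau_{\sigma(p-1)+1}$ of the $p$-th service rather than at the start $\tau_{\sigma(p)}$ of the next service opportunity (the two yield the same value of $R_i^*$, since no service of flow $i$ takes place between them). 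Dividing by $\lmin_i>0$ and taking the floor on the right, which is legitimate because $p\in\mathbb{N}$, yields the claim.

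The degenerate case $p=0$ has to be handled separately because $\tau_{\sigma(p-1)+1}$ is then undefined, but the bound is trivial since $R_i^*(t)-R_i^*(s)\geq 0$. I do not foresee a real obstacle in this lemma, which serves mainly as the other half of the ``sandwich'' used in the overall proof; the only care required is the endpoint bookkeeping that guarantees the $p$ full transmissions sit inside $(s,t]$. Any additional contributions to $R_i^*(t)-R_i^*(s)$ coming from partial service (e.g.\ a transmission of flow $i$ in progress at $t$) are non-negative and work in our favour, so the lower bound $p\lmin_i$ is safe.
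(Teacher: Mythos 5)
Your proof is correct and follows essentially the same route as the paper: lower-bound $R_i^*(t)-R_i^*(s)$ by $p\lmin_i$ via \eqref{eqn:Min} and monotonicity, then divide by $\lmin_i$ and take the floor since $p$ is an integer. The only (harmless) difference is that you anchor at the completion time $\tau_{\sigma(p-1)+1}\leq t$, which lets you skip the paper's two-case comparison of $t$ with $\tau_{\sigma(p)}$; the equality $R_i^*(\tau_{\sigma(p-1)+1})=R_i^*(\tau_{\sigma(p)})$ you invoke is exactly the observation the paper uses in its second case.
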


\begin{proof}	
First, $R_i^*(s) \leq R_i^*(\tau_{\sigma(0)})$, as $s \leq \tau_{\sigma(0)}$ and $R_i^*$ is wide-sense increasing. Second, consider the two cases in \ref{ssec:t}. If $t \geq \tau_{\sigma(p)}$, the property holds. Else, the scheduler in not serving flow $i$ in $[ \tau_{\sigma(p - 1) + 1} ,  \tau_{\sigma(p)})$, thus, $R_i^*(t) = R_i^*(\tau_{\sigma(p)})$. Hence, in both cases $R_i^*(t) \geq R_i^*(\tau_{\sigma(p)})$. By \eqref{eqn:Min},
$
	R_i^*(t) - R_i^*(s)  \geq p\lmin_i
$. Then, observe that $p$ is integer.
\end{proof}

\subsection{Total Amount of  Service}
\begin{lemma} \label{lem:TotService}
For any backlogged period $(s , t]$ of the flow of interest $i$,
	\begin{equation}\label{eqn:TotService}
		\beta(t-s) \leq \psi_i \left( R_i^*(t) - R_i^*(s) \right)
	\end{equation}
where $\psi_i$ is defined in \eqref{eq:psi}.
\end{lemma}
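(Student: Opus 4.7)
The plan is to exploit the aggregate strict service curve together with the two upper bounds already proven for the other flows (Lemma \ref{lem:numg} and Lemma \ref{lem:g}) and for the flow of interest (Lemma \ref{prop:p}). Since $(s,t]$ is backlogged for flow $i$, the aggregate is also backlogged on $(s,t]$ (flow $i$'s unserved data is part of the aggregate), so the aggregate strict service curve yields
\begin{equation*}
\beta(t-s) \;\leq\; \sum_{j} \left( R_j^*(t) - R_j^*(s) \right)
\;=\; \left(R_i^*(t) - R_i^*(s)\right) + \sum_{j \neq i} \left( R_j^*(t) - R_j^*(s) \right).
\end{equation*}

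The plan is then to bound each cross term $R_j^*(t) - R_j^*(s)$ for $j\neq i$. First, I would apply Lemma \ref{lem:g} to replace the upper endpoint $t$ by $\tau_{\sigma(p)}$, giving $R_j^*(t) - R_j^*(s) \leq R_j^*(\tau_{\sigma(p)}) - R_j^*(s)$. The amount of data served to flow $j$ during $(s,\tau_{\sigma(p)}]$ is at most the number of emission opportunities for flow $j$ in this interval multiplied by $\lmax_j$ (and the endpoint $\tau_{\sigma(p)}$ itself is an emission opportunity for $i$, not $j$, so whether the interval is open or closed at the right is immaterial). Then Lemma \ref{lem:numg} bounds that number by $\phi_{i,j}(p)$, giving $R_j^*(t) - R_j^*(s) \leq \phi_{i,j}(p)\, \lmax_j$.

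To close the loop, I would use Lemma \ref{prop:p} to replace $p$ by $\lfloor (R_i^*(t) - R_i^*(s))/\lmin_i \rfloor$. For this substitution to preserve the inequality, one needs $\phi_{i,j}$ to be wide-sense increasing; this is straightforward from its closed form \eqref{eq:phi} since each of the three summands is wide-sense increasing in $x$. Substituting and recognising the definition \eqref{eq:psi} of $\psi_i$ then yields
\begin{equation*}
\beta(t-s) \;\leq\; \left(R_i^*(t)-R_i^*(s)\right) + \sum_{j\neq i} \phi_{i,j}\!\left( \left\lfloor \frac{R_i^*(t)-R_i^*(s)}{\lmin_i} \right\rfloor \right) \lmax_j \;=\; \psi_i\!\left( R_i^*(t) - R_i^*(s) \right),
\end{equation*}
which is the claim. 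The main (mild) subtlety is the bookkeeping between the open and closed endpoints when translating ``emission opportunities'' into bounds on $R_j^*$, and the small sanity check that $\phi_{i,j}$ is monotonic so that Lemma \ref{prop:p} can be chained with Lemma \ref{lem:numg}; beyond that the argument is a direct assembly of the previous lemmas.
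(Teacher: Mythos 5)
Your proposal is correct and follows essentially the same route as the paper's proof: aggregate strict service curve, then Lemma \ref{lem:g}, then Lemma \ref{lem:numg}, then Lemma \ref{prop:p}. Your explicit check that $\phi_{i,j}$ is wide-sense increasing (needed to chain Lemma \ref{prop:p} with Lemma \ref{lem:numg}) is a small point the paper leaves implicit here and only establishes later in Lemma \ref{prop:wsi}, but it does not change the argument.
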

\begin{proof}
As the interval $(s,t]$ is a backlogged period, by the definition of the strict service curve for the aggregate of flows, $\beta(t-s) \leq\sum_{j} R_j^*(t) - R_j^*(s)$.
We upper bound $ R_j^*(t)$ for all $j \neq i$ by applying Lemma \ref{lem:g},
	\begin{equation}
		\beta(t-s) \leq (R_i^*(t) - R_i^*(s)) + \sum_{j,j \neq i} R_j^*(\tau_{\sigma(p)}) - R_j^*(s)
	\end{equation}
Each flow $j$ has at most $\phi_{i,j}(p)$ emission opportunities during $\left(s,\tau_{\sigma(p)}\right)$ (Lemma~\ref{lem:numg}) and can send at most
one packet of maximum size in each. Thus,
	\begin{equation}
		\beta(t-s) \leq (R_i^*(t) - R_i^*(s)) + \sum_{j,j \neq i} \phi_{i,j}(p)\lmax_j
	\end{equation}
Also, Lemma \ref{prop:p} finds an upper bound on $p$. Thereby,
	\begin{equation}
		\begin{aligned}
		\beta(t-s) & \leq (R_i^*(t) - R_i^*(s)) \\
				&+\sum_{j,j \neq i} \phi_{i,j}\left( \left \lfloor \frac{R_i^*(t) - R_i^*(s)}{\lmin_i } \right \rfloor \right)\lmax_j
		\end{aligned}
	\end{equation}
	where the right-hand side is equal to $\psi_i(R_i^*(t) - R_i^*(s))$.
\end{proof}
\subsection{Lower Pseudo-inverse of $\psi_i$}
Our next step is to invert \eqref{eqn:TotService} by computing the lower-pseudo inverse of $\psi_i$. As the calculus of pseudo inverses applies to wide-sense increasing functions, we first show:
\begin{lemma} \label{prop:wsi}
	$\psi_i$, defined in \eqref{eq:psi}, is wide-sense increasing.
\end{lemma}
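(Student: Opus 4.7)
The plan is to show that $\psi_i$ is wide-sense increasing by decomposing it into simpler wide-sense increasing building blocks. From \eqref{eq:psi}, $\psi_i(x)$ is the sum of $x$ (strictly increasing) and, for each $j\neq i$, the term $\phi_{i,j}\!\left(\lfloor x/\lmin_i\rfloor\right)\lmax_j$. Since $x\mapsto \lfloor x/\lmin_i\rfloor$ is a wide-sense increasing map from $\mathbb{R}^+$ to $\mathbb{N}$ and $\lmax_j\geq 0$, it suffices to prove that $\phi_{i,j}:\mathbb{N}\to\mathbb{N}$ is wide-sense increasing for every $j\neq i$. That is the only non-trivial ingredient.

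To prove $\phi_{i,j}(x+1)\geq \phi_{i,j}(x)$ for every integer $x\geq 0$, I would split on the value of $x\bmod w_i$.

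Case A: $x\bmod w_i < w_i-1$. Then $\lfloor (x+1)/w_i\rfloor=\lfloor x/w_i\rfloor$ and $(x+1)\bmod w_i=(x\bmod w_i)+1$. Hence the first two terms in \eqref{eq:phi} are unchanged, and $\min((x+1)\bmod w_i+1,w_j)\geq \min(x\bmod w_i+1,w_j)$, so $\phi_{i,j}(x+1)\geq \phi_{i,j}(x)$.

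Case B: $x\bmod w_i = w_i-1$. Then $\lfloor (x+1)/w_i\rfloor=\lfloor x/w_i\rfloor+1$ and $(x+1)\bmod w_i=0$. The first term of \eqref{eq:phi} increases by $w_j$, the second is unchanged, and the third changes from $\min(w_i,w_j)$ to $\min(1,w_j)=1$ (using $w_j\geq 1$). The net increment is
\[
w_j + 1 - \min(w_i,w_j),
\]
which equals $1$ if $w_j\leq w_i$ and $w_j+1-w_i\geq 2$ otherwise; in either case it is non-negative.

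Combining the cases shows $\phi_{i,j}$ is wide-sense increasing on $\mathbb{N}$. Plugging this back into \eqref{eq:psi} yields the lemma. The main (and essentially only) subtlety is the book-keeping in Case B, where advancing across a round boundary causes simultaneous changes in the quotient and remainder terms of $\phi_{i,j}$, and one must verify that the increase of $w_j$ in the quotient term compensates the drop $\min(w_i,w_j)-1$ in the $\min$ term; once this is checked, the rest is immediate.
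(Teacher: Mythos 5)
Your proof is correct and follows essentially the same route as the paper: both reduce the claim to showing that $\phi_{i,j}$ is wide-sense increasing and then handle the interaction between the quotient term $\lfloor x/w_i\rfloor w_j$ and the remainder term $\min(x \bmod w_i + 1, w_j)$ at round boundaries. If anything, your Case~B is slightly more careful than the paper's corresponding step, since you explicitly verify that the gain of $w_j$ in the quotient term compensates the drop of $\min(w_i,w_j)-1$ in the $\min$ term, whereas the paper only asserts that $\phi_{i,j}$ increases by ``at least one $w_j$'' across a round boundary.
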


\begin{proof}
	It is sufficient to show that $\phi_{i,j}$, defined in \eqref{eq:phi}, is a wide-sense increasing function. For any non-negative integers $x$ and $y$ such that $y \leq x$, we can write $x = kw_i + (x~\mymod   w_i)$ and $y = k'w_i+ (y~\mymod   w_i)$, where $k,k'$ are non-negative integers. We must have $k\leq k'$. If $k=k'$, we know that $(y~ \mymod   w_i \leq x~  \mymod   w_i)$ and $\left \lfloor \frac{x}{w_i} \right \rfloor =\left \lfloor \frac{y}{w_i} \right \rfloor$. Hence, $ \phi_{i,j}(y) \leq \phi_{i,j}(x)$. Else, $k > k'$ and $\left \lfloor \frac{x}{w_i} \right \rfloor > \left \lfloor \frac{y}{w_i} \right \rfloor$. Thereby, $\phi_{i,j}(x)$ is at least one $w_j$ larger than $\phi_{i,j}(y)$. Hence, $ \phi_{i,j}(y) < \phi_{i,j}(x)$.
\end{proof}

\begin{lemma}\label{lem:inverse1}
	Let $g_0, g_1, \ldots, g_k,\ldots$ be a non-negative sequence such that $g_{k+1}-g_k\geq 1$. The sequence can be extended to a function in $\mathscr{F}$ by $g(x)=g_{\lfloor x\rfloor}$ and let $g^{\downarrow}$ be its lower pseudo-inverse, so that $g^{\downarrow}(y)=k  +1\in \Nats\Leftrightarrow g_k < y \leq g_{k+1}$. Define $f\in \mathscr{F}$ by $f(x)=g_{\lfloor x\rfloor}+x\mymod 1$. Then, $f^{\downarrow}=\lambda_1\otimes g^{\downarrow}$.
\end{lemma}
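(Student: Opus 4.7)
The plan is to compute both $f^\downarrow$ and $\lambda_1\otimes g^\downarrow$ explicitly as piecewise-linear functions of $y$ and verify they agree piece by piece; the hypothesis $g_{k+1}-g_k\ge 1$ is what makes the comparison collapse to a single dominating term in each infimum.

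First I would read $f^\downarrow$ off the definition \eqref{def:lsi}. On each interval $[k,k+1)$ the function $f$ is the unit-slope ramp from $g_k$ at $x=k$ up to $g_k+1$ as $x\to(k+1)^-$, and then $f(k+1)=g_{k+1}$, producing an additional (possibly zero) jump of size $g_{k+1}-g_k-1\ge 0$. This gives three regimes: $f^\downarrow(y)=0$ for $0\le y\le g_0$; $f^\downarrow(y)=k+(y-g_k)$ for $g_k<y\le g_k+1$; and $f^\downarrow(y)=k+1$ for $g_k+1<y\le g_{k+1}$ (the last interval being empty exactly when $g_{k+1}=g_k+1$).

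Next I would compute $(\lambda_1\otimes g^\downarrow)(y)=\inf_{0\le s\le y}\{(y-s)+g^\downarrow(s)\}$. The function $g^\downarrow$ is piecewise constant, equal to $0$ on $[0,g_0]$ and to $j$ on $(g_{j-1},g_j]$, so on each such piece the integrand is affine decreasing in $s$ and is minimized at the right endpoint $s=g_j$ whenever $g_j\le y$, with value $(y-g_j)+j$; the remaining candidate is $s=y$ itself, with value $g^\downarrow(y)$. Using $g_{j+1}-g_j\ge 1$ one gets $(y-g_{j+1})+(j+1)\le (y-g_j)+j$, so among the endpoint candidates the minimum is attained at the largest index $k$ with $g_k\le y$, with value $(y-g_k)+k$; and when $g_k<y\le g_{k+1}$ the other candidate gives $g^\downarrow(y)=k+1$. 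Since $(y-g_k)+k\le k+1\iff y\le g_k+1$, the infimum equals $(y-g_k)+k$ on $(g_k,g_k+1]$ and $k+1$ on $(g_k+1,g_{k+1}]$; and taking $s=y$ gives $0$ on $[0,g_0]$. These match the three regimes for $f^\downarrow$ term by term.

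The main obstacle is a mild one: keeping consistent bookkeeping at the boundary points $y=g_k$ and $y=g_k+1$, in particular in the degenerate case $g_{k+1}=g_k+1$ where the ``flat'' interval $(g_k+1,g_{k+1}]$ of $f^\downarrow$ collapses. Both formulas give $k+1$ at $y=g_k+1$, so matching across the collapsed interval is automatic and no extra case is needed.
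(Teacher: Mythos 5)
Your proof is correct and takes essentially the same route as the paper's: both $f^{\downarrow}$ and $\lambda_1\otimes g^{\downarrow}$ are computed in closed form as the piecewise-linear function equal to $k+y-g_k$ on $(g_k,g_k+1]$ and to $k+1$ on $(g_k+1,g_{k+1}]$, then compared regime by regime. Your version simply spells out the candidate-reduction in the infimum and the boundary bookkeeping that the paper compresses into ``a direct calculation.''
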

\begin{proof}
Observe that convolving $g^{\downarrow}$ with $\lambda_1$ consists in smoothing the unit steps with a slope of $1$ (Fig.~\ref{fig:minplus}). Thus $(\lambda_1\otimes g^{\downarrow})(y)=k +y-g_k$ whenever $g_k\leq y\leq g_{k}+1$ and $(\lambda_1\otimes g^{\downarrow})(y)=k  +1$ whenever $g_k+1\leq y\leq g_{k+1}$.

Also, $f$ is piecewise linear and can be inverted in closed form on every interval where it is linear. A direct calculation gives $f^{\downarrow}(y)=k+y-g_k$ whenever $g_k\leq y\leq g_{k}+1$ and $f^{\downarrow}(y)=k + 1$ whenever $g_k+1\leq y\leq g_{k+1}$.
\end{proof}

\begin{lemma}\label{lem:inverse2}
	Let $f\in\mathscr{F}$ and $l,m>0$. Define $h\in\mathscr{F}$ by $h(x)=mf\lp\frac{x}{l}\rp$. Then, for all $y\geq 0$, $h^{\downarrow}(y)=lf^{\downarrow}\lp\frac{y}{m}\rp$.
\end{lemma}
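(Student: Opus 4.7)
The plan is to prove this by direct manipulation of the defining formula for the lower pseudo-inverse, using the positivity of $l$ and $m$ to pull the scalars out of the infimum without changing the direction of any inequality.

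First I would start from the definition in \eqref{def:lsi}, writing
\[
h^{\downarrow}(y) \;=\; \inf\lc x \geq 0 \;\middle|\; h(x) \geq y\rc \;=\; \inf\lc x \geq 0 \;\middle|\; m f\lp\frac{x}{l}\rp \geq y\rc.
\]
Since $m>0$, the condition $m f(x/l) \geq y$ is equivalent to $f(x/l) \geq y/m$, so the defining set is unchanged after dividing by $m$.

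Next I would perform the change of variable $u = x/l$. Because $l>0$, the mapping $x \mapsto x/l$ is an increasing bijection of $\Reals^+$ onto itself, so the infimum over $\lc x \geq 0\rc$ can be rewritten as an infimum over $\lc u \geq 0\rc$ with $x = lu$. This gives
\[
h^{\downarrow}(y) \;=\; \inf\lc l u \;\middle|\; u \geq 0,\; f(u) \geq \frac{y}{m}\rc \;=\; l \cdot \inf\lc u \geq 0 \;\middle|\; f(u) \geq \frac{y}{m}\rc,
\]
where pulling $l$ outside the infimum is legitimate precisely because $l>0$. By the definition of $f^{\downarrow}$ applied at the point $y/m$, the remaining infimum equals $f^{\downarrow}(y/m)$, so $h^{\downarrow}(y) = l f^{\downarrow}(y/m)$ as claimed.

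There is no real obstacle here; the result is a routine scaling identity for lower pseudo-inverses, and the only minor point to check is that both scalings act on the ``correct'' axis (horizontal scaling by $l$ on $h$ corresponds to a horizontal scaling by $l$ on $h^{\downarrow}$, and vertical scaling by $m$ on $h$ corresponds to a vertical scaling by $m$ on $h^{\downarrow}$, which is exactly what the formula expresses). Positivity of $l$ and $m$ is used throughout to ensure that division by $m$ and substitution $x = lu$ preserve the set of feasible values and the direction of the inequality defining the infimum.
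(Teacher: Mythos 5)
Your proof is correct and is essentially the paper's own argument: the paper likewise observes that $x$ belongs to the feasible set $\lc x\geq 0 \mid h(x)\geq y\rc$ if and only if $\frac{x}{l}$ belongs to $\lc u\geq 0 \mid f(u)\geq \frac{y}{m}\rc$, and then scales the infimum by $l$. Your version just spells out the two uses of positivity ($m>0$ to divide the inequality, $l>0$ to substitute and pull the factor out of the infimum) a bit more explicitly.
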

\begin{proof}
Let $B(f,y)\isdef \lc x\geq 0, h(x)\geq y \rc$ so that $f^{\downarrow}(y)=\inf B(y,f)$. Observe that $x\in B(h,y)\Leftrightarrow
\frac{x}{l}\in B\lp f, \frac{y}{m}\rp$.
\end{proof}

\begin{lemma}\label{lem:inverse3}
	Let $a\in\mathscr{F}$ and $l>0$. Define $b\in\mathscr{F}$ by $b(x)=lf\lp\frac{x}{l}\rp$. Then, for all $x\geq 0$, $(\lambda_1\otimes b)(x)=l (\lambda_1\otimes a)\lp\frac{x}{l}\rp$.
\end{lemma}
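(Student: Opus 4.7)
The plan is to unfold the definition of min-plus convolution and perform a linear change of variables that rescales the infimum index by the factor $l$. Assuming the statement intends $b(x) = l\,a(x/l)$ (the symbol $f$ in the definition appears to be a typo for $a$, since $f$ is not introduced), the result says that scaling both axes of $a$ by $l$ commutes with convolution against the identity function $\lambda_1$, which is geometrically plausible because $\lambda_1$ is itself scale-invariant in the sense that $\lambda_1(x) = l\lambda_1(x/l)$.

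Concretely, I would start from
\begin{equation*}
(\lambda_1 \otimes b)(x) = \inf_{0\leq s\leq x}\lc (x-s) + l\,a\lp s/l\rp\rc,
\end{equation*}
and then set $u \isdef s/l$, so $s = lu$ and the range $0\leq s\leq x$ becomes $0\leq u\leq x/l$. Substituting gives
\begin{equation*}
(\lambda_1 \otimes b)(x) = \inf_{0\leq u\leq x/l}\lc (x - lu) + l\,a(u)\rc = l \inf_{0\leq u\leq x/l}\lc (x/l - u) + a(u)\rc,
\end{equation*}
where the last equality uses the fact that $l>0$ to factor it out of the infimum. The inner expression is exactly $(\lambda_1 \otimes a)(x/l)$, which yields the claim.

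There is essentially no obstacle: the only subtlety is making sure the variable substitution is legitimate, which it is because $s \mapsto s/l$ is a bijection from $[0,x]$ onto $[0,x/l]$ when $l>0$, and the infimum is preserved under bijective reparameterization. If one wanted to be more structural and avoid the explicit change of variables, one could alternatively invoke Lemma~\ref{lem:inverse2} together with the observation that $\lambda_1 = \lambda_1(\cdot / l)\cdot l$, but the direct calculation above is shorter and self-contained. This lemma will then combine with Lemmas \ref{lem:inverse1} and \ref{lem:inverse2} to invert $\psi_i$ and produce the closed-form expression $\gamma_i = \lambda_1 \otimes U_i$ appearing in Theorem~\ref{Thm:1}.
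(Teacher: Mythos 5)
Your proof is correct and matches the paper's own argument: both expand the min-plus convolution and perform the change of variable $u = s/l$ (the paper writes it as $u = lv$ in the symmetric expansion), then factor out $l>0$ from the infimum. You are also right that the $f$ in the lemma statement is a typo for $a$.
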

\begin{proof}
Do the change of variable $u=lv$ in the expansion $(\lambda_1\otimes b)(x)=
\inf_{0\leq u\leq x}\lp u+b(x-u)\rp$ and obtain $(\lambda_1\otimes b)(x)=
\inf_{0\leq v\leq \frac{x}{l}}\lp lv + a\lp\frac{x}{l}-v\rp\rp=l\lp \lambda_1\otimes a\rp\lp\frac{x}{l}\rp$.  \end{proof}

We can now compute the lower-pseudo inverse of $\psi_i$. First, define the sequence $g$ by $g_k=\frac{1}{\lmin_i}\psi_i\lp k \lmin_i\rp$. As in Lemma~\ref{lem:inverse1}, $g$ can be extended to a piecewise constant function whose lower-pseudo inverse, $g^{\downarrow}$, can be directly computed:
\begin{equation}
g^{\downarrow}(x)=\frac{1}{\lmin_i}\sum_{k=0}^{w_i - 1}\nu_{\lmin_i,L_{\tot}}\left(\lmin_i\lb  x - g_k\rb^+\right)
\label{eq:ginv}
\end{equation}

Second, observe that for all $x \geq 0$, $\psi_i(x) = \psi_i(\lfloor \frac{x}{\lmin_i} \rfloor \lmin_i) + x \mymod \lmin_i$. Define $f$ and $h$ from $g$ as in Lemmas~\ref{lem:inverse1} and~\ref{lem:inverse2} with $l=m=\lmin_i$, so that $h=\psi_i$. Apply  Lemmas~\ref{lem:inverse1} and~\ref{lem:inverse2}
and obtain
$
\psi_i^{\downarrow}(x)=\lmin_i  \lp\lambda_1\otimes g^{\downarrow}\rp(\frac{x}{\lmin_i})
$. 
Now apply Lemma~\ref{lem:inverse3} with $a=g^{\downarrow}$, $l=\lmin_i$, and $b=U_i$ to obtain

\begin{equation}\label{eq:psiinv}
   \psi_i^{\downarrow} =  \lambda_1 \otimes U_i
\end{equation}

\subsection{Proof of Theorem \ref{Thm:1}}
\begin{proof}
Lemma~\ref{lem:TotService} gives, in \eqref{eqn:TotService}, an upper bound on the total amount of service as a function of the service received by the flow of interest.
We invert \eqref{eqn:TotService} by the lower-pseudo inverse technique in \eqref{lem:lsi} and obtain
$		
R_i^*(t) - R_i^*(s) \geq \psi_i^{\downarrow}(\beta(t-s))
$.
The lower-pseudo inverse of $\psi_i$ is given by \eqref{eq:psiinv}, thus
	\begin{equation}
	R_i^*(t) - R_i^*(s)  \geq  \left( \lambda_1 \otimes U_i \right) \left( \beta \left( t - s \right) \right)=\beta_i \left( t - s\right)
	\end{equation}

Lastly, we need to prove that $\beta_i$ is super-additive. This follows from the tightness result in \thref{thm:tight} (the proof of which is independent of rest of this proof). Indeed, the super-additive closure $\bar{\beta}_i$ of $\beta_i$ is also a strict service curve, and $\bar{\beta}_i(t)\geq \beta_f(t)$ for all $t$ \cite[Prop. 5.6]{bouillard_deterministic_2018}). By \thref{thm:tight}, we also have $\bar{\beta}_i(t)\leq \beta_i(t)$ for all $t$, hence $\bar{\beta}_i=\beta_i$.	
\end{proof}

\section{Tightness} \label{sec:tightness}
We first show that the strict service curve we have obtained is the best possible. Proofs of results in this Section are in \iftr Appendix\else\cite{tabatabaee2020interleaved}\fi.
\subsection{Tightness of Strict Service Curve}
\begin{theorem}[Tightness of the IWRR Service Curve]\label{thm:tight}
Consider a weighted round-robin subsystem that uses the IWRR scheduling algorithm, as defined in \sref{sec:alg}. Assume the following system parameters are fixed: the number of input flows, the weight $w_j$ allocated to every flow $j$, the bounds on packet sizes $\lmin_j$ and $\lmax_j$ for every flow $j$, and the strict service curve $\beta$ for the aggregate of all flows. Let $i$ be the index of one of the flows.

Assume that $b_i\in\mathscr{F}$ is a strict service curve for flow $i$ in any system that satisfies the specifications above. Then $b_i\leq \beta_i$ where $\beta_i$ is given in \thref{Thm:1}.
%
%
%
%
\end{theorem}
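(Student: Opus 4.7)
The plan is to exhibit, for every $t_0 \geq 0$, an adversarial trajectory consistent with the fixed system specifications in which some backlogged period $(s,t]$ of flow $i$ with $t - s = t_0$ satisfies $R_i^*(t) - R_i^*(s) = \beta_i(t_0)$. Since $b_i$ must serve as a strict service curve for this specific trajectory, this forces $b_i(t_0) \leq \beta_i(t_0)$ for every $t_0$, proving $b_i \leq \beta_i$.

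The adversarial construction is the natural dual of the upper-bound analysis in Section~\ref{sec:mproofs}: every flow is continuously backlogged from time $-\infty$, every flow $j \neq i$ transmits packets of exactly the maximum size $\lmax_j$, flow $i$ transmits packets of exactly the minimum size $\lmin_i$, and the aggregate server is taken to deliver exactly $\beta$ (a valid realization of the strict service curve). The initial scheduler phase is arranged so that some emission opportunity $\tau_{\sigma(-1)}$ of flow $i$ occurs at cycle $C = w_i$, so that the next emission opportunity for flow $i$, denoted $\tau_{\sigma(0)}$, falls in cycle $C = 1$ of the following round. I then take $s$ to be just after $\tau_{\sigma(-1)}$ and $t = s + t_0$; every such $(s, t]$ is a backlogged period of flow $i$ and the first service of flow $i$ in it starts at cycle $C = 1$, which by property \textbf{(P4)} in the proof of Lemma~\ref{lem:numg} is precisely the worst case for the interference from the other flows.

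Under this trajectory, every inequality in Section~\ref{sec:mproofs} becomes an equality. The bound $\phi_{i,j}(p)$ of Lemma~\ref{lem:numg} on the number of emission opportunities of flow $j \neq i$ in $(s, \tau_{\sigma(p)})$ is attained because $C = 1$ and $s$ is taken right after the previous service of flow $i$; each such opportunity consumes exactly $\lmax_j$ bits; flow $i$ transmits exactly $\lmin_i$ bits per service, so Lemma~\ref{prop:p} is tight; and $\beta(t-s) = \sum_j (R_j^*(t) - R_j^*(s))$ holds by construction, so Lemma~\ref{lem:TotService} holds with equality. Inverting via \eqref{eq:psiinv} then gives $R_i^*(t) - R_i^*(s) = \psi_i^{\downarrow}(\beta(t_0)) = \beta_i(t_0)$.

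The main obstacle is covering all values of $t_0$ continuously rather than only the discrete breakpoints where an integer number of packets has been served. When $\beta(t_0)$ lies on a plateau of $U_i$ (flow $i$ is waiting while another flow transmits), $\beta_i(t_0)$ is an integer multiple of $\lmin_i$, realized at the end of the corresponding waiting interval. When $\beta(t_0)$ falls on a unit-slope segment of $\lambda_1 \otimes U_i$ (flow $i$ is mid-transmission of its next packet), the Lipschitz-continuity of $\beta$ guarantees that the current \texttt{send} has strictly positive duration, during which the fractional progress of the packet is exactly tracked by the $\lambda_1$ smoothing; every intermediate value of $\beta_i$ is thus realized at some $t$ within that transmission. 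Combining the two cases with the boundary case $t_0 = 0$ completes the argument.
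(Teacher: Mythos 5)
Your construction is essentially the one the paper uses (worst-case phasing so that flow $i$'s first service in $(s,t]$ falls at cycle $C=1$, maximum-size packets for the other flows, minimum-size packets for flow $i$, aggregate service equal to $\beta(\cdot-s)$ after $s$, and a plateau/slope case analysis to invert $\psi_i$ exactly). However, there is a genuine gap in the step you dispose of with the parenthetical ``a valid realization of the strict service curve.'' Declaring all flows backlogged since $-\infty$ and then setting $R^*(t)-R^*(s)=\beta(t-s)$ for $t\geq s$ is not automatically admissible: the aggregate backlogged period extends to the left of $s$, so the strict service curve property must also hold for every window $(t_1,t_2]$ with $t_1<s$, and this constrains what the server may do before $s$ jointly with what it does after $s$. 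The danger is real, not cosmetic: if instead you realize the service as $\beta$ measured from the start of the busy period, then over $(s,s+\tau]$ the aggregate receives $\beta(s+\tau-t_{\textrm{start}})-\beta(s-t_{\textrm{start}})$, which for a strictly super-additive $\beta$ (e.g.\ $\beta_{c,T}$ with $T>0$) strictly exceeds $\beta(\tau)$, and flow $i$ then gets more than $\beta_i(\tau)$, destroying the equality you need.

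The paper resolves this by starting the busy period at a finite time $0$ with flow $i$'s queue empty until just after $s$, serving the aggregate at the maximal rate $K$ (the Lipschitz constant of $\beta$) on $[0,s]$, and then explicitly checking the three cases $t_2<s$, $t_1<s\leq t_2$, and $s\leq t_1$, using $\beta(t)\leq Kt$, the Lipschitz bound $\beta(t_2-t_1)-\beta(t_2-s)\leq K(s-t_1)$, and super-additivity of $\beta$, respectively. This is precisely where the Lipschitz hypothesis on $\beta$ earns its keep (the paper flags it as being introduced ``primarily for establishing the tightness result''), whereas you invoke Lipschitz continuity only to argue that a \texttt{send} has positive duration. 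Your argument is repairable by importing this prefix construction and verification, but as written the admissibility of the adversarial trajectory--the crux of any tightness proof--is asserted rather than proved.
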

Interestingly, we obtain a similar result for WRR. Recall that $\beta_i'$ is the strict service curve for flow $i$, described in \eqref{eq:WRR}, which was obtained in \cite[Sec.~8.2.4]{bouillard_deterministic_2018}.
\begin{theorem}[Tightness of the WRR Service Curve]\label{thm:tightWRR}
	Theorem \ref{thm:tight} is also valid if we replace IWRR with WRR. Specifically,  using WRR as a scheduling policy, $\beta_i'$ is the largest possible strict service curve for flow $i$.
\end{theorem}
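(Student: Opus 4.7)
The plan is to prove tightness by constructing, for every fixed $t^{*}>0$, a concrete WRR system meeting the specifications of \thref{thm:tightWRR} together with an arrival and scheduling scenario under which flow $i$ is backlogged on $(0,t^{*}]$ and its cumulative output satisfies $R_i^{*}(t^{*})-R_i^{*}(0)=\beta_i'(t^{*})$. Since any strict service curve $b_i$ valid for \emph{every} such system must obey $R_i^{*}(t^{*})-R_i^{*}(0)\geq b_i(t^{*})$, the pointwise inequality $b_i(t^{*})\leq \beta_i'(t^{*})$ follows for all $t^{*}\geq 0$, giving $b_i\leq \beta_i'$.

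The worst-case scenario is the natural analogue of the one used for IWRR in \thref{thm:tight}. I would choose: (i) every flow $j\neq i$ is permanently backlogged with packets of size exactly $\lmax_j$; (ii) flow $i$ receives, at time $0^{+}$, sufficiently many packets of size exactly $\lmin_i$ to remain backlogged throughout $(0,t^{*}]$; (iii) the aggregate server delivers exactly $\beta(t)$ units of work on $[0,t]$ for all $t\leq t^{*}$, which is consistent with $\beta$ being a strict service curve for the aggregate and is physically realizable thanks to the Lipschitz-continuity assumption on $\beta$; and (iv) the scheduler state at time $0$ is arranged so that it has just finished flow $i$'s turn in a WRR round and therefore proceeds to flow $i+1$ next.

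Under this trajectory, WRR serves the other flows in one full round before revisiting flow $i$: $w_{i+1}$ packets of size $\lmax_{i+1}$, then $w_{i+2}$ of size $\lmax_{i+2}$, and so on, consuming exactly $Q_i=\sum_{j\neq i}w_j\lmax_j$ units of aggregate service before flow $i$ is served again. Flow $i$ then receives $q_i=w_i\lmin_i$ units of service in a burst; the pattern of length $L_{\tot}=q_i+Q_i$ repeats. Hence, as a function of the aggregate work $u$ delivered, flow $i$'s output is precisely the stair-with-unit-slope-smoothing function $(\lambda_1\otimes\nu_{q_i,L_{\tot}})([u-Q_i]^{+})$; substituting $u=\beta(t)$ yields $R_i^{*}(t)-R_i^{*}(0)=\beta_i'(t)$ on $(0,t^{*}]$, exactly as required. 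The $\lambda_1$ convolution in the definition of $\beta_i'$ takes care of instants at which a packet is only partially transmitted, since the aggregate rate is delivered in full to whichever flow is being served.

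The main obstacle is justifying point (iii) for a general super-additive, Lipschitz $\beta$: one must exhibit a concrete enclosing system whose residual aggregate service to the WRR subsystem is \emph{exactly} $\beta$, not merely lower-bounded by it. I would handle this the same way as in the tightness proof for IWRR, namely by embedding the WRR subsystem inside a constant-rate server of large enough capacity and letting higher-priority cross-traffic with a carefully chosen arrival pattern consume the complement of $\beta$; the Lipschitz bound $K$ controls the instantaneous rate of this cross-traffic and makes the construction well-defined. With that in place, the remaining verification that the output of flow $i$ matches $\beta_i'$ is a direct computation from the WRR algorithm and presents no further difficulty.
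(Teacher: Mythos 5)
Your construction is essentially the paper's: \thref{thm:tightWRR} is proved there by re-running the adversarial trajectory of \thref{thm:tight} with $s$ redefined as the instant of the first (skipped) visit to queue $i$ and with $\psi_i',\phi_{i,j}'$ in place of $\psi_i,\phi_{i,j}$; your direct computation of flow $i$'s output as a function of the aggregate work delivered is a transparent equivalent of the paper's pseudo-inverse step, and your reduction from ``exhibit one trajectory achieving $\beta_i'(t^*)$'' to ``$b_i\leq\beta_i'$'' is exactly the intended logic. The one point where you diverge is the step you call the main obstacle. The paper never exhibits an enclosing priority system whose residual service is exactly $\beta$: the specification ``the aggregate receives strict service curve $\beta$'' is a constraint on the output trajectory alone, so it suffices to verify that the constructed $R^*$ satisfies $R^*(t_2)-R^*(t_1)\geq\beta(t_2-t_1)$ on every subinterval of the busy period. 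The paper does this in Step~2 of the proof of \thref{thm:tight} by serving at the Lipschitz rate $K$ before flow $i$ becomes backlogged and at exactly $R^*(s)+\beta(t-s)$ afterwards; Lipschitz continuity handles subintervals meeting $[0,s]$ and super-additivity handles the rest. Your cross-traffic embedding would also work but is fiddlier (packetization and non-preemption artefacts in the enclosing server would have to be controlled) and buys nothing. A second, minor difference: rather than postulating an initial scheduler state ``just past queue $i$,'' the paper realizes the same effect by releasing flow $i$'s burst just after the scheduler visits and skips the still-empty queue $i$, which avoids arguing about the scheduler's position while all queues are empty; with your timing, where everything is backlogged from $0^+$ and the aggregate is served at exactly $\beta(t)$ from time $0$, the consistency check reduces to super-additivity of $\beta$ alone, which is also fine.
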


\subsection{Tightness of Delay Bounds With Constant Packet Sizes}
Having obtained the best-possible strict service curve does not guarantee that the delay bounds derived from it are tight, i.e., are worst-case delays. This is because a service curve is only an abstraction of the system; and we have obtained a strict service curve, and non-strict service curves might provide better results. However, we show that, for flows of packets of constant size, 
we do obtain tight delay bounds. We show that it holds for IWRR and for WRR.

Recall that a delay bound requires the knowledge of an arrival curve $\alpha_i$ for the flow of interest. If this flow generates only packets of length $l$, then $\alpha_i$ can be assumed to be a multiple of $l$ and sub-additive. A delay bound for this flow is then equal to $h(\alpha_i,\beta_i)$ (see \eqref{eq:horiz}).

\begin{theorem}[Tightness of Delay Bound for IWRR with Constant Packet Size] \label{thm:dbIWRR}
Consider a system, as in \thref{thm:tight}, with the additional assumption that, for the flow of interest $i$, $\lmin_i =\lmax_i=l$.

Let $ \alpha_i\in\mathscr{F} $ be a sub-additive function that is an integer multiple of $l$, and assume that flow $i$ has $ \alpha_i$ as arrival curve. The network calculus delay bound is tight, i.e, there exists a trajectory where the delay of one packet of flow $i$ is equal to $h(\alpha_i,\beta_i)$.
\end{theorem}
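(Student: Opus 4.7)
The plan is to construct an explicit adversarial trajectory that realizes the delay bound $h(\alpha_i,\beta_i)$. The strategy is to ensure that every inequality used in the proof of \thref{Thm:1} becomes an equality in the constructed scenario, so that the flow-$i$ output matches $\beta_i$ exactly during a carefully chosen backlogged period; then the greedy arrival pattern from $\alpha_i$ translates horizontal deviation into actual delay.

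First I would identify a time $t^\star \geq 0$ achieving (or, if a sup is not attained, approaching) the supremum in \eqref{eq:horiz}, and set $d^\star = h(\alpha_i,\beta_i)$. Next I would specify the trajectory: (a) flow $i$ is greedy, releasing packets of the constant size $l$ so that its cumulative arrival equals $\alpha_i$ on $[0,t^\star]$, which is consistent with the sub-additive, $l$-multiple hypothesis; (b) every other flow $j\neq i$ has an unbounded backlog of packets, each of maximum size $\lmax_j$ (there is no arrival-curve constraint on these flows in the hypothesis); (c) the aggregate server delivers exactly $\beta$, i.e., $\sum_j R_j^*(t)=\beta(t)$, which is feasible by Lipschitz continuity and super-additivity of $\beta$; (d) the IWRR state at time $0$ is arranged so that the first service opportunity happens at cycle $C=1$, which is the worst starting cycle per property (P4) in the proof of \thref{lem:numg}. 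Then I would argue, by inspecting the chain of inequalities in Lemmas \ref{lem:NVg}--\ref{lem:TotService}, that they all hold with equality in this scenario: (i) at $C=1$ the bound $q'_{i,j}(C,p)\leq \phi_{i,j}(p)$ is tight; (ii) other flows transmit at full rate with packets of size $\lmax_j$, matching $\phi_{i,j}(p)\lmax_j$ exactly; (iii) flow $i$ packets have size $\lmin_i=\lmax_i=l$, so the floor in Lemma~\ref{prop:p} is tight; (iv) the pseudo-inverse inversion in \eqref{eq:psiinv} then yields $R_i^*(t)=\beta_i(t)$ throughout the backlogged period starting at $0$. The last bit of the packet of flow $i$ that arrives at time $t^\star$ is therefore served at time $t^\star+d^\star$, giving delay exactly $d^\star$.

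The main obstacle is the attainment of the supremum in \eqref{eq:horiz}: because $\beta_i$ is piecewise linear with slopes $0$ and $1$ (on the $\beta$-scale) while $\alpha_i$ is a step function with jumps of size $l$, the sup is generally attained at a jump instant $t^\star$ of $\alpha_i$, and one must verify that the corresponding $d^\star$ is actually achieved (not just approached) by choosing a packet boundary of flow $i$ aligned with $t^\star$ and reading off the value of $\beta_i$ just before a plateau ends. A secondary technicality is to exhibit an IWRR execution in which the cycle counter truly equals $1$ at the exact instant flow $i$ first becomes backlogged; this can be arranged by either starting the system at $t=0$ with the scheduler pointer at $C=1$ and an empty flow-$i$ queue until the first arrival, or by inserting an initial idle interval so that the first service for $i$ falls in cycle $1$. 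Once these two points are handled, the rest of the argument reduces to verifying the equalities listed above and applying the formula \eqref{eq:horiz}.
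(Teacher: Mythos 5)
Your proposal is correct and follows essentially the same route as the paper: reuse the worst-case trajectory of \thref{thm:tight} (other flows saturated with maximum-size packets, flow $i$'s first service forced to cycle $C=1$ by an initial phase, aggregate service exactly equal to the guarantee thereafter) and replace flow $i$'s single burst by the greedy input $\alpha_i$, so that the horizontal deviation becomes an actual packet delay. The one point your sketch glosses over is guaranteeing that flow $i$ remains continuously backlogged up to the critical instant; the paper handles this by setting $s'=\inf\{u>0 \mid \alpha_i(u)\leq\beta_i(u)\}$, observing that flow $i$ is backlogged on $(s,s+s']$ so its output equals $\beta_i$ there, and invoking the fact that $h(\alpha_i,\beta_i)$ is already determined by the restriction of both curves to $[0,s']$.
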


\begin{theorem}[Tightness of Delay Bound for WRR with Constant Packet Size]\label{thm:dbWRR}
	Theorem \ref{thm:dbIWRR} is also valid for the WRR policy.
\end{theorem}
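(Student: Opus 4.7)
The plan is to mirror the strategy of \thref{thm:dbIWRR}: exhibit a concrete admissible trajectory in which one packet of flow $i$ realises exactly the delay $h(\alpha_i,\beta_i')$ predicted by the strict service curve. Because WRR is maximally bursty at the granularity of a whole round, the adversarial trajectory is actually conceptually simpler than in IWRR; it is the canonical pattern already implicit in the derivation of $\beta_i'$ in \cite[Sec.~8.2.4]{bouillard_deterministic_2018}. First, I would fix $s^*\geq 0$ realising (or, up to $\epsilon$, approximating) the supremum defining $h(\alpha_i,\beta_i')$ in \eqref{eq:horiz}; the goal then becomes to display a scenario in which the flow-$i$ packet arriving at time $s^*$ experiences virtual delay $d^*=h(\alpha_i,\beta_i')$.

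Next, I would construct the trajectory as follows. At time $0^-$ the WRR arbiter has just completed flow $i$'s turn of the current round, every other flow $j\neq i$ is kept permanently saturated with packets of maximum size $\lmax_j$, and the aggregate service equals $\beta$ exactly (admissible because $\beta$ is a strict service curve and the aggregate is continuously backlogged). Flow $i$ starts a backlogged period at time $0$ with an arrival pattern of size-$l$ packets whose cumulative function matches $\alpha_i$ and in particular places a packet arrival at time $s^*$; this is realisable because $\alpha_i$ is sub-additive and an integer multiple of $l$. Under this construction, during flow $i$'s backlogged period the WRR scheduler alternates between a burst of $Q_i=\sum_{j\neq i}w_j\lmax_j$ bits served to the other flows and a burst of $q_i=w_i l$ bits served to flow $i$, so the departure function of flow $i$ equals
\begin{equation*}
R_i^*(t)=\lp\lambda_1\otimes \nu_{q_i,L_{\tot}}\rp\lp\lb\beta(t)-Q_i\rb^+\rp=\beta_i'(t),
\end{equation*}
the $\lambda_1$-smoothing reflecting the packet-by-packet emptying of flow $i$'s burst at the aggregate rate. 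Combined with $R_i(s^*)=\alpha_i(s^*)$, this gives a flow-$i$ packet whose virtual delay is exactly $d^*$, and \thref{thm:tightWRR} guarantees that $d^*$ cannot be improved, so this is genuinely the worst case.

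The main obstacle is the bookkeeping needed to make the trajectory fully admissible: one must construct a warm-up prefix that realistically leaves the WRR state at time $0^-$ in the described configuration (for instance a long enough prefix in which all other flows are already saturated and flow $i$ receives exactly $w_i$ packets per round), and one must verify carefully that the aggregate strict service curve $\beta$, combined with WRR acting on these arrivals, produces the cumulative departure $\beta_i'(t)$ throughout the entire backlogged period rather than only in the large-$t$ limit. A secondary technical point is the case where the supremum in \eqref{eq:horiz} is not attained: there one would take a sequence of finite $s_n^*$ with $d_n^*\to d^*$, build a trajectory for each $n$, and pass to the limit, using Lipschitz-continuity of $\beta$ to control the approximation. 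Neither issue introduces new ideas beyond those of \thref{thm:dbIWRR}, but they make the proof noticeably longer than its conceptual content.
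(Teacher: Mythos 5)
Your approach is essentially the paper's: the proof of \thref{thm:dbWRR} simply reuses the proof of \thref{thm:dbIWRR} with the trajectory from \thref{thm:tightWRR} (all cross traffic saturated with maximum-size packets, flow $i$ becoming backlogged immediately after its own service opportunity, and flow $i$'s input equal to $\alpha_i$ shifted to the start of its backlogged period), so that $R_i^*$ follows $\beta_i'$ exactly and the delay of some packet equals the horizontal deviation.

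The one step you omit, and which the paper treats explicitly, concerns the end of flow $i$'s backlogged period. With $R_i(t)=\alpha_i(t-s)$ and $R_i^*(t)=\beta_i'(t-s)$, flow $i$ is only guaranteed backlogged on $(s,s+s']$ where $s'=\inf\{u>0 \mid \alpha_i(u)\le\beta_i'(u)\}$; beyond $s+s'$ the output need not follow $\beta_i'$, so the claim that a packet attains delay $h(\alpha_i,\beta_i')$ requires the additional fact, cited by the paper from \cite[Sec.~5.3.3]{bouillard_deterministic_2018}, that the horizontal deviation between $\alpha_i$ and $\beta_i'$ is always realized within $[0,s']$. Two smaller points. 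First, the paper's admissible trajectory does not serve the aggregate at exactly $\beta$ from time $0$: it uses a rate-$K$ prefix (with $K$ the Lipschitz constant of $\beta$) up to flow $i$'s first opportunity, precisely so that the strict-service-curve constraint can be verified over every subinterval of the busy period; this is the ``warm-up bookkeeping'' you defer rather than resolve. Second, your appeal to \thref{thm:tightWRR} to conclude that $d^*$ ``cannot be improved'' is both unnecessary and invalid as stated: tightness of the strict service curve does not by itself imply tightness of the delay bound (the paper makes exactly this remark at the start of \sref{sec:tightness}); exhibiting the trajectory is what proves the delay bound is worst-case.
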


    \section{Numerical Examples}
\label{sec:numerics}
To compare IWRR and WRR worst-case delays, we provide some numerical examples. First, we consider a system of 8 input flows $f_1,\ldots,f_8$ with respective weights $\{ 22,27,28,30,30,34,41,45\} $ and $\lmin= \lmax= l =7119\text{ bit}$. Let the aggregate service, $\beta$, be a constant bit rate of 10 Mb/s.
%
For every flow $i$, we compute the IWRR and WRR strict service curves $\beta_i, \beta'_i$.
Then, for every $i$, we generate $N=1000$
 leaky-bucket arrival curves $\gamma_{r,b_k}$,  $k=1\ldots N$,
with rate $r=0.5$~Mb/s and burst $b_k$ picked uniformly at random in $[1 , 20]$
packets. Then, we use $\alpha^k_i = \lceil \frac{\gamma_{r,b_k}}{l}
\rceil l$ to satisfy the conditions of Theorems \ref{thm:dbIWRR} and \ref{thm:dbWRR}
and to compute $d^k_i=h(\alpha^k_i,\beta_i)$ and $\dot{d}^k_i=h(\alpha^k_i,\beta'_i)$.
Fig.~\ref{fig:exp21} gives the box-and-whisker plots of the $\dot{d}^k_i-d^k_i$ series. The median of WRR delay bounds $\dot{d}^k_i$ are also provided to illustrate the improvement.

Second, we repeated the same study for $ M = 10000$ sets of system parameters.
For each system, we choose the weights of 8 flows by picking them uniformly at random between 10 and 50, and we pick a packet length $l$ uniformly at random between 64 to 1522 bytes. For each experiment, we call flow 1 the flow with the smallest weight, flow 2 with second smallest weight, and so on. As the scale of delay bounds depends on the choices of weights and the packet length, the $\dot{d}^k_i-d^k_i$ series are divided by $\dot{d}^{\bar{m}}_i$, the median of WRR delay bounds for flow $i$. Fig.~\ref{fig:exp21} gives the box-and-whisker plots of the $\frac{\dot{d}^k_i-d^k_i}{\dot{d}^{\bar{m}}_i}$ series. Using IWRR improves worst-case delays, as expected, and the improvement is larger for flows with larger weights.

	\begin{figure} [htbp]
        \scalebox{0.5}{
%
%
\definecolor{mycolor1}{rgb}{0.00000,0.44700,0.74100}%
\definecolor{mycolor2}{rgb}{0.85000,0.32500,0.09800}%
\begin{tikzpicture}

\begin{axis}[%
width=6.028in,
height=4.754in,
at={(0in,0in)},
scale only axis,
unbounded coords=jump,
xmin=0.5,
xmax=8.5,
xtick={1,2,3,4,5,6,7,8},
ymin=-5,
ymax=180.1274,
title ={\Large Absolute Improvement of Delay Bounds of IWRR wrt WRR on one Configuration},
ylabel = {\large$\text{WRR Worst-case Delay} -  \text{IWRR Worst-case Delay} $ (Time ($ms$))},
xlabel = {Flows},
axis background/.style={fill=white},
legend style={at={(axis cs:5,20)},anchor=north west , legend cell align=left, align=left, draw=white!1!black}
]

\draw [-> , thick, color = mycolor1] (3,160) -- (3.1,165)node[pos=0.5, right] {Median of WRR Worst-case Delay};

table[row sep=crcr]{%
	1	173.7036\\
	2	170.1441\\
	3	169.4322\\
	4	168.0084\\
	5	168.0084\\
	6	165.1608\\
	7	160.1775\\
	8	157.3299\\
};

\addplot [loosely dashed, color=mycolor1, mark=diamond*, mark options={solid, mycolor1} ,mark size=3pt]
table[row sep=crcr]{%
	1	173.7036\\
	2	170.1441\\
	3	169.4322\\
	4	168.0084\\
	5	168.0084\\
	6	165.1608\\
	7	160.1775\\
	8	157.3299\\
};
\addlegendentry{Median of WRR Worst-case Delay};

table[row sep=crcr]{%
	1	104.6493\\
	2	126.0063\\
	3	129.5658\\
	4	135.261\\
	5	135.261\\
	6	140.9562\\
	7	145.9395\\
	8	145.9395\\
};


\addplot [color=black, dashed, forget plot]
table[row sep=crcr]{%
	1	84.7161\\
	1	104.6493\\
};
\addplot [color=black, dashed, forget plot]
table[row sep=crcr]{%
	2	106.0731\\
	2	126.0063\\
};
\addplot [color=black, dashed, forget plot]
table[row sep=crcr]{%
	3	109.6326\\
	3	129.5658\\
};
\addplot [color=black, dashed, forget plot]
table[row sep=crcr]{%
	4	115.3278\\
	4	135.261\\
};
\addplot [color=black, dashed, forget plot]
table[row sep=crcr]{%
	5	115.3278\\
	5	135.261\\
};
\addplot [color=black, dashed, forget plot]
table[row sep=crcr]{%
	6	121.023\\
	6	140.9562\\
};
\addplot [color=black, dashed, forget plot]
table[row sep=crcr]{%
	7	126.0063\\
	7	145.9395\\
};
\addplot [color=black, dashed, forget plot]
table[row sep=crcr]{%
	8	126.0063\\
	8	145.9395\\
};
\addplot [color=black, dashed, forget plot]
table[row sep=crcr]{%
	1	9.9666\\
	1	34.8831\\
};
\addplot [color=black, dashed, forget plot]
table[row sep=crcr]{%
	2	31.3236\\
	2	56.2401\\
};
\addplot [color=black, dashed, forget plot]
table[row sep=crcr]{%
	3	34.8831\\
	3	59.7996\\
};
\addplot [color=black, dashed, forget plot]
table[row sep=crcr]{%
	4	40.5783\\
	4	65.4948\\
};
\addplot [color=black, dashed, forget plot]
table[row sep=crcr]{%
	5	40.5783\\
	5	65.4948\\
};
\addplot [color=black, dashed, forget plot]
table[row sep=crcr]{%
	6	46.2735\\
	6	71.19\\
};
\addplot [color=black, dashed, forget plot]
table[row sep=crcr]{%
	7	51.2568\\
	7	76.1733\\
};
\addplot [color=black, dashed, forget plot]
table[row sep=crcr]{%
	8	51.2568\\
	8	76.1733\\
};
\addplot [color=black, forget plot]
table[row sep=crcr]{%
	0.875	104.6493\\
	1.125	104.6493\\
};
\addplot [color=black, forget plot]
table[row sep=crcr]{%
	1.875	126.0063\\
	2.125	126.0063\\
};
\addplot [color=black, forget plot]
table[row sep=crcr]{%
	2.875	129.5658\\
	3.125	129.5658\\
};
\addplot [color=black, forget plot]
table[row sep=crcr]{%
	3.875	135.261\\
	4.125	135.261\\
};
\addplot [color=black, forget plot]
table[row sep=crcr]{%
	4.875	135.261\\
	5.125	135.261\\
};
\addplot [color=black, forget plot]
table[row sep=crcr]{%
	5.875	140.9562\\
	6.125	140.9562\\
};
\addplot [color=black, forget plot]
table[row sep=crcr]{%
	6.875	145.9395\\
	7.125	145.9395\\
};
\addplot [color=black, forget plot]
table[row sep=crcr]{%
	7.875	145.9395\\
	8.125	145.9395\\
};
\addplot [color=black, forget plot]
table[row sep=crcr]{%
	0.875	9.9666\\
	1.125	9.9666\\
};
\addplot [color=black, forget plot]
table[row sep=crcr]{%
	1.875	31.3236\\
	2.125	31.3236\\
};
\addplot [color=black, forget plot]
table[row sep=crcr]{%
	2.875	34.8831\\
	3.125	34.8831\\
};
\addplot [color=black, forget plot]
table[row sep=crcr]{%
	3.875	40.5783\\
	4.125	40.5783\\
};
\addplot [color=black, forget plot]
table[row sep=crcr]{%
	4.875	40.5783\\
	5.125	40.5783\\
};
\addplot [color=black, forget plot]
table[row sep=crcr]{%
	5.875	46.2735\\
	6.125	46.2735\\
};
\addplot [color=black, forget plot]
table[row sep=crcr]{%
	6.875	51.2568\\
	7.125	51.2568\\
};
\addplot [color=black, forget plot]
table[row sep=crcr]{%
	7.875	51.2568\\
	8.125	51.2568\\
};
\addplot [color=blue, forget plot]
table[row sep=crcr]{%
	0.75	34.8831\\
	0.75	84.7161\\
	1.25	84.7161\\
	1.25	34.8831\\
	0.75	34.8831\\
};
\addplot [color=blue, forget plot]
table[row sep=crcr]{%
	1.75	56.2401\\
	1.75	106.0731\\
	2.25	106.0731\\
	2.25	56.2401\\
	1.75	56.2401\\
};
\addplot [color=blue, forget plot]
table[row sep=crcr]{%
	2.75	59.7996\\
	2.75	109.6326\\
	3.25	109.6326\\
	3.25	59.7996\\
	2.75	59.7996\\
};
\addplot [color=blue, forget plot]
table[row sep=crcr]{%
	3.75	65.4948\\
	3.75	115.3278\\
	4.25	115.3278\\
	4.25	65.4948\\
	3.75	65.4948\\
};
\addplot [color=blue, forget plot]
table[row sep=crcr]{%
	4.75	65.4948\\
	4.75	115.3278\\
	5.25	115.3278\\
	5.25	65.4948\\
	4.75	65.4948\\
};
\addplot [color=blue, forget plot]
table[row sep=crcr]{%
	5.75	71.19\\
	5.75	121.023\\
	6.25	121.023\\
	6.25	71.19\\
	5.75	71.19\\
};
\addplot [color=blue, forget plot]
table[row sep=crcr]{%
	6.75	76.1733\\
	6.75	126.0063\\
	7.25	126.0063\\
	7.25	76.1733\\
	6.75	76.1733\\
};
\addplot [color=blue, forget plot]
table[row sep=crcr]{%
	7.75	76.1733\\
	7.75	126.0063\\
	8.25	126.0063\\
	8.25	76.1733\\
	7.75	76.1733\\
};
\addplot [color=red, forget plot]
table[row sep=crcr]{%
	0.75	59.7996\\
	1.25	59.7996\\
};
\addplot [color=red, forget plot]
table[row sep=crcr]{%
	1.75	81.1566\\
	2.25	81.1566\\
};
\addplot [color=red, forget plot]
table[row sep=crcr]{%
	2.75	84.7161\\
	3.25	84.7161\\
};
\addplot [color=red, forget plot]
table[row sep=crcr]{%
	3.75	90.4113\\
	4.25	90.4113\\
};
\addplot [color=red, forget plot]
table[row sep=crcr]{%
	4.75	90.4113\\
	5.25	90.4113\\
};
\addplot [color=red, forget plot]
table[row sep=crcr]{%
	5.75	96.1065\\
	6.25	96.1065\\
};
\addplot [color=red, forget plot]
table[row sep=crcr]{%
	6.75	101.0898\\
	7.25	101.0898\\
};
\addplot [color=red, forget plot]
table[row sep=crcr]{%
	7.75	101.0898\\
	8.25	101.0898\\
};

\end{axis}

\begin{axis}[%
width=7.778in,
height=4.754in,
at={(0in,0in)},
scale only axis,
xmin=0,
xmax=1,
ymin=0,
ymax=1,
axis line style={draw=none},
ticks=none,
axis x line*=bottom,
axis y line*=left,
legend style={legend cell align=left, align=left, draw=white!15!black}
]

\end{axis}
\end{tikzpicture}
	\caption{\sffamily \small Box-and-whisker plots of difference between WRR and IWRR delay bounds with weights  $\{ 22,27,28,30,30,34,41,45\} $ and $l = 7119 $ bit with random arrival curves. Median WRR delay bounds are also provided. }
	\label{fig:exp21}
\end{figure}
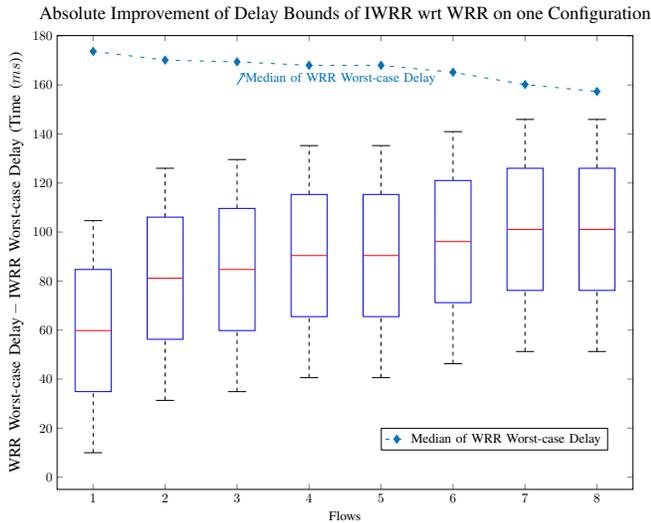
	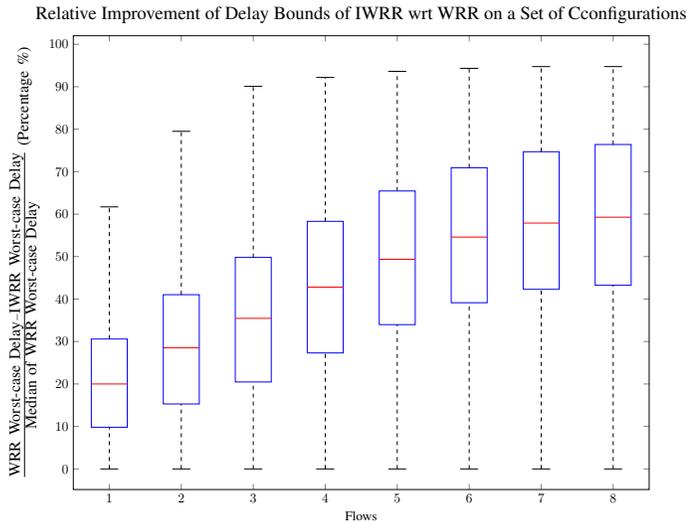
\begin{figure} [htbp]
          \scalebox{0.5}{
%
%
\begin{tikzpicture}

\begin{axis}[%
width=6.028in,
height=4.754in,
at={(0in,0in)},
scale only axis,
xmin=0.5,
xmax=8.5,
xtick={1,2,3,4,5,6,7,8},
ymin=-4.85701193387848,
ymax=101.997250611448,
ylabel = {\large$\frac{\text{WRR Worst-case Delay} -  \text{IWRR Worst-case Delay} }{\text{Median of WRR Worst-case Delay} }$ (Percentage \%)},
xlabel = {Flows},
title ={\Large Relative Improvement of Delay Bounds of IWRR wrt WRR on a Set of Cconfigurations},
axis background/.style={fill=white},
legend style={legend cell align=left, align=left, draw=white!15!black}
]
\addplot [color=black, dashed, forget plot]
table[row sep=crcr]{%
	1	30.5825242718447\\
	1	61.7021276595745\\
};
\addplot [color=black, dashed, forget plot]
table[row sep=crcr]{%
	2	41.025641025641\\
	2	79.4964028776978\\
};
\addplot [color=black, dashed, forget plot]
table[row sep=crcr]{%
	3	49.802371541502\\
	3	90.0793650793651\\
};
\addplot [color=black, dashed, forget plot]
table[row sep=crcr]{%
	4	58.2995951417004\\
	4	92.1824104234528\\
};
\addplot [color=black, dashed, forget plot]
table[row sep=crcr]{%
	5	65.4450261780105\\
	5	93.6170212765958\\
};
\addplot [color=black, dashed, forget plot]
table[row sep=crcr]{%
	6	70.9183673469388\\
	6	94.3262411347518\\
};
\addplot [color=black, dashed, forget plot]
table[row sep=crcr]{%
	7	74.6887966804979\\
	7	94.7368421052632\\
};
\addplot [color=black, dashed, forget plot]
table[row sep=crcr]{%
	8	76.3888888888889\\
	8	94.7368421052632\\
};
\addplot [color=black, dashed, forget plot]
table[row sep=crcr]{%
	1	0\\
	1	9.81308411214954\\
};
\addplot [color=black, dashed, forget plot]
table[row sep=crcr]{%
	2	0\\
	2	15.2892561983471\\
};
\addplot [color=black, dashed, forget plot]
table[row sep=crcr]{%
	3	0\\
	3	20.4724409448819\\
};
\addplot [color=black, dashed, forget plot]
table[row sep=crcr]{%
	4	0\\
	4	27.3076923076923\\
};
\addplot [color=black, dashed, forget plot]
table[row sep=crcr]{%
	5	0\\
	5	33.9712918660287\\
};
\addplot [color=black, dashed, forget plot]
table[row sep=crcr]{%
	6	0\\
	6	39.1304347826087\\
};
\addplot [color=black, dashed, forget plot]
table[row sep=crcr]{%
	7	0\\
	7	42.2885572139304\\
};
\addplot [color=black, dashed, forget plot]
table[row sep=crcr]{%
	8	0\\
	8	43.2432432432432\\
};
\addplot [color=black, forget plot]
table[row sep=crcr]{%
	0.875	61.7021276595745\\
	1.125	61.7021276595745\\
};
\addplot [color=black, forget plot]
table[row sep=crcr]{%
	1.875	79.4964028776978\\
	2.125	79.4964028776978\\
};
\addplot [color=black, forget plot]
table[row sep=crcr]{%
	2.875	90.0793650793651\\
	3.125	90.0793650793651\\
};
\addplot [color=black, forget plot]
table[row sep=crcr]{%
	3.875	92.1824104234528\\
	4.125	92.1824104234528\\
};
\addplot [color=black, forget plot]
table[row sep=crcr]{%
	4.875	93.6170212765958\\
	5.125	93.6170212765958\\
};
\addplot [color=black, forget plot]
table[row sep=crcr]{%
	5.875	94.3262411347518\\
	6.125	94.3262411347518\\
};
\addplot [color=black, forget plot]
table[row sep=crcr]{%
	6.875	94.7368421052632\\
	7.125	94.7368421052632\\
};
\addplot [color=black, forget plot]
table[row sep=crcr]{%
	7.875	94.7368421052632\\
	8.125	94.7368421052632\\
};
\addplot [color=black, forget plot]
table[row sep=crcr]{%
	0.875	0\\
	1.125	0\\
};
\addplot [color=black, forget plot]
table[row sep=crcr]{%
	1.875	0\\
	2.125	0\\
};
\addplot [color=black, forget plot]
table[row sep=crcr]{%
	2.875	0\\
	3.125	0\\
};
\addplot [color=black, forget plot]
table[row sep=crcr]{%
	3.875	0\\
	4.125	0\\
};
\addplot [color=black, forget plot]
table[row sep=crcr]{%
	4.875	0\\
	5.125	0\\
};
\addplot [color=black, forget plot]
table[row sep=crcr]{%
	5.875	0\\
	6.125	0\\
};
\addplot [color=black, forget plot]
table[row sep=crcr]{%
	6.875	0\\
	7.125	0\\
};
\addplot [color=black, forget plot]
table[row sep=crcr]{%
	7.875	0\\
	8.125	0\\
};
\addplot [color=blue, forget plot]
table[row sep=crcr]{%
	0.75	9.81308411214954\\
	0.75	30.5825242718447\\
	1.25	30.5825242718447\\
	1.25	9.81308411214954\\
	0.75	9.81308411214954\\
};
\addplot [color=blue, forget plot]
table[row sep=crcr]{%
	1.75	15.2892561983471\\
	1.75	41.025641025641\\
	2.25	41.025641025641\\
	2.25	15.2892561983471\\
	1.75	15.2892561983471\\
};
\addplot [color=blue, forget plot]
table[row sep=crcr]{%
	2.75	20.4724409448819\\
	2.75	49.802371541502\\
	3.25	49.802371541502\\
	3.25	20.4724409448819\\
	2.75	20.4724409448819\\
};
\addplot [color=blue, forget plot]
table[row sep=crcr]{%
	3.75	27.3076923076923\\
	3.75	58.2995951417004\\
	4.25	58.2995951417004\\
	4.25	27.3076923076923\\
	3.75	27.3076923076923\\
};
\addplot [color=blue, forget plot]
table[row sep=crcr]{%
	4.75	33.9712918660287\\
	4.75	65.4450261780105\\
	5.25	65.4450261780105\\
	5.25	33.9712918660287\\
	4.75	33.9712918660287\\
};
\addplot [color=blue, forget plot]
table[row sep=crcr]{%
	5.75	39.1304347826087\\
	5.75	70.9183673469388\\
	6.25	70.9183673469388\\
	6.25	39.1304347826087\\
	5.75	39.1304347826087\\
};
\addplot [color=blue, forget plot]
table[row sep=crcr]{%
	6.75	42.2885572139304\\
	6.75	74.6887966804979\\
	7.25	74.6887966804979\\
	7.25	42.2885572139304\\
	6.75	42.2885572139304\\
};
\addplot [color=blue, forget plot]
table[row sep=crcr]{%
	7.75	43.2432432432432\\
	7.75	76.3888888888889\\
	8.25	76.3888888888889\\
	8.25	43.2432432432432\\
	7.75	43.2432432432432\\
};
\addplot [color=red, forget plot]
table[row sep=crcr]{%
	0.75	20\\
	1.25	20\\
};
\addplot [color=red, forget plot]
table[row sep=crcr]{%
	1.75	28.5106382978723\\
	2.25	28.5106382978723\\
};
\addplot [color=red, forget plot]
table[row sep=crcr]{%
	2.75	35.4570637119113\\
	3.25	35.4570637119113\\
};
\addplot [color=red, forget plot]
table[row sep=crcr]{%
	3.75	42.8057553956834\\
	4.25	42.8057553956834\\
};
\addplot [color=red, forget plot]
table[row sep=crcr]{%
	4.75	49.3617021276596\\
	5.25	49.3617021276596\\
};
\addplot [color=red, forget plot]
table[row sep=crcr]{%
	5.75	54.5918367346939\\
	6.25	54.5918367346939\\
};
\addplot [color=red, forget plot]
table[row sep=crcr]{%
	6.75	57.8947368421053\\
	7.25	57.8947368421053\\
};
\addplot [color=red, forget plot]
table[row sep=crcr]{%
	7.75	59.2741935483871\\
	8.25	59.2741935483871\\
};
\end{axis}

\begin{axis}[%
width=7.778in,
height=4.754in,
at={(0in,0in)},
scale only axis,
xmin=0,
xmax=1,
ymin=0,
ymax=1,
axis line style={draw=none},
ticks=none,
axis x line*=bottom,
axis y line*=left,
legend style={legend cell align=left, align=left, draw=white!15!black}
]
\end{axis}
\end{tikzpicture}
		\caption{\sffamily \small  Box-and-whisker plots of difference between WRR and IWRR delay bounds normalized to the median of WRR delay bounds, for several systems with weights picked uniformly at random in $[10,50]$, assigned to flow by increasing order, and a packet length picked uniformly at random in $[64 , 1522]$ bytes. 
}
		\label{fig:exp2}
	\end{figure}

	\section{Conclusion}\label{sec:conclusion}
%
%

IWRR is a variant of WRR with the same long-term rate and the same complexity. We have provided a residual strict service curve for IWRR and have showed that it is the best possible one under general assumptions. For flows with packets of constant size, we have showed that the delay bounds derived from it are worst-case. We have proved that IWRR worst-case delay is not greater than WRR and shown on experiments that the gain is significant (20\%-60\%) in practice, which speaks in favour of using IWRR as a replacement to WRR. Our result assumes that the aggregate of all IWRR queues receives a strict service curve guarantee, and we find a strict service curve guarantee for every IWRR queue. Therefore, our results apply to hierarchical schedulers. 
In future research, we plan to improve the results with supplementary hypotheses on flows, considering arrival curves and packet size distribution, with ``packet curves" \cite{bouillard2012packetization}.

	\bibliographystyle{IEEEtran}
	\vspace{-0.05in}
	\bibliography{ref,leb,boyer}
    \iftr
    \appendix

\subsection{Proof of Theorem~\ref{thm:WRRvsIWRR}  } \label{sec:sproof}
The WRR strict service curve \cite[Sec. 8.2.4]{bouillard_deterministic_2018} is defined by $\beta'_i(t)=\gamma'_i(\beta(t))$ with
\begin{align}
\label{eq:gamma'}
\gamma'_i  &=  (\lambda_1 \otimes \nu_{q_i,L_{\tot}})  \left(\lb t - Q_i \rb ^+\right)
\\
\label{eq:psi'}
\psi'_i(x) &\isdef x + \sum_{j,j \neq i} \phi'_{i,j}\left(\left\lfloor \frac{x}{\lmin_i} \right\rfloor\right)\lmax_j
\\
\label{eq:phi'}
\phi'_{i,j}(x) &\isdef \left(1 +  \left\lfloor \frac{x}{w_i} \right\rfloor \right)  w_j
\end{align}
$\gamma'_i $ is the lower-pseudo inverse of $\psi'_i$. We know that for IWRR, $\gamma_i $ is also the lower-pseudo inverse of $\psi_i$ (defined in \eqref{eq:psi}). We first show that $\psi_i \leq \psi_i'$.

It is sufficient to prove that for all $j \neq i$ and for all $k \in \mathbb{N}$, $ \phi_{i,j}(k) \leq \phi_{i,j}'(k)$. 
 From the definition of $\phi_{i,j}$ and as $\min(x \mymod w_i + 1,w_j) \leq \min(w_i,w_j)$, 
	\begin{equation}	
		\phi_{i,j}(x) \leq \left \lfloor \frac{x}{w_i} \right \rfloor w_j + \left[w_j - w_i\right]^+ + \min(w_i ,w_j)
	\end{equation}
Observe that $\left[w_j- w_i\right]^+ + \min(w_i ,w_j) = w_j$. Hence, the right-hand side is $ \phi_{i,j}'(x)$. This shows that 
\begin{equation}\label{eq:WRRless1}
 \psi_i \leq \psi_i'
\end{equation}
In \cite[Sec. 10.1]{liebeherr2017duality}, it is shown that 
\begin{equation} \label{lem:lsi2}
\forall f, g \in \mathscr{F}, f \geq g \Rightarrow f^{\downarrow} \leq g^{\downarrow}
\end{equation}

Apply \eqref{lem:lsi2} to \eqref{eq:WRRless1} to conclude the proof.

\subsection{Proof of Theorem~\ref{theo:sc}  }  \label{sec:rlProof}

\begin{lemma}\label{lem:rl1}
	Consider some integers $w\geq 1$ and $0\leq k\leq w-1$, a finite sequence $g_0, g_1, \ldots, g_{w-1}$ and  a number $a\in\Reals$ that satisfy :
\begin{enumerate}
  \item $\forall \ell\in\Nats \;\mif 0\leq \ell\leq w-2 \mthen g_{\ell+1}-g_{\ell}\geq 1$
  \item $\forall \ell \in \Nats \;\mif 0 \leq \ell \leq w - 3\mthen g_{\ell+2} - g_{\ell+1} \leq g_{\ell+1} - g_{\ell}$
  \item \mif $ k\leq w-2 \mthen a\geq g_{k+1}-g_k\melse a\geq 1$
  \item $\mif k\geq 1\mthen a\leq g_{k} - g_{k-1}$
\end{enumerate}

Define $f:[0,w)\to \Reals$ by $f(x)=g_{\lfloor x \rfloor}+x \mymod 1$ and  $h_: [0,w)\to \Reals$ by $h(x) = a(x - k) + g_k$

Then $h \geq f$.


\end{lemma}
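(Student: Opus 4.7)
The plan is to first notice that conditions (1) and (3) together force $a\geq 1$: if $k\leq w-2$, then $a\geq g_{k+1}-g_k\geq 1$; if $k=w-1$, condition (3) directly gives $a\geq 1$. This observation makes the subsequent analysis on each unit interval essentially one-dimensional, because on $[\ell,\ell+1)$ the function $f$ has slope exactly $1$ while $h$ has slope $a\geq 1$, so $h-f$ is non-decreasing there. Consequently, it suffices to prove the pointwise inequality $h(\ell)\geq g_\ell$ at every integer $\ell\in\{0,1,\ldots,w-1\}$; then, for arbitrary $x\in[0,w)$ with $\ell=\lfloor x\rfloor$, one writes $h(x)-f(x)=\bigl(h(\ell)-g_\ell\bigr)+(a-1)(x-\ell)\geq 0$.

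Next I would establish the pointwise inequality at the integers by splitting at $\ell=k$ (where $h(k)=g_k$ is the hinge). For $\ell\geq k$, telescope $g_\ell-g_k=\sum_{j=k}^{\ell-1}(g_{j+1}-g_j)$; by the concavity condition (2), each summand is at most $g_{k+1}-g_k$, which by condition (3) is at most $a$. Summing gives $g_\ell-g_k\leq a(\ell-k)$, i.e.\ $h(\ell)\geq g_\ell$. Symmetrically, for $\ell\leq k$, telescope $g_k-g_\ell=\sum_{j=\ell}^{k-1}(g_{j+1}-g_j)$; now condition (2) makes each summand at least $g_k-g_{k-1}$, which by condition (4) is at least $a$. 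Summing gives $g_k-g_\ell\geq a(k-\ell)$, i.e.\ again $h(\ell)\geq g_\ell$.

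Combining the two ingredients yields the conclusion $h\geq f$ on $[0,w)$. The main obstacle — if there is one — is simply getting the directions of the inequalities straight: condition (3) is the \emph{upper} bound on the post-$k$ jumps (via the maximality property guaranteed by concavity), whereas condition (4) is the \emph{lower} bound on the pre-$k$ jumps; both roles are forced by concavity of $g$ (condition (2)), which is the real engine of the proof. No further technicalities are expected, so the argument is essentially a two-sided telescoping plus one interpolation step.
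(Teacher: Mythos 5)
Your proof is correct and follows essentially the same route as the paper's: the same telescoping argument establishing $g_k-g_\ell\geq a(k-\ell)$ at the integers (using condition (2) to compare each increment to the one adjacent to $k$, then (3) or (4) to compare that to $a$), combined with the same observation that $a\geq 1$ handles the fractional part via $h(x)-f(x)=(a-1)(x-\lfloor x\rfloor)+\bigl(h(\lfloor x\rfloor)-g_{\lfloor x\rfloor}\bigr)$. The only cosmetic difference is that you phrase the interpolation step as monotonicity of $h-f$ on each unit interval, whereas the paper writes out the two nonnegative terms directly.
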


\begin{proof}
First we show that
\begin{equation}\label{eq-jhkSDA}
  \forall \ell \in \lc 0,\ldots, w-1\rc, g_k -g_{\ell}\geq a (k-\ell)
\end{equation}
\noindent Case 1: $\ell <k$. Then
$
  g_k-g_{\ell}=\sum_{k'=\ell}^{k-1}(g_{k'+1}-g_{k'})
$. By 2) every term in the sum is $\geq g_{k}-g_{k-1}$, by 4) is also $\geq a$ and there are $(k-\ell)$ terms, this shows \eqref{eq-jhkSDA}.

\noindent Case 2: $\ell = k$. Then  \eqref{eq-jhkSDA} is obvious.

\noindent Case 3: $\ell > k$. Then
$
 g_{\ell}-g_k=\sum_{k'=k}^{\ell-1}(g_{k'+1}-g_{k'})
$. By 2) every term in the sum is $\leq g_{k+1}-g_{k}$; note that we must have $k\leq w-2$ thus by 3), every term in the sum is also $\leq a$; also, there are $\ell-k$ terms. Thus $g_{\ell}-g_k \leq a (\ell-k)$, which shows \eqref{eq-jhkSDA} in this case.

We now proceed with the proof of the lemma. Consider some arbitrary $x\in [0,w)$ and let $\ell=\lfloor x\rfloor$. Then
\begin{align}\label{eq-hjgsatq}
  f(x) &=x-\ell + g_{\ell}   \\
  h(x) &= a(x-\ell)+a(\ell-k)+g_k \\
  h(x)-f(x) &= \underbrace{(a-1)(x-\ell)}_{A} +\underbrace{g_k-g_{\ell}-a(k-\ell)}_{B}
\end{align}
Observe that we must have $a\geq 1$: if $k=w-1$ this follows from 3), and if $k\leq w-2$ it follows from 3) and 1); thus $A\geq 0$. Also $B\geq 0$ by \eqref{eq-jhkSDA}.

\end{proof}

\begin{lemma}\label{lem:rl3}

	Let $T>0$ and $P$ a bounded, wide-sense increasing function $[0,T)\to\Reals$. Extend $P$ to a function $\bar{P} \in \mathscr{F}$ by $\forall x\geq 0, \bar{P}(x) = \left\lfloor \frac{x}{T}\right\rfloor P(T^-)+ P(x \mymod T)$ where $P(T^-)\isdef\sup_{0\leq t <T }P(t)$.

 Also, consider an affine function $L$, defined by $L(x) = ax + b$ for some $a \geq \frac{P(T^-)}{T}$ and some $b\in \Reals$.

 If $L(x) \geq P(x)$ for all $x$ in $[0 , T)$ then $L \geq \bar{P}$.
\end{lemma}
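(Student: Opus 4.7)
The plan is to exploit the fact that $\bar P$ is defined by affinely shifted copies of $P$ on successive intervals of length $T$, while $L$ is affine everywhere; a single comparison on $[0,T)$ will then propagate to all of $\mathbb{R}^+$ provided the average slope of each copy does not exceed the slope $a$ of $L$.

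Concretely, I would fix an arbitrary $x \geq 0$ and write it uniquely as $x = kT + y$ with $k = \lfloor x/T \rfloor \in \mathbb{N}$ and $y = x \mymod T \in [0,T)$. By the definition of $\bar P$, this gives $\bar P(x) = k P(T^-) + P(y)$. Using the affine form of $L$, a direct computation yields $L(x) = L(y) + a k T$, so that
\begin{equation}
L(x) - \bar P(x) = \bigl(L(y) - P(y)\bigr) + k\bigl(aT - P(T^-)\bigr).
\end{equation}

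It then suffices to check that both terms on the right are non-negative. The first is $\geq 0$ by the standing hypothesis that $L(y) \geq P(y)$ for all $y \in [0,T)$. The second is $\geq 0$ because $k \geq 0$ and, by assumption, $a \geq P(T^-)/T$, i.e., $aT \geq P(T^-)$. Combining these two facts yields $L(x) \geq \bar P(x)$ for every $x \geq 0$, which is the conclusion.

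There is no substantive obstacle: the only thing to be careful about is handling the value $P(T^-)$ correctly (it is the supremum on $[0,T)$, not $P(T)$ which is not defined), and making sure the decomposition $x = kT + y$ with $y \in [0,T)$ matches both sides of the inequality. Given the clean algebraic decoupling above, the argument is a one-line computation after the substitution.
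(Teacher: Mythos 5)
Your proof is correct and is essentially identical to the paper's: both decompose $x = kT + (x \mymod T)$, use $L(x) = akT + L(x \mymod T)$, apply the hypothesis on $[0,T)$, and conclude via $aT \geq P(T^-)$. No further comment is needed.
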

%

\begin{proof}
	Observe that, for $x\geq 0$, $L(x) = a\left\lfloor \frac{x}{T}\right\rfloor T + L(x \mymod T)$. Now $L(x \mymod T) \geq P(x \mymod T)$ by hypothesis. Thus
  \begin{align}
    L(x) &\geq a\left\lfloor \frac{x}{T}\right\rfloor T + P(x \mymod T)  \\
     &\geq \frac{P(T^-)}{T}\left\lfloor \frac{x}{T}\right\rfloor T + P(x \mymod T)=\bar{P}(x)
  \end{align}

\end{proof}



\begin{lemma}\label{lem:rl4}
	 Let $f \in \mathscr{F}$ and a rate-latency function $\beta_{r,T}$ such that $r>0$, $T > 0$, and $\beta_{r,T} \leq f$. Assume that $\beta_{r,T}(x_1) = f(x_1)$ for $x_1 > T$.

 Then there is no other rate-latency function $\beta_{r',T'}$ (i.e., with $(r',T') \neq (r,T)$) such that $\beta_{r,T}\leq \beta_{r',T'}\leq f$.
\end{lemma}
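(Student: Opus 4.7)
The plan is to argue by contradiction: suppose $\beta_{r',T'}$ is a rate-latency function with $(r',T')\neq(r,T)$ satisfying $\beta_{r,T}\leq\beta_{r',T'}\leq f$. The first step is to exploit the touching condition at $x_1$: since $\beta_{r,T}(x_1)=f(x_1)$, the sandwich $\beta_{r,T}(x_1)\leq\beta_{r',T'}(x_1)\leq f(x_1)$ collapses to equality, so $\beta_{r',T'}(x_1)=r(x_1-T)>0$. In particular, $x_1>T'$ must hold (otherwise $\beta_{r',T'}(x_1)=0$, contradicting positivity), so $\beta_{r',T'}(x_1)=r'(x_1-T')$, yielding the pinning equation $r'(x_1-T')=r(x_1-T)$.

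Next, I would partition according to the sign of $r'-r$. If $r'=r$, the pinning equation forces $T'=T$, contradicting $(r',T')\neq(r,T)$. If $r'>r$, the pinning equation gives $T'>T$, and picking any $t\in(T,T')$ yields $\beta_{r',T'}(t)=0<r(t-T)=\beta_{r,T}(t)$, contradicting $\beta_{r',T'}\geq\beta_{r,T}$. If $r'<r$, the pinning equation gives $T'<T$, and then comparing the two rate-latency functions (which agree at $x_1$ but whose graphs are straight half-lines past their latencies) shows that for every $t>x_1$, $\beta_{r',T'}(t)=r'(t-x_1)+r(x_1-T)<r(t-x_1)+r(x_1-T)=\beta_{r,T}(t)$, again contradicting $\beta_{r',T'}\geq\beta_{r,T}$.

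Geometrically, the argument is simply that two distinct half-line graphs meeting at the same interior point $(x_1,y_1)$ with $y_1>0$ cannot be ordered pointwise on all of $[0,\infty)$; the pinning at $x_1$ forces opposite orderings on the two sides of $x_1$ once the slopes differ, and equal slopes force equal latencies. The only subtle point is to first establish that any admissible $\beta_{r',T'}$ must actually agree with $\beta_{r,T}$ at $x_1$, and to rule out the degenerate case $x_1\leq T'$; both follow directly from $\beta_{r,T}(x_1)=f(x_1)>0$ together with $r>0$ and $x_1>T$. No obstacle is expected; the entire argument is a short case analysis on the sign of $r'-r$ using the pinning identity at $x_1$.
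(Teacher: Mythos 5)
Your proof is correct and follows essentially the same route as the paper's: both exploit the sandwich at $x_1$ to force $\beta_{r',T'}(x_1)=\beta_{r,T}(x_1)=f(x_1)>0$, deduce $x_1>T'$, and then compare the two half-lines through that common point. The paper organizes the comparison as $T'\leq T$, then $r'\leq r$, then $r'\geq r$, whereas you do a clean case split on the sign of $r'-r$ using the pinning identity $r'(x_1-T')=r(x_1-T)$; the content is the same.
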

\begin{proof}

	Assume that $\beta_{r,T}\leq \beta_{r',T'}\leq f$. The proof consists in showing that $(r,T)= (r',T')$.

First, we know that $\beta_{r,T}(x_1) = f(x_1)$ and $x_1 > T$; thus $r(x_1-T)=f(x_1)$ and
\begin{equation}\label{eq-jhkdsa}
  T=x_1 - \frac{f(x_1)}{r}
\end{equation}

Second, observe that we must have $T'\leq T$, since otherwise $\beta_{r,T}(T')>0= \beta_{r',T'}(T')$.

Third, observe that $f(x_1)=\beta_{r,T}(x_1)\leq \beta_{r',T'}(x_1)\leq f(x_1)$ thus $\beta_{r',T'}(x_1) = f(x_1)$ and
\begin{equation}\label{eq-jhkdsb}
T'=x_1 - \frac{f(x_1)}{r'}
\end{equation}

Combining the last three paragraphs, it follows $x_1 - \frac{f(x_1)}{r'} \leq x_1 - \frac{f(x_1)}{r}$, i.e., $r' \leq r$. Also, we must have $r'\geq r$, since otherwise $\forall x > x_0, ~\beta_{r,T}(x) > \beta_{r',T'}(x)$ with $x_0 = \frac{rT - r'T'}{r - r'}$. Thus, $r' = r$,  and it follows from~\eqref{eq-jhkdsa} and~\eqref{eq-jhkdsb} that $T' = T$.

\end{proof}

Now we proceed with the proof of Theorem~\ref{theo:sc}.

1) We first show that $r_k\leq r_{k+1}$ for $k=0...w_i-2$.
Define sequence $g$ by $g_k=\frac{1}{\lmin_i}\psi_i\lp k \lmin_i\rp$ for $k=0\ldots w_i-1$. By definition, we have  $g_{k+1} - g_k = $
\begin{equation}
			 1 + \frac{1}{\lmin_i}\sum_{j,j \neq i} \left( \min(k + 2 ,  w_j) - \min(k + 1 ,  w_j) \right) l^{max}_j
\end{equation}
Observe that $ \left( \min(k + 2 ,  w_j) - \min(k + 1 ,  w_j) \right)$ is equal to $1$ if $k + 1 <w_j$, and equal to $0$ otherwise. Thus, $g_{k+2} - g_{k+1}\leq g_{k+1} - g_k$ for $0 \leq k <w_i - 2$, which shows that  $r_k\leq r_{k+1}$ for $k=0...w_i-3$. Also, observe that $g_{k+1} - g_k \geq 1$, i.e., $r_k \leq 1$, for $0 \leq k \leq w_i - 2$. Hence, $r_{w_i - 2} \leq r_{w_i - 1}$.

2) Let $r\in [r^*_0,r^*_{k^*}]$ and let $T(r)$ be the value of $T$ defined in the Theorem, namely,
 $T(r)\isdef \psi_i(k\lmin_i) - \frac{k\lmin_i}{r}$, where $k$ is defined by $r^*_{k-1}\leq r <r^*_{k}$ if $r\in[r^*_0,r^*_{k^*})$ and $k=k^*$ if $r= r^*_{k^*}$. We now show that $\beta_{r, T(r)}\leq \gamma_i$.

 We consider two cases: $r^*_0 \leq r < r^*_{k^*} $ or $r = r^*_{k^*}$. For the former case, for any $r$, apply Lemma~\ref{lem:rl1} with $w = w_i$, $g$ as defined in 1), $k$ as defined in the paragraph above, and $a=\frac{1}{r}$. As by construction $\frac{1}{r_k}  < a \leq \frac{1}{r_{k-1}}$ and $\frac{1}{r_{k-1}} = g_{k } - g_{k-1}$, 3) and 4) are satisfied. For the latter case, apply again Lemma~\ref{lem:rl1} with the same $g$ and $w=w_i$ but now with $k = k^*$ and $a= \frac{1}{r} = \frac{1}{r^*_{k^*}}$. By construction, we have $ \frac{1}{r^*_{k^*}} \geq \frac{1}{r_{k^*}} = g_{k^*+1} - g_{k^*}$ and $ \frac{1}{r^*_{k^*}} \leq \frac{1}{r_{k^*-1}}=g_{k^*} - g_{k^*-1} $. Thus, conditions 3) and 4) of Lemma \ref{lem:rl1} are satisfied.  Let $f$ be the corresponding function $f$ in Lemma~\ref{lem:rl1}, i.e., $f(x)=g_{\lfloor x \rfloor}+x \mymod 1$ for $0\leq x<w_i$. Note that for both cases $f$ is the same. Also, let $f_{r}$ be the corresponding function $h$ in Lemma~\ref{lem:rl1}, i.e., $f_{r}(x) = \frac{1}{r}(x - k) + g_k$ for $0\leq x<w_i$. By Lemma~\ref{lem:rl1}, $f_{r}\geq f$.


 Observe that $f(w_i^-) = \frac{1}{\lmin_i} \lp \psi_i((w_i - 1) \lmin_i) + 1  \rp = \frac{1}{\lmin_i} \lp w_i \lmin_i + \sum_{j , j \neq i} w_j \lmax_j \rp = \frac{L_{\tot}}{\lmin_i} = \frac{w_i}{r^*}$. Then, as  $f_{r}(x)  \geq f(x)$ for $0 \leq x <w_i$ and $\frac{1}{r}  \geq  \frac{1}{r^*} =  \frac{f(w_i^-)}{w_i}$, we can apply Lemma \ref{lem:rl3} with $P = f$ and $L = f_{r}$. It gives us $\bar{f}$ defined by $\bar{f}(x) = \lfloor \frac{x}{w_i} \rfloor \frac{L_{\tot} }{\lmin_i}+ f\lp x \mymod w_i \rp  $ such that  $f_{r}  \geq \bar{f}$.

 Then, by using \eqref{lem:lsi2}, $f_{r}^{\downarrow}  \leq \bar{f}^{\downarrow}$. Also, as $\bar{f}^{\downarrow} \geq 0$, we have $\lb f_{r}^{\downarrow} \rb^+ \leq \bar{f}^{\downarrow}$. Note that for an increasing, linear function $L$, defined by $\forall x \geq0, L(x) = ax + b$ with some $a >0$ and $b > 0$, we have $\lb L^{\downarrow} \rb ^+ =\beta_{\frac{1}{a},b}$; and observe that $f_{r}(x) =  \frac{x}{r} + g_k - \frac{k}{r} = \frac{x}{r} + \frac{T(r)}{\lmin_i}$. Hence, $\lb f_{r}^{\downarrow} \rb ^+  = \beta_{r,\frac{T(r)}{\lmin_i}}$.

 Until now, we have shown that $\beta_{r,\frac{T(r)}{\lmin_i}} \leq \bar{f}^{\downarrow}$. Lastly, we show that $\lmin_i\bar{f}^{\downarrow}(\frac{x}{\lmin_i}) = \gamma_i(x)$ and $\lmin_i  \beta_{r,\frac{T(r)}{\lmin_i}} (\frac{x}{\lmin_i}) = \beta_{r,T(r)}(x) $. Observe that $\lmin_i\bar{f}(\frac{x}{\lmin_i}) = \lfloor \frac{x}{w_i\lmin_i} \rfloor L_{\tot} + \psi_i( ( \frac{x}{\lmin_i}  \mymod w_i ) \lmin_i)$. Also, $\psi_i(x) = \lfloor \frac{x}{w_i\lmin_i} \rfloor L_{\tot} +\psi_i(x \mymod w_i\lmin_i)$. Hence, we have $ \psi_i(x) = \lmin_i\bar{f}(\frac{x}{\lmin_i}) $. By using Lemma \ref{lem:inverse2} with $l = m = \lmin_i$, $\lmin_i\bar{f}^{\downarrow}(\frac{x}{\lmin_i}) = \psi^{\downarrow}_i(x) = \gamma_i(x)$. Also, observe that $\lmin_i  \beta_{r,\frac{T(r)}{\lmin_i}} (\frac{x}{\lmin_i}) = \beta_{r,T(r)}(x) $.

  Combine the last paragraphs to conclude that $\beta_{r, T(r)}\leq \gamma_i$ for all $r$ in $[r^*_0,r^*_{k^*}]$.

3) We now show that for any $r\in [r^*_0,r^*_{k^*}]$, $\beta_{r,T(r)}$ is a non-dominated lower-bound of $\gamma_i$. Let $r'\geq 0, T'\geq 0$ such that $\beta_{r,T(r)}\leq \beta_{r',T'}\leq \gamma_i$. We have to show that $r'=r$ and $T'=T(r)$.

First, if $r$ in $[r^*_0,r^*_{k^*})$, observe that $\beta_{r,T(r)}(x) =  \gamma_i(x)$ for $x = \psi_i(k\lmin_i) > \psi_i(k\lmin_i)  - \frac{k\lmin_i}{r} = T(r)$. Then, apply Lemma \ref{lem:rl4} with $\beta_{r,T} = \beta_{r,T(r)}$ and $f =  \gamma_i$ to conclude that $r'=r$ and $T'=T(r)$.

Second, if $r = r^*_{k^*}$, observe that $\beta_{r,T(r)}(x) =  \gamma_i(x)$ for $x = \psi_i(k^*\lmin_i) + L_{\tot} > T(r)$. Again, apply Lemma \ref{lem:rl4} with $\beta_{r,T} = \beta_{r,T(r)}$ and $f =  \gamma_i$ to conclude that $r'=r$ and $T'=T(r)$.

4) We now show that there is no other non-dominated rate-latency function, $\beta_{r',T'}$, that is upper bounded by $\gamma_i$.

First, we must have $T' \geq T(r^*_0)$. This is because $\gamma_i(x) = 0$ for $x \leq \psi_i(0) = T(r^*_0)$.

Second, we must have $r' \geq r^*_0$. Otherwise, we have $r' < r^*_0$ and we previously showed $T' \geq T(r^*_0)$. Thus, $\beta_{r',T'} \leq \beta_{r^*_0 , T(r^*_0)} \leq \gamma_i$, which is in contradiction with $\beta_{r',T'} $ being non-dominated.

Third, we must have $r' \leq  r^*_{k^*}$. We proceed to prove this by contradiction. If $T' \geq T(r^*_{k^*})$ and $r' >  r^*_{k^*}$, observe that $\beta_{r',T'}(x_0) = \beta_{r_{k^*}^*,T(r^*_{k^*})}(x_0)$ with $x_0 = \frac{r'T'+ r_{k^*}^* T(r^*_{k^*}) }{r' - r_{k^*}^* }$ and $\forall x, x > x_0 \Rightarrow \beta_{r',T'}(x) > \beta_{r_{k^*}^*,T(r^*_{k^*})}(x)$; for any arbitrary, non-negative integer $k$, let $x_k$ be defined by $x_k = \psi_i(k^*\lmin_i) + kL_{\tot}$. Then observe that $\beta_{r_{k^*}^*,T(r^*_{k^*})}(x_k) = \gamma_i(x_k)$. Choose some $k$ large enough such that $x_k > x_0$; then, $\beta_{r',T'}(x_{k}) > \beta_{r_{k^*}^*,T(r^*_{k^*})}(x_{k}) = \gamma_i(x_{k})$, which is in contradiction with $\beta_{r',T'} \leq \gamma_i$.
Also,  if $T' < T(r^*_{k^*})$ and $r' > r^*_{k^*}$, we have  $\forall x, x > T' \Rightarrow \beta_{r',T'}(x) > \beta_{r_{k^*}^*,T(r^*_{k^*})}(x)$. Choose some $k$ large enough such that $x_k > T'$; then, $\beta_{r',T'}(x_{k}) > \beta_{r_{k^*}^*,T(r^*_{k^*})}(x_{k}) = \gamma_i(x_{k})$, which is in contradiction with $\beta_{r',T'} \leq \gamma_i$.  Therefore,  $r' >  r^*_{k^*}$ is in contradiction with $\beta_{r',T'} \leq \gamma_i$.


Therefore, we must have $r'$ in $[r^*_0 , r^*_{k^*}]$. We now show that $T'= T(r')$. Because otherwise, if $T' < T(r') $, we have $\beta_{r',T(r')} \leq \beta_{r',T'}  \leq \gamma_i$ which is in contradiction with $\beta_{r',T(r')} $ being a non-dominated rate latency function. Also, if $T' > T(r')$, we have $\beta_{r',T'} \leq \beta_{r',T(r')}   \leq \gamma_i$, which is in contradiction with $\beta_{r',T'} $ being non-dominated.

\subsection{Tightness Proofs}\label{app:1}
We use the following Lemma about the lower pseudo-inverse technique.

\begin{lemma}\label{lsi:lem}
For a right-continuous function $f$ in $\mathscr{F}$ and $x,y$ in $\mathbb{R^+}$, $f^{\downarrow}\left(y \right) = x$ if and only if
$
		f(x) \geq y $ and there exists some $\varepsilon>0$ such that $ \forall x' \in (x-\varepsilon, x), f(x') < y
$.
\end{lemma}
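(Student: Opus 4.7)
My plan is to prove both directions of the equivalence using the two equal expressions for $f^{\downarrow}(y)$ recalled in~\eqref{def:lsi}, namely $f^{\downarrow}(y) = \inf\{z\geq 0\mid f(z)\geq y\} = \sup\{z\geq 0\mid f(z)<y\}$, together with the assumption that $f\in\mathscr{F}$ is both wide-sense increasing and right-continuous.

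For the forward direction, I would assume $f^{\downarrow}(y)=x$. Since $x$ is the infimum of $\{z\mid f(z)\geq y\}$, there exists a sequence $z_n\downarrow x$ with $f(z_n)\geq y$, and right-continuity at $x$ then yields $f(x)=\lim_n f(z_n)\geq y$. For the required strict inequality on $(x-\varepsilon,x)$, I would observe that if some $z<x$ satisfied $f(z)\geq y$, then wide-sense monotonicity would place every point of $[z,\infty)$ into $\{z'\mid f(z')\geq y\}$, pushing its infimum to at most $z<x$ and contradicting $f^{\downarrow}(y)=x$. Hence in fact $f(z)<y$ for every $0\leq z<x$, so the $\varepsilon$-neighbourhood condition holds for any $\varepsilon>0$.

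For the reverse direction, I would assume $f(x)\geq y$ together with $f(x')<y$ for all $x'\in(x-\varepsilon,x)$. The first assumption places $x$ in $\{z\mid f(z)\geq y\}$, so $f^{\downarrow}(y)\leq x$. To obtain the opposite inequality, I would argue by contradiction: suppose $f^{\downarrow}(y)=x^*<x$; by the infimum characterisation there exists $z_n\downarrow x^*$ with $f(z_n)\geq y$, and wide-sense monotonicity then gives $f(z')\geq y$ for every $z'>x^*$. Picking any $z'\in(\max(x^*,x-\varepsilon),x)$ contradicts the hypothesis $f(x')<y$ on $(x-\varepsilon,x)$, so $f^{\downarrow}(y)\geq x$.

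The one delicate point I anticipate is the role of right-continuity in the forward direction: without it one could have $f(x)<y$ while $f(z)\geq y$ for every $z>x$, so that $f^{\downarrow}(y)=x$ but $f(x)<y$ fails. All remaining steps rely solely on wide-sense monotonicity and on the two characterisations of $f^{\downarrow}$ in~\eqref{def:lsi}.
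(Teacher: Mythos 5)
Your proof is correct and follows essentially the same route as the paper: right-continuity along a sequence decreasing to the infimum gives $f(x)\geq y$, membership of $x$ in $\{z\mid f(z)\geq y\}$ gives $f^{\downarrow}(y)\leq x$, and the neighbourhood condition is handled via the two characterisations in~\eqref{def:lsi}. The only (harmless) differences are that in the forward direction you invoke monotonicity where the paper simply notes that no point below $\inf S$ can lie in $S$, and in the reverse direction you argue by contradiction from the infimum characterisation where the paper directly uses $\sup\{z\mid f(z)<y\}\geq x$.
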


\begin{proof}
	
$\Rightarrow$: 

Let $S=\lc x', f(x')\geq y\rc$ so that $x=\inf S$ \eqref{def:lsi}. From the definition of an $\inf$, there exists a sequence $x_n$ such that $x_n\in S$ for all $n$, $x_n\geq x$, and $\lim_{n\to \infty}x_n=x$. Since $f$ is right-continuous, $\lim_{n\to \infty}f(x_n)=f(x)$, which shows that $f(x)\geq y$. Also, again by definition of an $\inf$, any $x'<x$ does not belong to $S$, i.e. $\forall x' < x, f(x') < y$.
	
$\Leftarrow$:

By the first part of the hypothesis, $x\in S$ therefore $x\geq \inf S = f^{\downarrow}\left(y \right)$.
Let also $S'=\lc x', f(x')< y\rc$ so that $f^{\downarrow}\left(y \right)=  \sup S'$ \eqref{def:lsi}. By the second part of the hypothesis, $S'$ contains the interval $(x-\varepsilon,x)$ hence $\sup S'\geq x$, which shows that $f^{\downarrow}\left(y \right)\geq x$. Combining the two shows that $f^{\downarrow}\left(y \right)= x$.

\end{proof}

\begin{proof}[Proof of Theorem \ref{thm:tight}]
%
We prove that, for any value of the system parameters, for any $\tau >0$, and for any flow $i$, there exists one trajectory of a system such that
\begin{equation}\label{eqn;tight}
 \begin{aligned}
	&\exists s \geq 0, \, (s,s+\tau] \, \text{ is backlogged for flow $i$}\\ &\text{and }R_i^*(s + \tau) - R_i^*(s) = \beta_i(\tau)
	\end{aligned}
	\end{equation}

	\textbf{Step 1: Constructing the Trajectory}

    1) Flows are labeled in order of weights, i.e., $w_j\leq w_{j+1}$.
    	
	2) At time $0$, the input of every queue $j \neq i$ is a burst of size 
 $\left \lceil \frac{ \beta(\tau)}{\lmax_j} \right  \rceil \lmax_j + w_j\lmax_j $ .
	
	3) Every flow, $j\neq i$, is packetized according to its maximum packet size, $\lmax_j $.

    4) The output of the system is at rate $K$ (the Lipschitz constant of $\beta$) from time $0$ to times $s$, which is defined as the time at which queue $i$ is visited at cycle $w_i$ in the first round, namely
    \begin{equation}\label{eq-kjlsda}
      s=\frac{1}{K}\sum_{j , j\neq i} \min \lp w_i-1 , w_j \rp\lmax_j
    \end{equation}
    It follows that  \begin{equation}\label{eq-asfdlsda}\forall t\in[0,s], R^*(t)=Kt\end{equation}
	
	5) The input of queue $i$ starts just after time $s$, with a burst of size $\left \lceil \frac{ \beta(\tau)}{\lmin_i}\right  \rceil \lmin_i$.
	
	6) Flow $i$ is packetized according to its minimum packet size, $\lmin_i$.
	
	7) After time $s$, the output of the system is equal to the guaranteed service; by 2) and 5), the busy period lasts for at least $\tau$, i.e.,
\begin{equation}\label{eq-jhgdsf}
  \forall t\in [s,s+\tau], R^*(t)=R^*(s)+\beta(t-s)
\end{equation}  In particular,
	\begin{equation}
\label{eqn:3}
	R^*(s + \tau) - R^*(s) = \beta(\tau)
	\end{equation}
	
	If we apply $\psi^{\downarrow}_i$ to both sides of \eqref{eqn:3}, the right-hand side is equal to $\beta_i(\tau)$. Thereby, we should prove:
	\begin{equation} \label{eqn:2}
 \psi^{\downarrow}_i\left(R^*(s + \tau) - R^*(s) \right) =\ R_i^*(s +  \tau) - R_i^*(s)
	\end{equation}
Let $y=R^*(s+\tau)-R^*(s)$ and $x=R_i^*(s+\tau)-R_i^*(s)$. Our goal is now to prove that
	\begin{equation} \label{eqn:2a}
 \psi^{\downarrow}_i\left(y\right) =x
	\end{equation}

	From 5), we know that the first packet of flow $i$ is served at the first cycle of a round ($C = 1$ in Algorithm \ref{alg:IWRR}). Thus, applying Lemma \ref{lem:NVg2} and (P5) in Lemma \ref{lem:numg}, the number of services to each flow $j$ is equal to $ \phi_{i,j}(p)$. From 2), flow $j$ sends packets with the maximum length. Thus:
	\begin{equation} \label{eqn:1}
	\sum_{j,j \neq i} R_j^*(s + \tau_{\sigma(p)}) - R_j^*(s)   =  \sum_{j,j \neq i} \phi_{i,j}(p)\lmax_j
	\end{equation}

Now there are two cases for $s + \tau$ (\ref{ssec:t}).

	\textbf{Case 1:} $s + \tau < \tau_{\sigma(p)}$
In this case the scheduler is not serving flow $i$ in $[\tau_{\sigma(p)} , s + \tau]$ and $x=p\lmin_i$. Thus $R_i^*(s + \tau) = R_i^*(\tau_{\sigma(p)})$. 
It follows that
\begin{equation}
	\begin{aligned}
	&\psi_i(x) = x + \underbrace{\sum_{j,j\neq i} \phi_{i,j}(\lfloor \frac{x}{\lmin_i} \rfloor)\lmax_j    }_{\sum_{j,j\neq i} R_j^*(\tau_{\sigma(p)}) - R_j^*(s)}\\
	& y = x + \sum_{j,j\neq i} R_j^*(s + \tau) - R_j^*(s)
	\end{aligned}
	\end{equation}
and thus
	\begin{equation}
\label{eq:jhkds}
\psi_i(x) \geq y
	\end{equation}
Let $x-\lmin_i<x'<x$; flow $i$'s output becomes equal to $x'$ during the emission of packet $p-1$ thus
	\begin{equation}
	\psi_i(x') = x' + \sum_{j,j\neq i} R_j^*(\tau_{\sigma(p-1)}) - R_j^*(s)\\
	\end{equation}
Hence
\begin{equation}
\label{eq:jhkdt}
\forall x'\in (x-\lmin_i,x),\psi_i(x') < y
\end{equation}
Combining \eqref{eq:jhkds} and \eqref{eq:jhkdt} with Lemma~\ref{lsi:lem} shows \eqref{eqn:2a}.

	\textbf{Case 2:} $s + \tau \geq \tau_{\sigma(p)}$
In this case the scheduler is serving flow $i$ in $[\tau_{\sigma(p)} , s + \tau]$.		For every other flow $j$, we have $R_j^*(s + \tau) = R_j^*( \tau_{\sigma(p)})$. Hence,
\begin{equation}\label{eq:kjlasdf}
\psi_i(x)=
 R_i^*(s +  \tau) - R_i^*(s) +  \sum_{j,j \neq i} \phi_{i,j}(p)\lmax_j=y
\end{equation}
As with case 1, for any $x'\in ((p-1)\lmin_i,x)$, we have $\psi_i(x)<y$, which shows \eqref{eqn:2a}.

%
%
This shows that \eqref{eqn;tight} holds. It remains to show that the system constraints are satisfied.

	\textbf{Step 2: Verifying the Trajectory}
%
We need to verify that the service offered to the aggregate satisfies the strict service curve constraint.
Our trajectory has one busy period, starting at time $0$ and ending at some time $T_{\max}\geq \tau$. We need to verify that
\begin{equation}\label{eq-ljkhsdf}
  \forall t_1, t_2\in [0,T_{\max}] \mwith t_1<t_2, R^*(t_2)-R^*(t_1)\geq \beta(t_2-t_1)
\end{equation}

\textbf{Case 1:} $t_2<s$

Then $R^*(t_2)-R^*(t_1)=K(t_2-t_1)$. Observe that, by the Lipschitz continuity condition on $\beta$, for all $t\geq 0$, $\beta(t)=\beta(t)-\beta(0)=\beta(t)\leq Kt$ thus $K(t_2-t_1)\geq \beta(t_2-t_1)$.

\textbf{Case 2:} $t_1<s\leq t_2$
Then $R^*(t_2)-R^*(t_1)=\beta(t_2-s)+K(s-t_1)$. By the Lipschitz continuity condition:

\begin{equation}\label{eq-jhkasd}
  \beta(t_2-t_1)-\beta(t_2-s)\leq K(s-t_1)
\end{equation}
thus $R^*(t_2)-R^*(t_1)\geq\beta(t_2-t_1)$.

\textbf{Case 3:} $s\leq t_1< t_2$
Then $R^*(t_2)-R^*(t_1)=\beta(t_2)-\beta(t_1)\geq \beta(t_2-t_1)$ because $\beta$ is super-additive.

\end{proof}

\begin{proof}[Proof of Theorem \ref{thm:tightWRR}]
	The proof is very similar to the proof of Theorem \ref{thm:tight}. The necessary changes in the proof are the following:
	
	1) $s$ is the time of the first visit to flow $i$.
	
	2) Instead of functions $\psi_i$ and $\phi_{i,j}$, use
        functions $\psi_i'$ and $\phi_{i,j}'$, defined in \eqref{eq:psi'} and \eqref{eq:phi'}.
\end{proof}

\begin{proof}[Proof of Theorem \ref{thm:dbIWRR}]
	The proof contains the following steps:
	
	1)	Consider the same trajectory as in the proof of Theorem \ref{thm:tight}, yet with one difference: the input of flow $i$ is $R_i(t) = \alpha_i(t-s)$ for $t \geq s$ and zero before $s$. Observer that as $\alpha_i$ is sub-additive, $\forall t_1,t_2:~ t_2 \geq t_1 \geq s \Rightarrow R_i(t_2) -   R_i(t_1) =  \alpha_i(t_2) - \alpha_i(t_1) \leq \alpha_i(t_2 - t_1)$.
	
	2) Define $s' = \inf\{ u > 0 |  \alpha_i(u) \leq \beta_i(u)\}$. This is the first time after zero that the service curve meets the arrival curve.  Note that $s'$ can be infinite as well.
	
	3) Then, it is guaranteed that flow $i$ is backlogged in $(s , s + s']$. Therefore, using \eqref{eqn;tight}, we have $R_i^*(t) = \beta_i(t-s)$ for $t \geq s$ and zero before $s$.
	
	4) Combining 1 and 3, the horizontal deviation of $R_i$ and $R^*_i$ in $(s , s + s']$ is equal to the  horizontal deviation of $\alpha_i$ and $\beta_i$ in $[0 ,s']$.
	
	4) Using \cite[Sec. 5.3.3]{bouillard_deterministic_2018}, the horizontal deviation of $\alpha_i$ and $\beta_i$ can be restricted to $[0 ,s']$.
	
	Thereby, we find a valid trajectory (verified in the proof of Theorem \ref{thm:tight}) where the delay bound is achieved.
\end{proof}

\begin{proof}[Proof of Theorem \ref{thm:dbWRR}]
	The same proof of Theorem \ref{thm:dbIWRR} works here as well. However, we use the trajectory defined in the proof of Theorem \ref{thm:tightWRR}.
\end{proof}

	\fi

%
%

\end{document}
	\pagenumbering{arabic}